\title{Reach together: How populations win repeated games}
\author{Nathalie Bertrand}{Univ Rennes, Inria, CNRS, IRISA,
  France}{nathalie.bertrand@inria.fr}{http://orcid.org/0000-0002-9957-5394}{}
\author{Patricia Bouyer}{Université Paris-Saclay, CNRS, ENS
 Paris-Saclay, LMF, 91190 Gif-sur-Yvette,
 France}{bouyer@lmf.cnrs.fr}{https://orcid.org/0000-0002-2823-0911}{}
\author{Luc Lapointe}{Université Paris-Saclay, CNRS, ENS Paris-Saclay,
 LMF, 91190 Gif-sur-Yvette,
 France}{luc.lapointe@lmf.cnrs.fr}{}{}
\author{Corto Mascle}{MPI-SWS, Kaiserslautern,
 Germany}{cmascle@mpi-sws.org}{}{}
\authorrunning{N. Bertrand, P. Bouyer, L. Lapointe and C. Mascle}
\keywords{Concurrent games, Parameterized systems, Automata, Semigroups} 
\tikzset{
	>=stealth',
	-={stealth',ultra thick,scale=3} 
	node distance=1cm, 
	every state/.style={thick}, 
	initial text=$ $, 
}
\tikzset{AUT style/.style={>=angle 60,thick,node distance=1.2, initial text= ,every edge/.append,every state/.style={fill=gray!10!white,minimum size=10,inner sep=2}}}
\definecolor{Green2}{HTML}{44CC44}
\definecolor{Red2}{HTML}{FF0400}
\definecolor{Blue2}{HTML}{87CEFA}
\newcommand{\good}{\colorbox{white}{\color{Green2}\faCheck}}
\newcommand{\bad}{\colorbox{white}{\color{Red2}\faTimes}}
\newcommand{\neutral}{\colorbox{white}{\color{Blue2}\faMinus}}
\newcommand{\etiquette}[1]{\lambda(#1)}
\begin{document}

\maketitle

\begin{abstract}  
  In repeated games, players choose actions concurrently at each
  step. We consider a parameterized setting of repeated games in which
  the players form a population of an arbitrary size. Their utility
  functions encode a reachability objective. The problem is whether
  there exists a uniform coalition strategy for the players so that
  they are sure to win independently of the population size. We use
  algebraic tools to show that the problem can be solved in polynomial
  space. First we exhibit a finite semigroup whose elements summarize
  strategies over a finite interval of population sizes. Then, we
  characterize the existence of winning strategies by the existence of
  particular elements in this semigroup. Finally, we provide a
  matching complexity lower bound, to conclude that repeated
  population games with reachability objectives are \PSPACE-complete.
\end{abstract}

\newpage

\section{Introduction}
\subparagraph*{Games} Game theory is a field that introduces and
studies models of interactions between several agents, also called
players~\cite{OR-book94}. The players are most often supposed to be
rational: their goal is to act so as to maximize their utility. One of
the simplest models of games is the one of two-player zero-sum
games. In these games, as the name suggests, two players interact and
have opposite objectives, which is represented by the fact that for
every play, the sum of their payoff is null. An example of such games
is the one of rock-paper-scissors, in which one can choose the utility
to be $1$ in case of a win, $0$ for a draw and $-1$ for a loss. It
falls in the class of concurrent games, in which players make their
decision simultaneously~\cite{dAH00,dAHK07}. The strategies of the
players can either be pure, that is, each player chooses action
deterministically, or randomized, that is, using a form or another of
randomness (see~\cite{MR-ic24} for a taxonomy of variants in
randomization). A natural solution concept for two-player zero-sum
games is the one of \emph{value}, that is the greatest payoff
Player~$1$ can guarantee independently of the choices of their
opponent. Assuming randomized strategies, in the above
rock-paper-scissors game, it happens to match the lowest payoff
Player~$2$ can ensure, whatever the choices of Player~$1$; the
generalization of this result to stochastic games is known as
Blackwell determinacy~\cite{martin98}.

To model situations with more than two entities, one uses multiplayer
games. In multi-player games, each player has their own utility
function that they try to maximize. In contrast to two-player zero-sum
games, their objectives are not necessarily conflicting, so that the
value is not relevant. Several solution concepts have been defined for
multiplayer concurrent games, such as winning (pure)
strategies~\cite{dAHK07}, rationality of players~\cite{FKL10}), or
Nash equilibria~\cite{Nash50}.
In words, a Nash equilibrium is a contract between the players in the
form of a strategy profile, \emph{i.e.} a strategy for each
player. This profile is such that no individual player has an
incentive to unilaterally deviate from the agreed equilibrium. The
existence of Nash equilibria and their computation is an important
research question for various models of multiplayer
games~\cite{UW11a,BBMU15}.

\subparagraph*{Repeated games}
A repeated game consists of repetitions
of a so-called stage game. The stage game can for instance be in
normal form: each player chooses an action from a finite set and a
matrix gives the payoffs of each player according to their combined
choices. Repeated games are then repetitions of this normal-form game,
and players can take into account the past actions and payoffs to
decide on their current action. On the rock-paper-scissors example,
the stage game is one round of rock-paper-scissors, and the repeated
game is the repetition ad infinitum of the stage game.  The set of
payoffs that are achieved by equilibria can be characterized, be it
among randomized strategies~\cite{Sorin86} or pure
strategies~\cite{Tomala98}. 
Extensions of the framework have been considered, for instance to
incorporate partial observation by the players of the played actions
and stage payoffs at each round~\cite{GRSVZ-corr14}.

Big Match is another classical example of repeated
game~\cite{BF-ams63}, that allows us to introduce the notion of
absorbing payoffs. At every round, Player~$2$ chooses a letter, either
$a$ or $b$, and Player~$1$ tries to guess their choice. If Player~$1$
is correct, they earn a stage payoff of $1$, otherwise the utility is
$0$. This continues until Player~$1$ predicts $a$: from then on, both
players must stick to their decision at that round. In case of a match
at that round, Player~$1$ will earn $1$ in each following round,
otherwise they will forever have payoff $0$. The payoffs are said to
be \emph{absorbing} when the decision of Player~$1$ is $a$. The value
of Big Match is the greatest mean-payoff Player~$1$ can ensure on the
sequence of stage utilities. It happens here also to match the least
mean-payoff on the sequence of utilities Player~$2$ can ensure.

\subparagraph*{Games with arbitrarily many players} In recent years,
several models of games with an arbitrary number of players have been
introduced, to model situations in which the number of agents is
unknown, or concisely represent infinitely many games instances, one
for each number of players. Concurrent parameterized
games~\cite{BBM-fsttcs19,BBM-fsttcs20} and population control
problems~\cite{BDGGG-lmcs19,CFO-lmcs21,GMT-corr25} both
fall in this category. Different to the normal-form games and repeated
games, these are played on a graph --or an automaton-- and one or
several pebbles move along transitions during a play. For population
control, the goal is to drive the pebbles to a common goal,
independently of the population size, \emph{i.e.} for every possible
number of pebbles. In concurrent parameterized games, two main
problems have been considered, depending on whether the players
collaborate to achieve a safety goal, or one player plays against the
coalition of others to achieve a reachability objective. For both
problems, the population size is not known in advance, and players
should have uniform strategies that do not depend on their
number. Population games and parameterized concurrent games both fit
in the framework of parameterized verification, where the parameter is
the number of entities, here the number of players.

\subparagraph*{Contributions} In this paper, we introduce a model that
reconciles multiplayer repeated games and games with arbitrarily many
players. We consider repeated games for populations, with absorbing
payoffs to encode reachability-like objectives. In the stage game,
rather than a single utility function, games for populations are
defined by infinitely many utility functions, one for each population
size, \emph{i.e.} number of players. Towards the definition of a
decision problem, this sequence of utility functions is given by a
deterministic automaton equipped with transition labels that reflect
the utility. The joint moves of the (unboundedly many) players form an
infinite word, that one reads in the automaton, and the successive
labels of transitions determine the payoffs for every population size
one after another.

Rather than adversarial settings or Nash equilibria, we focus here on
coalition strategies with which all players aim at achieving a goal
collectively. The problem we are interested in is, given a labelled
deterministic automaton, whether there exists a uniform coalition
strategy for the population to guarantee, for every population size, a
maximal payoff of $1$, representing that a target has been reached.

We use algebraic techniques to prove that this problem can be solved
in \PSPACE. Recall that a population move is encoded by an infinite
path in the labelled automaton. We define a structure of semigroup in
which the elements, called \emph{frontiers}, summarize the effect of a
sequences of finite portions of such paths. The semigroup internal
operation intuitively corresponds to concatenation of these path
portions. We then define a morphism $\psi$ from $\{0,1\}$ to the set
of frontiers that allows one to describe slices of winning
paths. Thanks to Ramsey's theorem on infinite graphs with coloured
edges, the positive instances of the repeated game for populations can
be characterized using this morphism, in our main technical result:
%
%

%
%
%
%
\begin{theorem}
  \label{th:characterization-positive-instances-with-frontiers}
  There is a winning population strategy if and only if there exist
  two frontiers \(f,g \in \psi(0^+)\) such that:
  \begin{enumerate}
    \begin{minipage}[t]{.2\linewidth}\item \(f\) is initial,
    \end{minipage}
    \begin{minipage}[t]{.25\linewidth}
    \item \(g\) is \(\omega\)-iterable,
    \end{minipage}
    \begin{minipage}[t]{.2\linewidth}
    \item \(f * g \to f\),
    \end{minipage}
    \begin{minipage}[t]{.2\linewidth}
    \item \(g * g \to g\).
    \end{minipage}
  \end{enumerate}
\end{theorem}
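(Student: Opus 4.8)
The plan is to establish the two directions of the equivalence separately, with the heavy lifting on the "only if" side via a Ramsey-type argument.

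For the "if" direction, suppose we are given frontiers $f,g \in \psi(0^+)$ satisfying the four conditions. The idea is to build an explicit winning population strategy by concatenation. Since $f \in \psi(0^+)$ and $g \in \psi(0^+)$, each is realized by a finite slice of play (over some finite interval of population sizes) in which all players behave well on that interval; these witness finite strategy fragments. Condition (1), $f$ initial, means that $f$ can serve as a legitimate starting slice. Condition (4), $g * g \to g$, is what lets us repeat the $g$-fragment indefinitely while staying compatible with the $\omega$-iterability from condition (2): iterating $g$ infinitely many times produces, for every population size, a play that reaches the target (this is exactly what "$\omega$-iterable" should encode). Condition (3), $f * g \to f$, is the glue that lets us prepend $f$ to the infinite sequence of $g$'s and keep the result a valid, winning slice for every population size. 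So the concrete strategy is "play the $f$-fragment, then play the $g$-fragment forever," and one checks that for each fixed population size $n$, this induces a play whose sequence of transition labels eventually reaches $1$. The main thing to verify here is that the frontier semigroup operation $*$ and the relation $\to$ really do track concatenation of path portions faithfully, so that the algebraic conditions transfer to genuine behaviour of the induced infinite word in the automaton — but this should follow directly from the definition of frontiers and $\psi$ established earlier.

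For the "only if" direction, assume a winning population strategy $\sigma$ exists. The plan is to cut the (family of) winning plays into infinitely many consecutive finite slices and summarize each slice by a frontier in $\psi(0^+)$ (here $0$ labels the "not yet reached target / still good" transitions, so winning plays are built from $0$-slices in the relevant sense, with the eventual $1$ absorbed into $\omega$-iterability). This gives an infinite sequence of cut points $t_0 < t_1 < t_2 < \cdots$, and for each pair $i < j$ a frontier $f_{i,j} \in \psi(0^+)$ summarizing the play between $t_i$ and $t_j$, with the composition property $f_{i,j} * f_{j,k} \to f_{i,k}$ coming from concatenation of slices. Colour each edge $\{i,j\}$ of the complete infinite graph on $\mathbb{N}$ by $f_{i,j}$: since the frontier semigroup is finite, there are finitely many colours, so by Ramsey's theorem there is an infinite monochromatic set $i_0 < i_1 < i_2 < \cdots$, all edges coloured by a single frontier $g$. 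Taking $f$ to be the frontier of the initial segment up to $i_0$ (with the appropriate absorption of the prefix), conditions (1)–(4) then fall out: $f$ is initial because it is a genuine prefix of a play; $g$ is $\omega$-iterable because the infinite concatenation of the monochromatic slices is exactly a family of winning plays reaching the target for every population size; $f * g \to f$ and $g * g \to g$ because the monochromaticity makes concatenating another $g$-slice (or the $f$-prefix with a $g$-slice) land back on the same summary frontier.

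**The hard part will be** the bookkeeping in the "only if" direction to make the slicing uniform across all population sizes at once: a single strategy $\sigma$ must induce, for every $n$, a winning play, and the cut points $t_i$ as well as the frontier summaries $f_{i,j}$ must be chosen so that one frontier legitimately captures the behaviour over an entire interval of population sizes simultaneously — this is precisely what the notion of frontier (as opposed to a per-$n$ object) was designed for, so the subtlety is in correctly invoking the earlier structural lemmas about frontiers rather than in any new combinatorics. A secondary technical point is ensuring that the eventual occurrence of the winning label $1$ in each play is compatible with the "$0^+$" requirement on $f$ and $g$; this is handled by loading the "target is reached" information into the definitions of \emph{initial} and \emph{$\omega$-iterable} rather than into the slices themselves, so that $f,g$ can both live in $\psi(0^+)$ while the overall play is still winning. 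I would isolate that compatibility as a short preliminary observation before running the Ramsey argument.
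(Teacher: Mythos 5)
Your overall plan coincides with the paper's: for the ``if'' direction, realize $f$ and $g$ as slices of play and patch them together; for the ``only if'' direction, cut the columns of a winning strategy at infinitely many points, summarize the blocks by frontiers, and apply Ramsey's theorem. However, the Ramsey argument as you set it up has a genuine hole concerning condition (3). You colour the edge $\{i,j\}$ only by the frontier summarizing the columns between the cut points $t_i$ and $t_j$. Monochromaticity then yields $g \star g \to g$, but it gives no control whatsoever over the \emph{prefix} frontiers: writing $f^{(k)}$ for the frontier summarizing columns $0$ to $t_k$, composition gives $f^{(i_0)} \star g \to f^{(i_1)}$, and nothing in your colouring forces $f^{(i_1)} = f^{(i_0)}$, so $f \star g \to f$ does not follow. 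Your sentence claiming that monochromaticity makes ``concatenating the $f$-prefix with a $g$-slice land back on the same summary frontier'' is precisely the unproved step. The paper avoids this by colouring each edge $\{k,\ell\}$ with the \emph{pair} consisting of the prefix frontier up to column $k$ and the slice frontier from $k$ to $\ell$; the colour set is still finite, and Ramsey then makes the prefix frontier constant along the clique, which is exactly what gives $f \star g \to f$.

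A second gap is the treatment of $\omega$-iterability. You assert that $g$ is $\omega$-iterable ``because the infinite concatenation of the monochromatic slices is exactly a family of winning plays,'' but $\omega$-iterability is a structural property of the frontier itself (a block of pairs $(x_i,x_i)$ followed by pairs $(y_i,z_i)$ with each $z_i$ among the previously listed states), and it does not follow merely from $g$ being the common summary of infinitely many consecutive slices; the paper needs a separate argument on the grid of states $q_{i,k}$ (first showing each right component of $g$ occurs among earlier left components, then a contradiction argument using an unbounded sequence of row indices) to establish it. Symmetrically, in the ``if'' direction your description ``play the $f$-fragment, then play the $g$-fragment forever'' hides where $\omega$-iterability is actually used: a population strategy is a sequence of \emph{infinite} moves (rows), while $f$ and $g$ abstract finite blocks of columns, so each slice must be completed into infinite words; this is done by iterating the diagonal $(x,x)$-part of the $g$-slice to the right and patching the pieces together (Lemma~\ref{lem:compose-slices}), not by a plain horizontal concatenation of fragments. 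Both gaps are repairable along the paper's lines, but as written the argument does not go through.
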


The precise notions and notations of this result will be made clear
later in the paper. Intuitively, the condition that $f$ is initial
ensures that it encodes a prefix of all moves, and the
$\omega$-iterability of $g$ guarantees further portions of all moves
follow a regular pattern, and the two other conditions impose that $f$
and $g$ combine nicely. Since the number of frontiers is at most
exponential in the size of the labelled deterministic automaton, this
characterization allows one to decide the problem in polynomial space.
  %

Finally, we provide a matching complexity lower-bound to conclude that
repeated games for populations with reachability objectives are
\PSPACE-complete.

\section{Repeated games for populations}
\label{sec:setting}

\subsection{Notations}
We write $\NN_{>0}$ for the set of positive integers. Given
$i \le j \in \NN_{>0}$, $\llbracket i;j \rrbracket$ denotes the set
$\set{i, i+1, \ldots, j}$.  We write $\Sigma^*$ for the set of finite
words over an alphabet $\Sigma$, and $\Sigma^\omega$ for the set of
infinite words.  The length of a finite word $w$ is denoted $|w|$, and
if $w$ is infinite, we write $|w|= \infty$.  Given a finite or
infinite word \(w\) and \(n \in \NN_{>0}\), we denote by \(w[n]\) its
\(n\)-th letter, \(w[:n]\) its prefix of length \(n\), \(w[n:]\) the
suffix obtained by removing its first \(n-1\) letters, and more
generally $w[n:m]$ for its factor starting at position $n$ up to
position $m$. If $(i_j)_{j \in J}$ is a sequence of indices, we write
$w[(i_j)_{j \in J}]$ for the sequence $(w[i_j])_{j \in J}$.
In particular, $w[n,m]$ is the pair $(w[n], w[m])$.

A deterministic finite automaton is a tuple
$\mathcal{A} = (Q, q_{\mathrm{init}}, \Sigma, \delta)$ with $Q$ a set
of states, $q_{\mathrm{init}}$ an initial state, $\Sigma$ the alphabet
and $\delta : Q \times \Sigma \to Q$ a partial function. We assume the
reader is familiar with basic automata theory. 

\subsection{Description of the setting}
\label{sec:setting}
\subparagraph*{Repeated games for populations} Extending the framework
of~\cite{GRSVZ-corr14} to multiple players, we define the notion of
repeated games with $N$ players.
\begin{definition}
  Let $\Sigma$ be an alphabet and $N \in \NN_{>0}$. A \emph{repeated
    game with $N$ players and absorbance} is given by a stage utility
  function $u: \Sigma^N \to [-1;1] \cup ([-1;1] \times
  \{\bullet\})$. 
\end{definition}
The notation $\bullet$ is for absorbing payoffs, and an element
$(p,\bullet) \in [-1;1] \times \{\bullet\}$ will simply be noted
$p^\bullet$ in the following.
In such a game, a \emph{move} is a word $\move \in \Sigma^N$; for
every $n \le N$, the $n$-th letter of $\move$ corresponds to the
action played by player $n$.  A \emph{coalition strategy} is an
infinite sequence of moves $\sigma = (\move_i)_{i \in \NN_{>0}}$: it
generates the infinite sequence of payoffs
$(u(\move_i))_{i \in \NN_{>0}}$, which aggregates (i) to $p_{i_0}$ if
$u(\move_{i_0}) = p_{i_0}^\bullet$ with $i_0$ the least index $i$
such that $u(\move_i) \in [-1;1] \times \{\bullet\}$; or (ii) to
$\limsup_{n \to +\infty} \frac{\sum_{i=1}^n u(\move_i)}{n+1}$
otherwise. In the special case where the codomain of $u$ is
$\{-1^\bullet,0,1^\bullet\}$, then the aggregate payoff is either $-1$
or $1$ (case (i)), or $0$ (case (ii)), and we speak of
\emph{reachability payoff}.

  

Focusing on reachability payoffs, we now define repeated games for
populations as an extension of the previous model to arbitrarily many
players.
\begin{definition}
  \label{def:param-repeated-game}
  Let $\Sigma$ be an alphabet.  A \emph{reachability repeated game for
    populations}, or simply \emph{population game}, is given by, for every
  $N \in \NN_{>0}$ a utility function
  $u_N: \Sigma^N \to \{-1^\bullet,0,1^\bullet\}$.
\end{definition}
Since the number of players is arbitrary in repeated games for
populations, a \emph{move} is an infinite word
$\move \in \Sigma^\omega$. It is played uniformly on all games with
a fixed number of players, resulting in one stage utility
$u_N(\move[:N]) \in \{-1^\bullet,0,1^\bullet\}$ for every possible
number $N$ of players.
A \emph{population strategy} $\sigma$ is an infinite sequence of moves
$(\move_i)_{i \in \NN_{>0}}$. Such a strategy yields an aggregate
payoff vector $P_\sigma = (p_{\sigma,N})_{N \in \NN_{>0}}$, where
$p_{\sigma,N}$ is the aggregate payoff of the repeated game with stage
utility $u_N$, corresponding to $N$ players. The population strategy
$\sigma$ is \emph{winning} whenever for every $N \in \NN_{>0}$,
$p_{\sigma,N} = 1$.



\subparagraph*{Automata-defined population games} We consider reachability
repeated games for populations in which the family of utility
functions is presented as a pair $\langle \calA,\lambda \rangle$
formed of a \emph{deterministic finite-state automaton}
$\calA = (Q, q_{\text{init}}, \Sigma, \delta)$ together with a function
$\lambda : Q \times \Sigma \to \{\good,\neutral, \bad\}$
labeling 
transitions as ``good'' ($\good$), ``neutral'' ($\neutral$) or ``bad''
($\bad$). Given a finite word $u \in \Sigma^+$, we write
$\etiquette{q, u} \in \{\good,\neutral,\bad\}$ for the label of the
last transition taken in $\langle \calA,\lambda \rangle$ upon reading $u$ from state $q$ in
$\calA$, when defined; if $q = q_{\mathrm{init}}$ is the initial state
of $\calA$, we simply write $\etiquette{u}$. Such a \emph{labelled
  automaton} $\langle \calA,\lambda \rangle$ thus classifies non-empty
words between $\good$, $\neutral$ and $\bad$.

Note that $\langle \calA,\lambda \rangle$ defines all utility functions at once, as we shall
see now.  The stage payoff of move $\move \in \Sigma^\omega$ in the
game with $N$ players is defined as 
\[ u_N(\move[:N]) =
  \left\{\begin{array}{@{}cl}
           -1^\bullet & \text{if}\ \etiquette{\move[:N]} = \bad\\
    0 & \text{if}\ \etiquette{\move[:N]} = \neutral\\
    \phantom{-}1^\bullet & \text{if}\ \etiquette{\move[:N]} = \good
           \end{array}\right.
\]
               We write $\game{\langle \calA,\lambda \rangle}$ for the
               induced repeated game for populations.


\subparagraph*{Problem definition} We are interested in winning
population strategies, that is, in population strategies that achieve
an aggregate payoff of $1$ for every possible number of players.
The population strategy $\sigma = (\move_i)_{i \in \NN_{>0}}$ is
\emph{winning} if for every $N\in \NN_{>0}$ there exists an integer
$i(N)$ such that $\etiquette{\move_{i(N)}[:N]} = \good$ and for every
$i<i(N)$, $\etiquette{\move_i[:N]} = \neutral$. The terminology of
``reachability'' should now become clear: $\good$ (resp. $\bad$) are
labels of accepting (resp. rejecting) transitions, while $\neutral$
means undecided. To win for some population size $N$, label $\good$
needs to appear exactly at that position, while no $\bad$ has been
previously encountered at that position, thus resembling a constrained
reachability property. The terminology will further be justified in
Section~\ref{subsec:relationship}, when we will give the relationship
with parameterized concurrent games.

More permissively, the population strategy
$\sigma = (\move_i)_{i \in \mathbb{N}}$ is \emph{non-losing} if for
every $N\in \NN_{>0}$, whenever there exists an integer $i(N)$ such
that $\etiquette{\move_{i(N)}[:N]} = \bad$, then there exists
$i<i(N)$, $\etiquette{\move_i[:N]} = \good$. Note that a winning
strategy is non-losing; however a non-losing strategy might be non
winning, if for some $N \in\nats_{>0}$, for every $i$,
$\etiquette{\move_i[:N]} = \neutral$.


We are now in a position to formally state our decision problem:

\medskip\noindent\fbox{\begin{minipage}{.99\linewidth}
    \underline{{\reachtogether}} \\
    {\bf Input}: A deterministic finite automaton $\calA$
    with transition labeling $\lambda :\delta \to \{\good,\neutral,\bad\}$.\\
    {\bf Question}: Does there exist a winning population strategy in
    $\game{\langle \calA,\lambda\rangle}$?
  \end{minipage}}\medskip

Our main contribution is to establish that the above decision problem
is \PSPACE-complete.

\begin{remark}
  Since $\calA$ is deterministic, choosing a word in $\Sigma^*$ is the
  same as choosing a path in $\calA$. A winning strategy is thus
  simply a sequence of infinite paths in $\calA$ such that, for all
  $N \in \NN_{>0}$ there is a path whose $N$-th transition is labelled
  $\good$ and all previous paths have their $N$-th transition is
  labelled by $\neutral$.  As a consequence, in all the forthcoming
  figures, we omit the letters in the automata, and only indicate the
  labels.
\end{remark}

\section{Playing repeated population games}
\subsection{Examples}
To get familiar with the model of
reachability repeated games for populations, we provide a couple of
examples. 

\begin{example}
  \label{ex1}
  Let us examine the labelled automaton in
  Figure~\ref{fig:ex1bis}. A winning population strategy is
  depicted on Figure~\ref{fig:possible-visited-statesbis}. It first
  selects the path that loops $0$ times on the initial state, then the
  path that loops once, then twice, and so on. This permits to win for
  every population size one by one in increasing order.
\end{example}

\begin{figure}[H]
  \hfill
	     \begin{subfigure}{.3\textwidth}
               \begin{tikzpicture}[AUT style]
	
	\node[state, initial] (A) {};
	\node[state, right= of A] (B)  {};
	
	\node[yshift=-4mm] (At) at (A) {$q_0$};
	\node[yshift=-4mm] (Bt) at (B)  {$q_1$};
	
	\draw[->, loop above, color=Red2, very thick, looseness=20] (A) edge[above] node {$\bad$} (A);
	\draw[->] (A) edge[above,  very thick, color=Green2] node {$\good$} (B);
	\draw[->, loop right] (B) edge[right, very thick, color=Blue2] node {$\neutral$} (B);
	
	\node[below=of B, yshift=-5mm] (C) {}; 
\end{tikzpicture}
               \caption{Labelled automaton.}\label{fig:ex1bis}
             \end{subfigure}
             \hfill
	     \begin{subfigure}{.5\textwidth}
		       \begin{tikzpicture}
	\foreach \x in {0,...,5}
	{
		\foreach \y  in {0,...,4}
		{
			\pgfmathparse{\x+\y-5 < 0 ? 1 : 0}
			\ifthenelse {\pgfmathresult>0}{
				\node (\x\y) at (1.1*\x,0.6*\y) {$q_0$};
				\ifthenelse{\x>0}
				{
					\pgfmathparse{int(\x-1)}
					\draw[->, color=Red2] (\pgfmathresult\y) edge node[above, yshift=-3pt,xshift=-1pt] {\tiny$\bad$} (\x\y);	
				}{};
			}{};
			
			\pgfmathparse{\x+\y-5 == 0 ? 1 : 0}
			\ifthenelse {\pgfmathresult>0}{
				\node (\x\y) at (1.1*\x,0.6*\y) {$q_1$};
				\pgfmathparse{int(\x-1)}
				\draw[->, color=Green2] (\pgfmathresult\y)  edge node[above, yshift=-3pt,xshift=-1pt] {\tiny$\good$} (\x\y);
			}{};	
		
			\pgfmathparse{\x+\y-5 > 0 ? 1 : 0}
			\ifthenelse {\pgfmathresult>0}{
				\node (\x\y) at (1.1*\x,0.6*\y) {$q_1$};
				\pgfmathparse{int(\x-1)}
				\draw[->, color=Blue2] (\pgfmathresult\y)  edge node[above, yshift=-3pt,xshift=-1pt] {\tiny$\neutral$} (\x\y);
			}{};
		}
	}

	\node[below= of 32, yshift=-3mm, xshift=-3mm] (vd) {\LARGE $\vdots$};
	
	\foreach \y  in {0,...,4}
	{
		\node (dots\y) at (6.2, 0.6*\y) {\Large $\cdots$};
	}
      \end{tikzpicture}
		       \caption{Winning population strategy.}\label{fig:possible-visited-statesbis}
		     \end{subfigure}
                     \caption{Example of a labelled deterministic
                       automaton $\langle \calA,\lambda \rangle$
                       (left) for which there is a winning population
                       strategy in
                       $\calG(\langle \calA,\lambda \rangle)$
                       (right).}
                     \label{fig:Ex1}
             \end{figure}

\begin{example}
  We now consider the example in Figure~\ref{fig:ex3}, with parameters
  $n_1, n_2 \in \NN_{>0}$.  The upper branch permits to win for every
  population size $k{+}1$ with $k \equiv 0 \mod n_1$.  The lower
  branch permits to win for a number of players that is $n_2$ less
  than a population size that has already won. Combining those two
  types of moves, one can win for every $N$ of the form
  $1+ \alpha_1 n_1 - \alpha_2 n_2$ with $\alpha_1, \alpha_2 \in
  \NN$. This covers all positive positions if and only if $n_1$ and
  $n_2$ are co-primes, by Bézout's theorem.
  \begin{figure}[ht]
		\begin{tikzpicture}[AUT style]
	
	\node[state, initial] (A) {};
	
	\node[state ,above right= of A,yshift=-5mm] (B1)  {};

        \node[state,right= of B1] (B2) {};
	\node[right = of B2, xshift=-9mm] (B3) {\Large \textbf{...}};
	\node[state,right = of B3] (B4)  {};
	
	\draw[->, bend left=30,color=Green2] (A) edge[above left] node[xshift=3pt,yshift=-3pt] {$\good$} (B1);
	\draw[->,color=Blue2] (B1) edge[above] node[yshift=-3pt] {$\neutral$} (B2);
        \draw[->,color=Blue2] (B3) edge[above] node[yshift=-3pt] {$\neutral$} (B4);
	\draw[->, bend left=20, color=Green2] (B4) edge[below] node[yshift=3pt] {$\good$} (B1);

        \draw [decorate,decoration={brace,amplitude=5pt, raise=4ex}]
	(1,.5) -- (5.3,.5) node[midway,yshift=3em]{$n_1$ states};

	\node[state,below right= of A,yshift=5mm] (D1)  {};
	\node[state, right= of D1] (D2) {};
	\node[state, right = of D2] (D3)  {};
	\node[ right= of D3, xshift=-9mm] (U) {\Large \textbf{...}};
	\node[state, right= of U] (V) {};
	\node[state, right= of V] (D4) {};

	\draw[->,bend right =30,color=Blue2] (A) edge[below left] node[xshift=3pt,yshift=3pt] {$\neutral$} (D1);	
	\draw[->, loop below,color=Blue2] (D1) edge[below] node {$\neutral$} ();
	\draw[->, color=Green2] (D1) edge[below] node[yshift=3pt] {$\good$} (D2);
	\draw[->,color=Blue2] (D2) edge[below] node {$\neutral$} (D3);
        \draw[->] (U) edge[below,color=Blue2] node {$\neutral$} (V);        
        \draw[->,color=Red2] (V) edge[below] node[yshift=3pt] {$\bad$} (D4);
	\draw[->, loop right,color=Blue2] (D4) edge[right] node[xshift=-3pt,yshift=-1pt] {$\neutral$} ();

	\draw [decorate,decoration={brace,amplitude=5pt,mirror, raise=4ex}]
	(2.6,-.5) -- (6.85,-.5) node[midway,yshift=-3em]{$n_2$ states};
	
	
\end{tikzpicture}
		\caption{Example of a labelled automaton which is a
                  positive instance of \reachtogether if and only if
                  $n_1$ and $n_2$ are co-primes.}
		\label{fig:ex3}
	\end{figure}
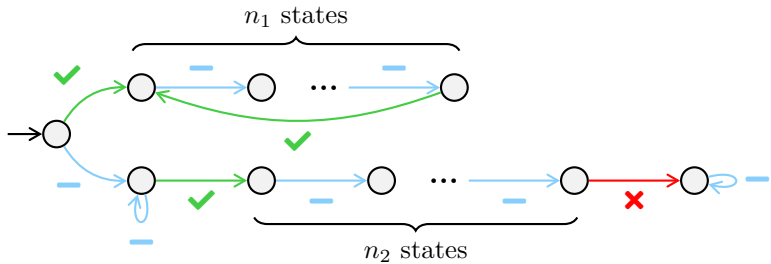
\end{example}

\subsection{Further example and first complexity lower bound}
One can already state that \reachtogether is \coNP-hard, via a
reduction from the universality of unary non-deterministic automata.
We only sketch the proof here, since we will later show that the
problem is in fact \PSPACE-hard.

Take a non-deterministic finite automaton (NFA in short) $\mathcal{N}$
over a unary alphabet $\set{a}$ (we can assume without loss of
generality that it has a single initial state).  Relabel transitions with fresh letters
so that $\mathcal{N}$
becomes deterministic.  Add a state $s_\top$, transitions labelled
$\good$ from every final state of $\mathcal{N}$ to $s_\top$, and a
loop labelled $\neutral$ on $s_\top$. Finally, label all transitions
in $\mathcal{N}$ with $\neutral$. The obtained labelled automaton is
depicted in Figure~\ref{fig:coNP}.
  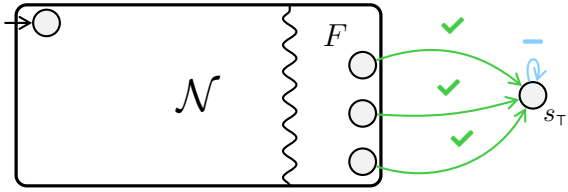
\begin{figure}[h]
    \qquad\begin{tikzpicture}[AUT style,scale=.8]
	
	\draw[very thick, rounded corners] (0,0) rectangle (6,3);
	\draw[decorate, decoration={snake}] (4.5, 0) -- (4.5,3);

	\node[state, initial] (A) at (0.5,2.7) {};
	\node at (3,1.5) {\LARGE $\mathcal{N}$};
	\node at (5.25,2.5) {\Large $F$};

	\node[state] (F1) at (5.7, 2)  {};
	\node[state] (F2) at (5.7, 1.2)  {};
	\node[state] (F3) at (5.7, 0.4)  {};
	
	\node[state] (G) at (8.5, 1.5)  {};
        \node[xshift=3mm,yshift=-3mm] (At) at (G) {$s_\top$};
	
	\draw[->, bend left,color=Green2] (F1) edge node[above] {\footnotesize $\good$} (G);
	\draw[->, bend right=10,color=Green2] (F2) edge node[above] {\footnotesize $\good$} (G);
	\draw[->, bend right=40,color=Green2] (F3) edge node[above] {\footnotesize $\good$} (G);
	\draw[->, loop above, color=Blue2] (G) edge node[above,yshift=-3pt] {\footnotesize $\neutral$} ();

\end{tikzpicture}
    \caption{A construction to prove \coNP-hardness.}
    \label{fig:coNP}
  \end{figure}

  Observe that there is a winning strategy for the population game
  defined by resulting labelled automaton if and only if there are
  paths from the initial state to $s_\top$ of every positive length,
  if and only if $\mathcal{N}$ accepts all words over $a^*$.
	
  Since universality of unary NFAs is \coNP-complete~\cite[Theorem
  6.1]{StockmeyerM73} \reachtogether\ is \coNP-hard.  In the case
  where there are no transitions labelled $\bad$, it is even
  \coNP-complete. Indeed the problem boils down to finding paths of
  every positive length ending with a $\good$-labelled
  transition. This in turn reduces to universality of a unary NFA by
  making every source of a $\good$-labelled transition a final state.

  \subsection{Relationship with parameterized concurrent games}
  \label{subsec:relationship}
  Repeated games for populations 
  can be interpreted as concurrent parameterized games, as first
  investigated in~\cite{BBM-fsttcs19,BBM-fsttcs20}. We describe the
  model on the example from Figure~\ref{fig:anirban1}. When playing a
  concurrent parameterized game, the number of players is finite but
  arbitrary. Here, the objective is assumed to be a reachability
  objective, and the goal is to reach the target state ($v_1$ here)
  from the initial state $v_0$, whatever the number of players.  A
  move is an infinite word, for instance $a^2ba^\omega$, where the
  prefix of length $k$ represents the move of the players in case
  there are $k$ of them.  With that move, from $v_0$, if there are three players,
  the game will proceed to the target (since $a^2b \in a^*b\Sigma^+$);
  if there are two players, the game will proceed to $v_2$ (since
  $a^2 \in a^+$). On this instance, there exists a winning coalition
  strategy, which consists in playing from the initial state
  $b a^\omega$, then
  $aba^\omega$, then $b^2 a b^\omega$, etc. Under this strategy
  --uniform for every possible number of players-- if there are $k$
  players, the game will loop $(k-1)$ times in $v_0$ and then proceed
  to the target
  $v_1$. 

  \begin{figure}[h]
    \begin{center}
      \hfill
      \begin{subfigure}{.45\textwidth}
        \begin{center}
        \begin{tikzpicture}[AUT style]
          
          \node[state, initial, shape=rectangle] (A) {};
          \node[state, right= of A,double, shape=rectangle] (B)  {};
          \node[state, below= of A,shape=rectangle] (C) {};
          
          \node[yshift=-4mm,xshift=4mm] (At) at (A) {$v_0$};
          \node[yshift=-5mm] (Bt) at (B)  {$v_1$};
          \node[xshift=-5mm] (Ct) at (C) {$v_2$};
          
          \draw[->, loop above,looseness=20] (A) edge[above] node {$a^*b\Sigma^+$} (A);
          \draw[->] (A) edge[above] node {$a^*b$} (B);
          \draw[->] (A) edge[left] node {$a^+$} (C); 
          \draw[->, loop right] (B) edge[right] node {$\Sigma^+$} (B); 
          \draw[->, loop right] (C) edge[right] node {$\Sigma^+$} (C);
          
          \node[below=of B, yshift=-5mm] (C) {}; 
        \end{tikzpicture}
        \end{center}
        \caption{An example arena with a reachability objective
          (target state $v_1$), where there is a  coalition winning
          strategy.}
        \label{fig:anirban1}
      \end{subfigure}
      \hfill
      \begin{subfigure}{.45\textwidth}
        \begin{center}
        \begin{tikzpicture}[AUT style]
          
          \node[state, initial,rectangle] (A) {};
          \node[state, right= of A,double,rectangle] (B)  {};
          \node[state, below= of A,rectangle] (C) {};
          
          \node[yshift=-5mm,xshift=5mm] (At) at (A) {$v_{\small\neutral}$};
          \node[yshift=-6mm,xshift=3mm] (Bt) at (B)  {$v_{\small\good}$};
          \node[xshift=-5mm,yshift=-3mm] (Ct) at (C) {$v_{\small\bad}$};
          
          \draw[->, loop above,looseness=20] (A) edge[above] node {$\calL_{\small\neutral}$} (A);
          \draw[->] (A) edge[above] node {$\calL_{\small\good}$} (B);
          \draw[->] (A) edge[left] node {$\calL_{\small\bad}$} (C); 
          \draw[->, loop right] (B) edge[right] node {$\Sigma^+$} (B); 
          \draw[->, loop right] (C) edge[right] node {$\Sigma^+$} (C);
          
        \end{tikzpicture}
        \end{center}
        \caption{Reachability population game seen as a coalition
          parameterized game. \\ }
        \label{fig:reach}
      \end{subfigure}
      \caption{Link with concurrent parameterized
        games~\cite{BBM-fsttcs20} with a reachability objective.}
      \label{fig:jeu-anirban}
    \end{center}
\end{figure}
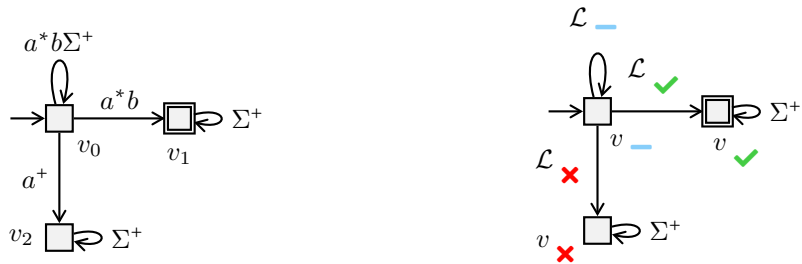


Let us explicit the relationship between repeated games for
populations and concurrent parameterized games. From a labelled
automaton $\langle \calA, \lambda\rangle$ one builds a concurrent
parameterized arena with three vertices
$\{v_{\scriptsize \neutral},v_{\scriptsize\bad},v_{\scriptsize\good}\}$ as in Figure~\ref{fig:reach},
and one derives the regular languages labelling the transitions of the
arena as follows: language $\calL_{\scriptsize\neutral}$ (resp. $\calL_{\scriptsize\good}$,
resp. $\calL_{\scriptsize\bad}$) is the language of finite words over $\Sigma$,
whose execution in $\calA$ ends with a $\neutral$-labelled
(resp. $\good$-labelled, resp.  $\bad$-labelled) transition. The arena
of Figure~\ref{fig:anirban1} is obtained after applying this
construction to the labelled automaton of Figure~\ref{fig:ex1bis} (in
which from $q_0$, letter $a$ is associated with the transition
labelled by $\bad$ while letter $b$ is associated with the transition
labelled by $\good$). Under that construction, there exists a winning
population strategy in the repeated game defined by
$\langle \calA,\lambda \rangle$ if and only if there exists a strategy
that ensures that for every number of players, $v_{\scriptsize\good}$ is
reached. Reciprocally, from a concurrent parameterized game with a
three-state shape as on Figure~\ref{fig:anirban1}, one can build a
labelled automaton such that solving the population game on the latter
will solve the coalition problem on the former. To the best of our
knowledge, the reachability problem for coalitions in concurrent
parameterized games (over an arbitrary finite-graph-based arena)) is
an open problem since~\cite{BBM-fsttcs19,BBM-fsttcs20}. The current
paper thus presents a solution to that open problem on specific
instances.


\section{Deciding how populations uniformly win}
\label{sec:deciding}
We fix a labelled automaton $\langle \calA ,\lambda \rangle$ with
$\calA = (Q, q_{\text{init}}, \Sigma, \delta)$ for the rest of this
section.

A population strategy is an infinite sequence of moves, each inducing
an infinite path in $\calA$.  To exhibit a winning strategy for
positive instances of \reachtogether, one needs to be able to express
the full sequence of paths with a finite description, which we do by
identifying some repeating pattern.
To do so, we change our point of view: instead of a sequence of
infinite paths in the automaton, we see strategies as infinite words
of sequences of transitions.  If we represent strategies as grids as
in Figure~\ref{fig:possible-visited-statesbis}, this means that we
switch our vision of the strategy from an infinite sequence of rows to
an infinite sequence of columns. We will prove that for positive
instances, there always exists a winning strategy that can be
decomposed as in Figure~\ref{fig:schema-global}, with a prefix $f$
followed by iterations of a factor $g$.

\begin{figure}[htbp]
  \centering
\begin{tikzpicture}

  \node at (1.2,3.5) {$f$};
  \node at (4.2,3.5) {$g$};
  \node at (7.8,3.5) {$g$};
  \node at (11.2,3.5) {$g$};
    \foreach \y  in {0,...,4} 
	{
	    \node (0\y) at (0,0.6*\y) {$\bullet$};
	}
	\foreach \y  in {0,...,4} 
	{
	    \node (1\y) at (1.2,0.6*\y) {{\color{lightgray}{$\bullet$}}};
	}
	\foreach \y  in {0,...,4} 
	{
	    \node (2\y) at (2.4,0.6*\y) {$\bullet$};
	}
	\foreach \y  in {0,...,4} 
	{
	    \node (3\y) at (3.6,0.6*\y) {{\color{lightgray}{$\bullet$}}};
	}
	\foreach \y  in {0,...,4} 
	{
	    \node (4\y) at (4.8,0.6*\y) {{\color{lightgray}{$\bullet$}}};
	}
	\foreach \y  in {0,...,4} 
	{
	    \node (5\y) at (6,0.6*\y) {$\bullet$};
	}
	\foreach \y  in {0,...,4} 
	{
	    \node (6\y) at (7.2,0.6*\y) {{\color{lightgray}{$\bullet$}}};
	}
	\foreach \y  in {0,...,4} 
	{
	    \node (7\y) at (8.4,0.6*\y) {{\color{lightgray}{$\bullet$}}};
	}
	\foreach \y  in {0,...,4} 
	{
	    \node (8\y) at (9.6,0.6*\y) {$\bullet$};
	}
	\foreach \y  in {0,...,4} 
	{
	    \node (9\y) at (10.8,0.6*\y) {{\color{lightgray}{$\bullet$}}};
	}

	\foreach \x in {1,...,9}
	{
        \foreach \y  in {0,...,4}
	    {
            \pgfmathparse{int(\x-1)}
		    \draw[->, lightgray] (\pgfmathresult\y) edge (\x\y);
	    }
		
	}
	
	\node[below= of 52, yshift=-5mm, xshift=-3mm] (vd) {\LARGE $\vdots$};
	
	\foreach \y  in {0,...,4}
	{
		\node (dots\y) at (12, 0.6*\y) {\Large $\cdots$};
	}

	\node (110) at (13.2,2.4) {};
	\draw[bend left,densely dotted] (04.north west) edge node[above] {} (24.north east);
    \draw[densely dotted] (00.south west) -- (04.north west);
    \draw[densely dotted] (20.south east) -- (24.north east);
	\draw[bend left,dashed] (24.north west) edge node[above] {} (54.north east);
    \draw[dashed] (20.south west) -- (24.north west);
    \draw[dashed] (50.south east) -- (54.north east);
	\draw[bend left,dashed] (54.north west) edge node[above] {} (84.north east);
    \draw[dashed] (50.south west) -- (54.north west);
    \draw[dashed] (80.south east) -- (84.north east);
    \draw[bend left,dashed] (84.north west) edge node[above] {} (110.north);
    \draw[dashed] (80.south west) -- (84.north west);
\end{tikzpicture}  
  \caption{Schematic representation of a winning strategy decomposition.}
  \label{fig:schema-global}
\end{figure}
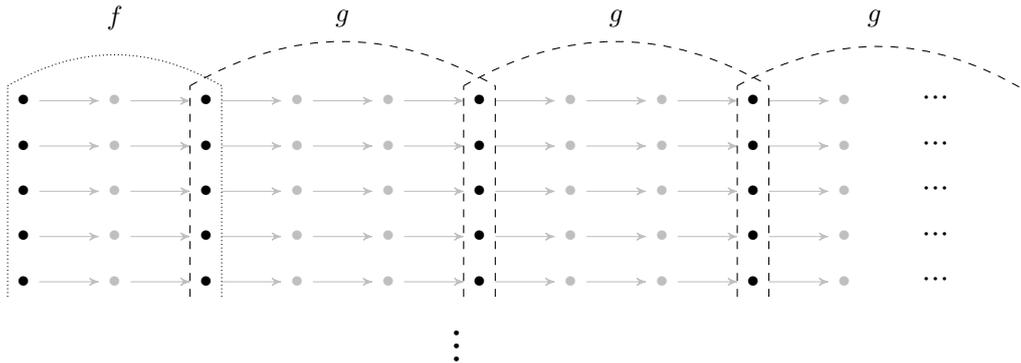

More precisely we determine the existence of winning strategies using
a finite semigroup in which we map sequences of transitions. A
vertical chunk 
is abstracted by the sequence of endpoints of its transitions. To keep
the semigroup finite, repetitions are removed. In other words,
we only keep the set of pairs of states that appear as endpoints of
some element in the chunk, in order of appearance. $f$ and $g$ in the
above figure yield one element each of the semigroup. The internal
operation of the semigroup abstracts the concatenation operation on
vertical chunks.

We then characterise utility functions for which winning population
strategies exist using this semigroup. As announced in
Theorem~\ref{th:characterization-positive-instances-with-frontiers},
positive instances are characterized by the existence of two elements
$f$ and $g$ with four properties: 
(1) $f$ is \emph{initial}, expressing that it indeed corresponds to
the prefix of all moves; (2) $g$ is $\omega$-iterable, expressing that
it describes progress from left to right in the grid; (3) and (4) $f$
and $g$ combine nicely. A key element of the proof of the
left-to-right implication is to apply Ramsey's theorem to an arbitrary
winning population strategy in order to be able to abstract it as a
finite prefix followed by infinitely many repetitions of the same
pattern.

\subsection{Wins words: an alternative view on winning strategies}

We propose an alternative view on winning population strategies, that
will be useful for our proofs.
For a given sequence of moves consistent with a non-losing strategy
\(\sigma\), the set of number of players for which it has already won
at a given time of the sequence is growing as more and more moves are
played.  One can represent this growing set of achieved wins by a
sequence of words in \(\{0, 1\}^\omega\) where the \(j\)-th letter of
the \(i\)-th word is $1$ if
the game is won after the \(i\)-th move when \(j\) players are
involved. Definition~\ref{def:wins-words} formalises this idea.

\begin{definition}
  \label{def:wins-words}
	A \emph{wins word} is an element of $\set{0,1}^+ \cup
        \set{0,1}^\omega$.
	%
	Given two wins words \(w, w' \in \{0,1\}^\ell\) with
        $\ell \in \NN_{>0} \cup \{\omega\}$, a state $q \in Q$ and
        \(\move \in \Sigma^\ell\), we write
        \(w \overset{q \cdot \move}{\rightsquigarrow} w'\) if for all
        \(j > 0\), we have either
        \begin{itemize}
        \item \(w'[j] = w[j] = 1\), or
        \item \(\etiquette{q \cdot \move[:j]} = \neutral\) and
          \(w'[j] = w[j]\), or
        \item \(\etiquette{q \cdot \move[:j]} =\good\) and \(w'[j] = 1\).
        \end{itemize}
        If $q= q_{\mathrm{init}}$ is the initial state of $\calA$, we
        simply write \(w \overset{\move}{\rightsquigarrow} w'\).  We
        write \(w \rightsquigarrow w'\) when there is \(\move\)
        such that \(w \overset{\move}{\rightsquigarrow} w'\) and
        \(\rightsquigarrow^*\) for the reflexive transitive closure of
        \(\rightsquigarrow\).  We define the partial ordering
        $\preceq$ on wins words as: $w \preceq w'$ if and only if
        $|w| = |w'|$ and $w[j] \leq w'[j]$ for all $j$.
\end{definition}

Note the three constraints on evolutions of wins words in the above
definition, that encode the possible successive stage payoffs along a
winning play of the repeated game for a fixed number of players.
Repeated games for populations can be phrased with wins words as
follows: one starts from $0^\omega$ and the goal is to read infinite
words (yielding infinite sequences of labels in
$\langle\calA,\lambda\rangle$) such that one never sees label $\bad$
on a position marked $0$, and makes sure that every position of the
wins word is eventually turned to $1$ with a $\good$ label.

\begin{example}
  Back to Example~\ref{ex1} depicted on
  Figure~\ref{fig:possible-visited-statesbis}. Writing
  $(\move_i)_{i \in \NN_{>0}}$ for the mentioned winning population
  strategy, its corresponding sequence of wins words, starting at
  $0^\omega$ is:
  \[
    0^\omega \overset{\move_1}{\rightsquigarrow} 1 0^\omega
    \overset{\move_2}{\rightsquigarrow} 1 1 0^\omega
    \overset{\move_3}{\rightsquigarrow} 1 1 1 0^\omega
    \overset{\move_4}{\rightsquigarrow} \cdots
  \]
\end{example}

From the above definitions, we immediately derive important properties
of wins words:
\begin{lemma}
  \label{lem:wins-words}
  \begin{itemize}
  \item \textsf{(Winning strategies as wins
      words)} \label{rem:win-strat-wins-words} A strategy
    $\sigma = (\move_i)_{i \in \NN}$ is winning if and only if the
    sequence
    $w_0 \overset{\move_0}{\rightsquigarrow} w_1
    \overset{\mu_1}{\rightsquigarrow} \cdots$ with $w_0 = 0^\omega$ is
    well-defined (that is, the strategy is non-losing) and the
    sequence $(w_i)_{i \in \NN}$ converges to $1^\omega$.    
  \item \textsf{(Monotonicity)} \label{rem:monotonicity} For all
    $w \overset{q\cdot \mu}{\rightsquigarrow} w'$, we have
    $w \preceq w'$. Furthermore, for all $\overline{w}$ such that
    $w \preceq \overline{w}$, we have
    $\overline{w} \overset{q\cdot \mu}{\rightsquigarrow}
    \overline{w'}$ for some $\overline{w'}$ such that
    $w' \preceq \overline{w'}$.
  \item \textsf{(Composition)} \label{rem:composition} If $w$ is
    finite, $w \overset{q\cdot \mu}{\rightsquigarrow} w'$ and
    $\overline{w} \overset{\overline{q}\cdot
      \overline{\mu}}{\rightsquigarrow} \overline{w'}$ with
    $\overline{q} = \delta(q, \mu)$ then
    $w\overline{w} \overset{q\cdot
      \mu\overline{\mu}}{\rightsquigarrow} w'\overline{w'}$.
  \item \textsf{(Invariance under repetitions)} \label{rem:repetitions}
    If $w \overset{q\cdot \mu}{\rightsquigarrow} w'$ then for all
    $w''$ with $w' \preceq w''$ we have
    $w'' \overset{q\cdot \mu}{\rightsquigarrow} w''$.
  \end{itemize}
\end{lemma}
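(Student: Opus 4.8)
The four items all follow by unwinding Definition~\ref{def:wins-words}, so the plan is a careful bookkeeping argument for each in turn, in an order that lets later items feed earlier ones. For \textsf{(Winning strategies as wins words)}, I would first note that, given a wins word $w$, a state $q$ and a move $\move$, there exists $w'$ with $w \overset{q \cdot \move}{\rightsquigarrow} w'$ if and only if no position $j$ with $w[j] = 0$ has $\etiquette{q \cdot \move[:j]} = \bad$: the three clauses of the definition cover exactly the positions where $w[j] = 1$ (first clause, whatever the label) or the label is $\neutral$ or $\good$. Iterating from $w_0 = 0^\omega$, the sequence $w_0 \overset{\move_0}{\rightsquigarrow} w_1 \overset{\move_1}{\rightsquigarrow} \cdots$ is therefore well-defined precisely when, for every $N$, a transition labelled $\bad$ at position $N$ is never read while position $N$ still carries a $0$; since a position only ever flips from $0$ to $1$, and only through a $\good$ label, this is exactly the non-losing condition. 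It then remains to observe that — the sequence being monotone by the next item — its pointwise limit equals $1^\omega$ iff every position eventually flips to $1$, and, a flip to $1$ being possible only via a $\good$ label preceded by $\neutral$ labels at that position, this is literally the winning condition read parameter by parameter.

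For \textsf{(Monotonicity)}, the inequality $w \preceq w'$ is immediate since each of the three clauses yields $w'[j] \ge w[j]$. For the second part I would, given $w \preceq \overline{w}$, build $\overline{w'}$ positionwise: set $\overline{w'}[j] = 1$ whenever $\overline{w}[j] = 1$ (first clause); otherwise $\overline{w}[j] = 0$ forces $w[j] = 0$, so the witness $w \overset{q \cdot \move}{\rightsquigarrow} w'$ used the second or third clause at $j$, and I copy it verbatim — $\overline{w'}[j] = 0$ with label $\neutral$, resp.\ $\overline{w'}[j] = 1$ with label $\good$. By construction $\overline{w} \overset{q \cdot \move}{\rightsquigarrow} \overline{w'}$ holds and $w'[j] \le \overline{w'}[j]$ at every $j$.

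For \textsf{(Composition)}, the plan is to split the index $j$ of $w'\overline{w'}$. For $j \le |w|$, the constraint that $w\overline{w} \overset{q \cdot \move\overline{\move}}{\rightsquigarrow} w'\overline{w'}$ imposes at position $j$ only refers to $w[j]$, $w'[j]$ and $\etiquette{q \cdot \move[:j]}$, hence coincides with the constraint coming from $w \overset{q \cdot \move}{\rightsquigarrow} w'$. For $j = |w| + k$ with $k \ge 1$, the key (elementary) automata fact is that reading $\move$ from $q$ lands in $\overline{q} = \delta(q,\move)$, so $\etiquette{q \cdot (\move\overline{\move})[:|w|+k]} = \etiquette{\overline{q} \cdot \overline{\move}[:k]}$, and the constraint at $j$ coincides with the one coming from $\overline{w} \overset{\overline{q} \cdot \overline{\move}}{\rightsquigarrow} \overline{w'}$ at position $k$. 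Both constraint families hold by hypothesis, which gives the claim. Finally \textsf{(Invariance under repetitions)} follows from the same clause-by-clause inspection: at each $j$, the first and third clauses of $w \overset{q \cdot \move}{\rightsquigarrow} w'$ force $w'[j] = 1$, hence $w''[j] = 1$ since $w' \preceq w''$, so the first clause of $w'' \overset{q \cdot \move}{\rightsquigarrow} w''$ applies; if instead the second clause held, then $\etiquette{q \cdot \move[:j]} = \neutral$ and $w''[j] = w''[j]$ trivially, so the second clause applies again.

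None of this is genuinely difficult, and I expect the only point needing real care to be the equivalence in the first item between ``the sequence of wins words is well-defined'' and ``the strategy is non-losing'': it hinges on noticing that a $\bad$ label is harmless once the corresponding position is already marked $1$, so undefinedness can be triggered only strictly before a position is won — which is precisely what the non-losing condition rules out.
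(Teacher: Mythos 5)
Your proof is correct, and it is essentially the argument the paper intends: the paper states Lemma~\ref{lem:wins-words} without proof as ``immediately derived from the definitions,'' and your clause-by-clause unwinding of Definition~\ref{def:wins-words} (including the key observation that undefinedness can only arise from a $\bad$ label at a still-unwon position, matching the non-losing condition) is exactly that routine verification, carried out correctly for all four items.
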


\subsection{Frontiers as summaries of strategies}

We define frontiers, a tool that allows to summarize parts of the
2-dimensional infinite grid representing a strategy as on
Figure~\ref{fig:possible-visited-statesbis}.


\newcommand{\reduction}[1]{\left\langle #1 \right\rangle}

\begin{definition}
  A \emph{frontier} is a sequence of pairs of states of $Q$ of the
  form \({(x_1, y_1), \ldots, (x_p, y_p)}\) without repetitions, i.e.,
  for all $i < j$, $(x_i, y_i) \neq (x_j, y_j)$. The set of frontiers
  is denoted \(\F\).
\end{definition}	
For readability, as a frontier is meant to be a summary of columns of
a table as in Figure~\ref{fig:possible-visited-statesbis}, we will
often write such a sequence of pairs
vertically: \(\seq{x_1,y_1 \\ \vdots \\ x_p,y_p}\). We will do so more
generally for sequences of tuples of states.

For any (finite or infinite) sequence \(\sigma = s_1, s_2, \cdots\) of
elements from a finite set \(S\), we write \(\reduction{\sigma}\) for
the finite sequence obtained from $\sigma$ by removing duplicates,
while keeping the order of first appearance. Observe that for any
sequence of pairs of states $(x_1,y_1), (x_2,y_2), \cdots$, its
\emph{reduction} $\reduction{(x_1,y_1), (x_2,y_2), \cdots}$ is a
frontier.
	
\begin{definition}
  Given three frontiers \(f, g\) and \(h\), we write
  \(f \star g \to h\) if there exist triples of states
  \((x_1, y_1, z_1), \ldots, (x_p, y_p, z_p) \in Q^3\) such that
  \[
    f= \reduction{\seq{x_1,y_1 \\ \vdots \\x_p,y_p}} ; \qquad g =
    \reduction{\seq{y_1,z_1 \\ \vdots \\y_p,z_p}} ; \qquad \textrm{ and } h =
    \reduction{\seq{x_1,z_1 \\ \vdots \\x_p,z_p}}.
  \]
\end{definition}

A frontier can thus be used to summarize the effect of a sequence of
paths in $\automaton$ by the reduced sequence of pairs of both their
ends.  Intuitively, the combination of frontiers with $\star$
corresponds to paths concatenation. Note however that $\star$ is not a
function. For instance, with \(f = (x_1, y_1), (x_2,y_1), (x_3, y_1)\)
and \(g = (y_1, z_1), (y_1, z_2)\) one has
$f\star g \to (x_1,z_1), (x_2,z_1), (x_3,z_2)$ and
$f\star g \to (x_1,z_1), (x_2,z_2), (x_3,z_2)$. One way turn it into a
function, is to lift the operator $\star$ to sets of frontiers: for
$F, G \subseteq \calF$,
$F \star G = \{h \mid \exists f\in F,\ g\in G:\ f\star g \to h\}$.  As
we will see, this operation yields a semigroup over $2^{\calF}$.

We start with a technical lemma expressing that if two sequences of
tuples $\bx$ and $\by$ are abtracted by frontiers $f$ and $g$, and
$f \star g \to h$, then there is a sequence of tuples that unifies
$\bx$ and $\by$ and whose abstraction is $h$. Its proof is given in
Appendix~\ref{app:unify-product}. Recall that for a
sequence ${\tau}$, 
${\tau}[i,j]$ denotes the
  sequence composed of the two elements ${\tau}[i],{\tau}[j]$.

\begin{restatable}{lemma}{LemUnifyProduct}
	\label{lem:unify-product} 
	Let $f,g,h \in \calF$ be frontiers such that $f \star g \to
        h$.  Suppose we have sequences
        \( \mathbf{x} = \seq{\tau_1^{\bx} \\ \vdots \\ \tau_r^{\bx}}
        \in \left(Q^m\right)^r\) and
        $\mathbf{y} = \seq{\tau_1^{\by} \\ \vdots \\ \tau_s^{\by}} \in
        \left(Q^n\right)^s$ (for some $m,n,r,s$) such that:
        \[f = \reduction{\seq{\tau_1^{\bx}[1,m] \\
              \vdots \\ \tau_r^{\bx}[1,m]}}
          \qquad
          \text{and}\qquad
          g = \reduction{\seq{\tau_1^{\by}[1,n] \\
              \vdots \\ \tau_s^{\by}[1,n]}}\enspace. \]
        Then there exists
        \(\bz = \seq{\tau_1^{\bz} \\ \vdots \\ \tau_p^{\bz}} \in
        \left(Q ^{m+n-1}\right)^p\) (for some $p$) such that:
        \[
          \reduction{\mathbf{x}} = \reduction{\seq{\tau_1^{\bz}[:m] \\
              \vdots \\ \tau_p^{\bz}[:m]}} \qquad\qquad
          \reduction{\mathbf{y}} = \reduction{\seq{\tau_1^{\bz}[m:] \\
              \vdots \\ \tau_p^{\bz}[m:]}} \qquad\text{and}\qquad h =
          \reduction{\seq{\tau_1^{\bz}[1,m+n-1] \\ \vdots
              \\ \tau_p^{\bz}[1,m+n-1]}}
        \]
\end{restatable}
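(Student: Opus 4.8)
The plan is to build a crude ``skeleton'' sequence of glued tuples out of a witnessing family for $f\star g\to h$, and then repair it in two passes so that every element of $\reduction{\bx}$ and of $\reduction{\by}$ occurs, in the right order, without disturbing the coordinates that are already correct.

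First I would reduce to a cleaner statement. Replacing $\bx$ by $\reduction{\bx}$ and $\by$ by $\reduction{\by}$ changes neither the hypotheses — the sequence of endpoints of $\reduction{\bx}$ has the same reduction as that of $\bx$ (namely $f$), since the first tuple realising a given pair of endpoints is automatically a first occurrence of its value — nor the conclusion, so assume $\bx=(u_1,\dots,u_r)$ and $\by=(v_1,\dots,v_s)$ consist of pairwise distinct tuples. For an $\bx$-tuple $u$ and a $\by$-tuple $v$ with $u[m]=v[1]$, the glued tuple $u\cdot v[2:]$ has prefix $u$, suffix $v$ and endpoints $(u[1],v[n])$; so it suffices to produce a sequence of glued tuples whose prefixes reduce to $(u_1,\dots,u_r)$, whose suffixes reduce to $(v_1,\dots,v_s)$ and whose endpoints reduce to $h$. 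Fix state-triples $(x_1,y_1,z_1),\dots,(x_p,y_p,z_p)$ witnessing $f\star g\to h$; for $i\le p$ let $\pi(i)$, $\pi'(i)$ be the positions of $(x_i,y_i)$ in $f$ and of $(y_i,z_i)$ in $g$. For a position $\ell$ of $f$, let $j(\ell)$ be the index of the first $\bx$-tuple with endpoints $f[\ell]$ and $i(\ell)$ the index of the first triple with $(x_i,y_i)=f[\ell]$; define $j'(\ell')$, $i'(\ell')$ analogously from $g$. Since $\bx$'s endpoints, $\by$'s endpoints and the two relevant projections of the witnessing sequence all reduce to $f$, respectively $g$, the maps $j,i,j',i'$ are strictly increasing, with $j(1)=j'(1)=1$.

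The skeleton $\bz_0$ has length $p$, with $i$-th glued tuple $u_{j(\pi(i))}\cdot v_{j'(\pi'(i))}[2:]$; this is well defined because $u_{j(\pi(i))}$ has endpoints $f[\pi(i)]=(x_i,y_i)$ and $v_{j'(\pi'(i))}$ has endpoints $g[\pi'(i)]=(y_i,z_i)$, so both are glued at $y_i$. Its endpoint sequence is $((x_i,z_i))_{i\le p}$, reducing to $h$; its prefixes reduce to $(u_{j(1)},\dots,u_{j(|f|)})$ and its suffixes to $(v_{j'(1)},\dots,v_{j'(|g|)})$, the first occurrence of $u_{j(\ell)}$ (resp.\ $v_{j'(\ell')}$) being at position $i(\ell)$ (resp.\ $i'(\ell')$). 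The ``missing'' $\bx$-tuples are the $u_a$ with $j(\ell)<a<j(\ell+1)$ for some $\ell$, or $a>j(|f|)$; such a $u_a$ necessarily has endpoints $f[\ell_a]$ for some $\ell_a\le\ell$. In the first pass, for each missing $u_a$ (increasing in $a$ within a gap), set $i_0:=i(\ell_a)$, let $\ell'_a$ be the position of $(y_{i_0},z_{i_0})$ in $g$, and insert the glued tuple $u_a\cdot v_{j'(\ell'_a)}[2:]$ (whose endpoints $(x_{i_0},z_{i_0})$ lie in $h$) just before position $i(\ell+1)$, or at the end if $\ell=|f|$. Because $i'(\ell'_a)\le i_0=i(\ell_a)\le i(\ell)<i(\ell+1)$, at that slot the paired suffix $v_{j'(\ell'_a)}$ and the endpoint pair $(x_{i_0},z_{i_0})$ have already appeared, so the inserted tuple contributes a genuinely new prefix $u_a$ but only repeats on the other two coordinates; afterwards the prefixes reduce to $(u_1,\dots,u_r)$ while suffixes and endpoints are untouched. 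The second pass is the mirror image: for each missing $v_c$, with endpoints $g[\ell'_c]$, put $i_0:=i'(\ell'_c)$, let $\ell_0$ be the position of $(x_{i_0},y_{i_0})$ in $f$, and insert $u_{j(\ell_0)}\cdot v_c[2:]$ just before position $i'(\ell'+1)$; since $i(\ell_0)\le i_0=i'(\ell'_c)\le i'(\ell')<i'(\ell'+1)$ this adds only a new suffix $v_c$, leaving prefixes and endpoints intact. Reading the final sequence of glued tuples coordinate-wise yields the desired $\bz$.

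The step I expect to be the main obstacle — and where the proof genuinely has to work — is the bookkeeping in these two passes: an inserted glued tuple unavoidably carries data in all three coordinates, so one must pinpoint a slot where the two coordinates not currently being fixed see only values that have already occurred. This is exactly what the chains of inequalities among the first-occurrence indices $i,j,i',j'$ provide, and those hold precisely because $f$, $g$ and $h$ all arise as reductions of projections of one and the same triple sequence, so that the order in which the elements of $f$ (resp.\ $g$) first appear is consistent across $\bx$, $\by$ and the witnessing sequence. Any coarser strategy — appending all missing tuples in bulk, or covering $\bx$ and then $\by$ in separate phases — breaks the ordering of one of the three reductions, which is why the two-pass repair with its careful placement is needed.
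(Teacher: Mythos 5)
Your proof is correct, and it reaches the same goal as the paper by a noticeably different construction, although both hinge on the same underlying fact: that the order of first occurrences of the pairs of $f$ (resp.\ $g$) is consistent across $\bx$ (resp.\ $\by$) and the triple sequence $\bw$ witnessing $f \star g \to h$. The paper builds $\bz$ by a single greedy merge: it keeps three pointers $\alpha,\beta,\gamma$ into $\bx$, $\by$ and $\bw$, appends a glued tuple whenever a compatible triple of entries (agreeing on the middle state and on all three endpoints of some $\tau^\bw_k$) is available within one step of the current pointers, and then proves an invariant (all entries up to the pointers already appear in $\bz$, in order) plus a separate progress/termination argument showing some pointer advances at every loop iteration. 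You instead build a skeleton with exactly one glued tuple per witnessing triple, using the first $\bx$-tuple and first $\by$-tuple realising the corresponding pairs of $f$ and $g$, and then run two repair passes that insert the remaining $\bx$-tuples and $\by$-tuples at slots where the other two coordinates only repeat values already seen; the correctness of each insertion is checked locally via the inequalities among the first-occurrence indices $i,j,i',j'$. Your route trades the paper's global invariant-plus-termination reasoning for a more explicit, locally verified placement argument (at the cost of the preliminary reduction to duplicate-free $\bx,\by$ and some index bookkeeping), while the paper's algorithmic merge handles all three sequences uniformly in one loop; both are sound proofs of the lemma.
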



		
		
		
	

We can now show that the operation $\star$ is associative on
$2^{\calF}$, and thus that $(2^\calF, \star)$ is a semigroup (it is
even a monoid). The proof is given in Appendix~\ref{app:semigroup}.

\begin{restatable}{lemma}{SemiGroup}\label{lem:semigroup}
$(2^\calF, \star)$ is a semigroup.
\end{restatable}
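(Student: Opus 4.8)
The plan is to prove that $\star$ is associative on $2^{\calF}$; since $F\star G$ is by definition a subset of $\calF$, closure is automatic and associativity is all that remains for $(2^{\calF},\star)$ to be a semigroup (a unit is then easy to exhibit separately, which accounts for the remark that it is even a monoid). I would fix $F,G,H\subseteq\calF$ and establish the inclusion $(F\star G)\star H\subseteq F\star(G\star H)$, the reverse inclusion following by an entirely analogous argument. So let $k\in(F\star G)\star H$: unfolding the definition twice, there are $h'\in H$ and $h\in F\star G$ with $h\star h'\to k$, and then $f\in F$, $g\in G$ with $f\star g\to h$. Concretely this produces triples of states $(x_i,y_i,z_i)_{i\le p}$ witnessing $f\star g\to h$ (so $f=\reduction{(x_i,y_i)_i}$, $g=\reduction{(y_i,z_i)_i}$, $h=\reduction{(x_i,z_i)_i}$), together with triples $(a_j,b_j,c_j)_{j\le q}$ witnessing $h\star h'\to k$.

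The core of the argument is to merge these two systems of triples into a single system of $4$-tuples, which is exactly what Lemma~\ref{lem:unify-product} delivers. I would apply it to the product $h\star h'\to k$ (in the roles of $f,g,h$ of that statement), taking $\bx$ to be the sequence of triples $(x_i,y_i,z_i)_i\in(Q^3)^p$ — whose projection onto the first and last coordinate reduces precisely to $h$, as the lemma's hypothesis requires — and $\by$ the sequence of pairs $(b_j,c_j)_j\in(Q^2)^q$, whose reduction is $h'$. Since $3+2-1=4$, the lemma yields a sequence $\bz=(\alpha_\ell,\beta_\ell,\gamma_\ell,\delta_\ell)_{\ell\le p'}\in(Q^{4})^{p'}$ with
\[
  \reduction{(x_i,y_i,z_i)_i}=\reduction{(\alpha_\ell,\beta_\ell,\gamma_\ell)_\ell},\qquad
  \reduction{(b_j,c_j)_j}=\reduction{(\gamma_\ell,\delta_\ell)_\ell},\qquad
  k=\reduction{(\alpha_\ell,\delta_\ell)_\ell}.
\]

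To exploit this I would first record the elementary fact that reduction commutes with coordinate projections: $\reduction{\pi(\reduction{\rho})}=\reduction{\pi(\rho)}$ for any finite or infinite sequence $\rho$ over a finite set and any projection $\pi$ onto a subset of coordinates (both sides have the same underlying set of elements in the same order of first appearance). Applying $\pi$ to the first identity above on coordinates $\{1,2\}$ and on $\{2,3\}$ gives $f=\reduction{(\alpha_\ell,\beta_\ell)_\ell}$ and $g=\reduction{(\beta_\ell,\gamma_\ell)_\ell}$. Now set $g'':=\reduction{(\beta_\ell,\delta_\ell)_\ell}$. The triples $(\beta_\ell,\gamma_\ell,\delta_\ell)_\ell$ witness $g\star h'\to g''$, so $g''\in G\star H$; and the triples $(\alpha_\ell,\beta_\ell,\delta_\ell)_\ell$ witness $f\star g''\to k$, so $k\in F\star(G\star H)$. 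This closes the inclusion. The reverse one is obtained the same way: given $k\in F\star(G\star H)$ coming from $f\star m\to k$ and $g\star h'\to m$, apply Lemma~\ref{lem:unify-product} to $f\star m\to k$ with $\bx$ the pairs from $f\star m\to k$ and $\by$ the triples from $g\star h'\to m$, producing $4$-tuples $(\alpha_\ell,\beta_\ell,\gamma_\ell,\delta_\ell)_\ell$, and then read off $f'':=\reduction{(\alpha_\ell,\gamma_\ell)_\ell}\in F\star G$ together with $f''\star h'\to k$.

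I expect the only real obstacle to be bookkeeping — keeping the many state-indices, projections and reductions aligned, and renaming lengths to avoid clashes — and making sure the ``reduction commutes with projection'' observation is stated cleanly, since it is the glue that turns an equality of reduced sequences of triples into the corresponding equalities of reduced sequences of pairs. The substantive step, that the two triple-systems can be unified into one $4$-tuple system, carries no extra combinatorial difficulty once Lemma~\ref{lem:unify-product} is available.
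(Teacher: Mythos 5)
Your proposal is correct and follows essentially the same route as the paper: both reduce associativity to Lemma~\ref{lem:unify-product}, merging the pair-witness and the triple-witness into a single sequence of $4$-tuples and then reading off the intermediate frontier and the two required products by projecting coordinates (your explicit ``reduction commutes with projection'' remark is exactly the paper's observation that the same pairs appear in the same order). The only cosmetic difference is which inclusion you work out in detail; the paper proves the mirror-image inclusion and invokes symmetry, just as you do.
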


As the $\star$ operation is associative, the notation
$F_1 \star \cdots \star F_k$ is well-defined. An element of this
product can be decomposed into simpler elements, as follows. The proof
is given in Appendix~\ref{app:decompose}.

\begin{restatable}{lemma}{DecomposeFrontier}
	\label{lem:decompose-frontiers}
	For all $F_1, \ldots, F_n \in 2^\calF$ and $f \in \calF$, we
        have $h \in F_1\star \cdots \star F_n$ if and only if there
        exist $f_1 \in F_1, \ldots, f_n \in F_n$ and a sequence of
        tuples
        $\bfx = \seq{\tau_1^{\bx} \\ \vdots \\ \tau_p^{\bx}} \in \left(Q^{n+1}\right)^p$
        such that
        \[
          h = \reduction{\seq{\tau^{\bx}_1[1,n+1] \\ \vdots \\
              \tau^{\bx}_p[1,n+1]}} \qquad \text{and for all
            $1 \le i \le n$, } \quad f_i =
          \reduction{\seq{\tau^{\bx}_1[i:i+1] \\ \vdots \\
              \tau^{\bx}_p[i:i+1]}} .
        \]
\end{restatable}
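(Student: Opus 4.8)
The plan is to prove the two implications separately. The ($\Leftarrow$) direction is a direct construction, while ($\Rightarrow$) goes by induction on $n$, peeling off the last factor $F_n$ using associativity of $\star$ (Lemma~\ref{lem:semigroup}) and merging witness sequences with Lemma~\ref{lem:unify-product}. Before either argument I would record one elementary fact about the reduction operator, used repeatedly below: \emph{reduction commutes with coordinate projections}, i.e.\ if $\sigma,\sigma'$ are sequences over a finite set with $\reduction{\sigma}=\reduction{\sigma'}$ and $\pi$ is any function, then $\reduction{(\pi(\sigma_j))_j}=\reduction{(\pi(\sigma'_j))_j}$. This holds because $\reduction{\cdot}$ keeps exactly one copy of each occurring value, in order of first appearance: both sides thus list the same set of $\pi$-images, and in the same order, namely the order in which their preimages are first reached in $\sigma$ (equivalently in $\reduction{\sigma}=\reduction{\sigma'}$).

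For ($\Leftarrow$): suppose $f_i\in F_i$ and $\mathbf{x}=(\tau^{\bx}_1,\dots,\tau^{\bx}_p)\in(Q^{n+1})^p$ satisfy $h=\reduction{(\tau^{\bx}_k[1,n+1])_k}$ and $f_i=\reduction{(\tau^{\bx}_k[i:i+1])_k}$ for all $i$. I would set $g_j:=\reduction{(\tau^{\bx}_k[1,j+1])_k}$ for $1\le j\le n$, so that $g_1=f_1$ (as $\tau[1,2]=\tau[1:2]$ for a pair $\tau$) and $g_n=h$. For each $j\ge 2$, the triples $(\tau^{\bx}_k[1],\tau^{\bx}_k[j],\tau^{\bx}_k[j+1])_{1\le k\le p}$ witness $g_{j-1}\star f_j\to g_j$ directly from the definition of $\star$, using the hypothesis on $f_j$. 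A short induction on $j$ then yields $g_j\in\{f_1\}\star\cdots\star\{f_j\}$, so $h=g_n\in\{f_1\}\star\cdots\star\{f_n\}\subseteq F_1\star\cdots\star F_n$, the inclusion being monotonicity of $\star$ with respect to set inclusion (immediate from its definition).

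For ($\Rightarrow$): I induct on $n$. When $n=1$, $F_1\star\cdots\star F_n=F_1$, and one takes $f_1:=h$ and $\mathbf{x}$ the sequence of pairs that is $h$ itself (already reduced). For the step, write $F_1\star\cdots\star F_n=(F_1\star\cdots\star F_{n-1})\star F_n$, so $h\in F_1\star\cdots\star F_n$ gives $h'\in F_1\star\cdots\star F_{n-1}$ and $f_n\in F_n$ with $h'\star f_n\to h$; the induction hypothesis supplies $f_1\in F_1,\dots,f_{n-1}\in F_{n-1}$ and $\mathbf{y}=(\tau^{\by}_j)_j\in(Q^n)^q$ with $h'=\reduction{(\tau^{\by}_j[1,n])_j}$ and $f_i=\reduction{(\tau^{\by}_j[i:i+1])_j}$ for $i\le n-1$. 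Now I apply Lemma~\ref{lem:unify-product} to $h'\star f_n\to h$, feeding it $\mathbf{y}$ (tuples of length $n$) as the sequence abstracting $h'$ through its first and last coordinates, and $f_n$ itself, viewed as a sequence of pairs, as the sequence abstracting $f_n$ through its two coordinates. The lemma returns $\mathbf{x}=(\tau^{\bx}_k)_k\in(Q^{n+1})^p$ with $\reduction{\mathbf{y}}=\reduction{(\tau^{\bx}_k[:n])_k}$, $\reduction{f_n}=\reduction{(\tau^{\bx}_k[n:])_k}$ and $h=\reduction{(\tau^{\bx}_k[1,n+1])_k}$. Taking this $\mathbf{x}$: the clause for $h$ is immediate; the clause $f_n=\reduction{(\tau^{\bx}_k[n:n+1])_k}$ holds because $f_n$ is reduced and $[n:]$ agrees with $[n:n+1]$ on $(n+1)$-tuples; and for $i\le n-1$, the clause $f_i=\reduction{(\tau^{\bx}_k[i:i+1])_k}$ follows from $\reduction{\mathbf{y}}=\reduction{(\tau^{\bx}_k[:n])_k}$ together with the projection fact, since $\tau^{\bx}_k[i:i+1]$ only reads coordinates $i,i+1\le n$.

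I expect the conceptual content to be light and the main obstacle to be the coordinate bookkeeping: one must invoke Lemma~\ref{lem:unify-product} with its hypotheses in exactly the right shape --- $h'$ abstracted through coordinates $1$ and $n$ (its first and last), $f_n$ through a two-element block --- so that the output tuples carry the $\mathbf{y}$-information in coordinates $1,\dots,n$ and the new factor in coordinate $n+1$. The projection-commutes-with-reduction observation, elementary as it is, is the load-bearing step that lets the coordinate-wise abstractions of $f_1,\dots,f_{n-1}$ survive the replacement of $\mathbf{y}$ by the merely reduction-equivalent prefix of the output sequence.
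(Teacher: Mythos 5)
Your proof is correct and takes essentially the same route as the paper's: induction on $n$ peeling off the last factor via associativity and Lemma~\ref{lem:unify-product}, with your explicit ``reduction commutes with projections'' observation being exactly the step the paper invokes implicitly when it writes ``we can infer that $f_i = \reduction{\cdots}$''. The only difference is that you also spell out the easy ($\Leftarrow$) direction (and the monotonicity of $\star$ under set inclusion), which the paper's proof leaves implicit.
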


We call \((2^\F, \star)\) the \emph{frontier semigroup}.  To relate
(winning) strategies and frontiers, we use wins words, and we consider
a morphism \(\psi : \{0, 1\}^+ \to 2^\F\) whose aim is to exhibit
winning strategies.  Informally, \(\psi(0)\) describes sequences of
transitions in which a $\good$ label appears, with no $\bad$ before,
and \(\psi(1)\) describes frontiers corresponding to any sequence of
transitions.

\begin{definition}
	\label{def:morphism-arena}
	Define the morphism \(\psi: \set{0,1}^* \to 2^\calF\) as
	\[
		\psi(0) = \Bigg\{
		\seq{
			x_1, y_1 \\
			\vdots \\
			x_p, y_p}
		\in \mathcal{F} \mid 
		\exists \seq{a_1\\ \vdots\\ a_p},  \forall i, \delta(x_i, a_i) = y_i 
		 \land  \exists j, \lambda(x_j, a_j) = \good
		\land \forall i < j, \lambda(x_i, a_i) = \neutral
		\Bigg\}
	\]
      \[ \psi(1) = \left\{\seq{x_1, y_1 \\ \vdots \\ x_p, y_p} \in \F \mid
        \exists \seq{a_1\\ \vdots \\ a_p}, \forall i, \delta(x_i, a_i)=y_i\right\}\]
\end{definition}

	

\begin{example}
  Consider the instance described in Figure~\ref{fig:Ex1}, and the
  grid from Figure~\ref{fig:possible-visited-statesbis}. The set
  \(\psi(00)\) includes the compressed version of columns 1 and 2 on
  the one hand, and 2 and 3 on the other hand. With our notations,
  this translates into:
  \[
    \seq{q_0, q_1 \\
      q_0, q_0}, \seq{
      q_1, q_1 \\
      q_0, q_1 \\
      q_0, q_0} \in \psi(00).
  \]
\end{example}

Lemma \ref{lem:morphism-describes-slices-of-winning-paths} (whose
proof is given in Appendix~\ref{app:morphism}) formalises the
application of morphism $\psi$ to an arbitrary word. %
The idea is that pairs of a frontier are played step by step, and
whenever the first \(p\) pairs have already been played, the playable
pairs are all of the first \(p+1\) pairs.

\begin{restatable}[Morphism $\psi$ describes slices of winning paths]{lemma}{LemMorphismSlices}
	\label{lem:morphism-describes-slices-of-winning-paths}
	For all \(w \in \{0, 1\}^n\) and \(h \in \F\), we have
        \(h \in \psi(w)\) if and only if there exist words
        \(v_1, \ldots, v_p \in \Sigma^n\) and
        \(q_1, \ldots, q_p \in Q\) such that
        \[
          w \overset{q_1 \cdot v_1}{\rightsquigarrow} \ldots
          \overset{q_p \cdot v_p}{\rightsquigarrow} 1^n \qquad
          \text{and}\qquad
          h = \reduction{\seq{q_1, \delta(q_1, v_1) \\
              \vdots \\ q_p, \delta(q_p, v_p)}}.
        \]
        \end{restatable}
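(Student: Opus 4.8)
The plan is to reduce the statement to the purely combinatorial decomposition lemma (Lemma~\ref{lem:decompose-frontiers}) and then carry out a careful bookkeeping of transition labels. Write $w = w[1] \cdots w[n]$ as the concatenation of its letters; then the morphism property together with associativity (Lemma~\ref{lem:semigroup}) gives $\psi(w) = \psi(w[1]) \star \cdots \star \psi(w[n])$. Applying Lemma~\ref{lem:decompose-frontiers} with $F_j = \psi(w[j])$, the condition $h \in \psi(w)$ is equivalent to the existence of frontiers $f_1 \in \psi(w[1]), \ldots, f_n \in \psi(w[n])$ and a sequence of tuples $\bfx = \seq{\tau^{\bx}_1 \\ \vdots \\ \tau^{\bx}_p} \in (Q^{n+1})^p$ with $h = \reduction{\seq{\tau^{\bx}_1[1,n+1] \\ \vdots \\ \tau^{\bx}_p[1,n+1]}}$ and $f_j = \reduction{\seq{\tau^{\bx}_1[j:j+1] \\ \vdots \\ \tau^{\bx}_p[j:j+1]}}$ for every $j$. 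Each $\tau^{\bx}_i$ is to be seen as a path of length $n$ through $n+1$ states, and the whole game is to choose which letters label these paths.

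For the ``if'' direction of the target lemma, given such $f_j$ and $\bfx$, I would pick for each position $j$ and each move index $i$ a letter $a^j_i$ realizing the transition $\tau^{\bx}_i[j] \to \tau^{\bx}_i[j{+}1]$: when $w[j] = 1$, the membership $f_j \in \psi(1)$ provides such a letter for each pair of $f_j$ and any choice works; when $w[j] = 0$, the membership $f_j \in \psi(0)$ provides, for each pair of $f_j$, a realizing letter such that all pairs strictly before some distinguished ``good'' pair carry a $\neutral$ label and the good pair carries a $\good$ label, and I would use exactly that letter for $\tau^{\bx}_i[j:j{+}1]$. Setting $v_i := a^1_i \cdots a^n_i$ and $q_i := \tau^{\bx}_i[1]$ gives $\delta(q_i, v_i) = \tau^{\bx}_i[n{+}1]$, hence $h = \reduction{\seq{q_1, \delta(q_1,v_1) \\ \vdots \\ q_p, \delta(q_p,v_p)}}$. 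It then remains to exhibit a chain $w \overset{q_1\cdot v_1}{\rightsquigarrow} \ldots \overset{q_p\cdot v_p}{\rightsquigarrow} 1^n$: on a position $j$ with $w[j]=1$ nothing changes; on a position $j$ with $w[j]=0$, since $f_j$ is the reduction of $(\tau^{\bx}_i[j:j{+}1])_i$, the first occurrences of the pairs of $f_j$ appear in the order of $f_j$, so until the first move whose pair at position $j$ is the good pair only $\neutral$ labels are read there (the bit stays $0$), that move turns the bit to $1$ via a $\good$ label, and afterwards the bit stays $1$, so no $\bad$ is ever read on a $0$; since the good pair does occur, the bit ends at $1$, and $1^n$ is reached.

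For the ``only if'' direction, from witnesses $v_1, \ldots, v_p \in \Sigma^n$ and $q_1, \ldots, q_p \in Q$ I would let $\tau^{\bx}_i$ be the sequence of $n{+}1$ states visited when reading $v_i$ from $q_i$, set $\bfx := \seq{\tau^{\bx}_1 \\ \vdots \\ \tau^{\bx}_p}$ and $f_j := \reduction{\seq{\tau^{\bx}_1[j:j+1] \\ \vdots \\ \tau^{\bx}_p[j:j+1]}}$; then $h = \reduction{\seq{\tau^{\bx}_1[1,n+1] \\ \vdots \\ \tau^{\bx}_p[1,n+1]}}$, and it suffices to check $f_j \in \psi(w[j])$. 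This is immediate when $w[j]=1$ (the letters $v_i[j]$ realize the pairs). When $w[j]=0$, the chain forces the label at position $j$ of every move strictly before the first $\good$ seen at position $j$ to be $\neutral$ (a $\bad$ there would break well-definedness, a $\good$ there would contradict minimality); listing the pairs of $f_j$ with their first-appearance indices $i_1 < \cdots < i_q$, letting the good pair be the one realized by the first $\good$ move, the inequality $i_l < i_{l'}$ for $l < l'$ guarantees that every pair preceding the good pair first occurs strictly before that first $\good$ move, hence is realized by a $\neutral$-labelled letter; taking these letters, the $\good$-labelled letter of the first $\good$ move for the good pair, and arbitrary realizing letters for the later pairs, yields a witness for $f_j \in \psi(0)$. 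Lemma~\ref{lem:decompose-frontiers} and the morphism property then give $h \in \psi(w)$.

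The main obstacle, in both directions, is exactly this alignment between the linear order of pairs inside a factor frontier (order of first appearance) and the temporal order of moves: it is what allows the $\neutral^{*}\good$ pattern built into $\psi(0)$ to transport to and from the wins-word dynamics, and it forces a somewhat delicate choice of which move's letter is used for each pair — in particular, in the ``only if'' direction one cannot simply take the letter of the first move realizing a pair, since that move may precede the good move while the pair itself is the good pair.
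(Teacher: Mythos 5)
Your proposal is correct, and one half of it coincides with the paper while the other half takes a different route. For the direction from \(h \in \psi(w)\) to the existence of the witnesses \(q_i, v_i\), you do exactly what the paper does: factor \(\psi(w)=\psi(w[1])\star\cdots\star\psi(w[n])\), apply Lemma~\ref{lem:decompose-frontiers} to get a tableau in \((Q^{n+1})^p\) together with column frontiers \(f_j\in\psi(w[j])\), and read off the letters from the definitions of \(\psi(0)\) and \(\psi(1)\); your verification of the chain \(w \rightsquigarrow \cdots \rightsquigarrow 1^n\) is just a more explicit version of the paper's one-line conclusion. For the converse direction your argument genuinely differs: the paper proceeds by induction on \(n\), splitting \(w=w'b\) and \(v_i=v_i'a_i\), invoking the induction hypothesis for \(h_{w'}\) and \(h_b\), and checking \(h_{w'}\star h_b\to h\); you instead argue globally, building the full tableau of runs, showing directly that every column frontier lies in \(\psi(w[j])\), and then invoking the \emph{converse} direction of Lemma~\ref{lem:decompose-frontiers} to conclude \(h\in\psi(w)\). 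Note that the appendix proof of that lemma only details the direction you do not need here; the converse is stated in the lemma and is immediate (the triples \((\tau_i[1],\tau_i[k],\tau_i[k+1])\) witness each partial product), so your use of it is legitimate, but be aware you are leaning on a direction the paper never spells out. Your global argument has the merit of making explicit the alignment between first-appearance order in a column frontier and the temporal order of moves, and in particular the subtlety that the \(\good\) pair must be witnessed by the letter of the first \(\good\) move rather than by its first occurrence --- a point the paper's inductive base case \(n=1\) glosses over --- whereas the paper's induction buys a shorter write-up that only ever composes two frontiers at a time. A cosmetic remark: your labels ``if''/``only if'' are swapped relative to the statement's phrasing, though the mathematical content of both directions is present and correct.
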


\subsection{Using frontiers to decide \reachtogether}

Now that we have defined a finite semigroup to abstract finite
intervals of columns of population strategies, we use it to obtain
witnesses of existence of a winning strategy. We show that winning
strategies can be summarised by two frontiers, one representing the
first few columns, and the other representing a somewhat periodic
pattern in the remaining columns.

\begin{definition}
	A frontier is \emph{initial} if all its pairs are in \(\{q_{\text{init}}\} \times Q\).
	A frontier \(g\) is \emph{\(\omega\)-iterable} if there exist
        \(x_1, \ldots, x_p, y_1, \ldots, y_r, z_1, \ldots, z_r\in Q\) such that
	\( g = \seq{x_1, x_1 \\ \vdots \\ x_p, x_p \\ y_1, z_1 \\
            \vdots \\ y_r, z_r} \)
          with 
          \(z_i \in \{x_1, \ldots, x_p, y_1, \ldots, y_{i-1}\}\) for all \(i\).
\end{definition}

When a sequence of paths in the automaton is sliced into intervals of
columns, themselves abstracted into frontiers, the first one is
initial.  An \(\omega\)-iterable frontier starts with
\((x_i, x_i)\)-pairs that can be all concretized
by one infinite move. The next ones have to be concretized
by iterating families of infinite moves, covering the slices associated with the
\(\omega\)-iterable frontier in increasing order. 

\begin{example}
  Consider the automaton in
  Figure~\ref{fig:positive-regular-instance-with-node-label}, and a
  winning strategy described in
  Figure~\ref{fig:possible-visited-states-regular}. Consider frontiers
  $f$ and $g$ as described on  Figure~\ref{fig:possible-visited-states-regular}.
  The first frontier
  is \(f = \seq{q_0, q_2 \\ q_0, q_4 \\ q_0, q_3}\) and is initial,
  the second one is
  \(g = \seq{q_2, q_2 \\ q_4, q_4 \\ q_3, q_4 \\ q_3, q_3}\) and is
  \(\omega\)-iterable. 
  
  \begin{figure}[H]
        \begin{subfigure}{0.35\textwidth}
      \begin{tikzpicture}[AUT style]
		\node[state, initial]  (0) {};
		\node[state, above right=of 0]  (1) {};
		\node[state, right=of 1]  (2) {};
		\node[state, below right = of 0]  (3)  {};
		\node[state, right=of 3]  (4) {};

		\node[below=of 0, yshift=1.2cm]  (0a) {\(q_0\)};
		\node[below=of 1, yshift=1.2cm]  (1a) {\(q_1\)};
		\node[below=of 2, yshift=1.2cm]  (2a) {\(q_2\)};
		\node[below=of 3, yshift=1.2cm]  (3a) {\(q_3\)};
		\node[below=of 4, yshift=1.2cm]  (4a) {\(q_4\)};
		
		\draw[->, color=Green2] (0) edge node[above left] {$\good$} (1);
		\draw[->,color=Blue2](0) edge node[below left] {$\neutral$} (3);
		\draw[->,in=75, out=105, loop, color=Red2] (3) edge node[above,yshift=-2pt] {$\bad$} ();
		\draw[->,in=75, out=105, loop, color=Blue2] (4) edge node[above,yshift=-3pt] {$\neutral$} ();
		\draw[->, color=Green2](3) edge node[above] {$\good$} (4);
		\draw[->,bend left=45, looseness=0.75,color=Blue2] (1) edge node[above,yshift=-2pt] {$\neutral$} (2);
		\draw[->, bend left=45, looseness=0.75,color=Green2]
                (2) edge  node[below] {$\good$} (1);
                
                \node[below=of B, yshift=-10mm] (C) {}; 
\end{tikzpicture}
      \caption{The labelled automaton.}
      \label{fig:positive-regular-instance-with-node-label}
    \end{subfigure}
 \hfill
    \begin{subfigure}{0.6\textwidth}
\begin{tikzpicture}
	
	\node (04) at (0,2.4) {$q_0$};
	
	\foreach \x in {1,...,6}
	{
		\pgfmathparse{mod(\x,2) ==0 ? 1 : 0}
		\ifthenelse {\pgfmathresult>0}{
			\pgfmathparse{int(\x-1)}
			\node (\x4) at (1.2*\x,2.4) {$q_2$};
			\draw[->, color=Blue2] (\pgfmathresult4) edge node[above, yshift=-3pt] {\tiny$\neutral$} (\x4);	
	}{
		\pgfmathparse{int(\x-1)}
		\node (\x4) at (1.2*\x,2.4) {$q_1$};
		\draw[->, color=Green2] (\pgfmathresult4) edge node[above, yshift=-3pt] {\tiny$\good$} (\x4);
	};
	}

	\foreach \x in {1,...,6}
	{
		\foreach \y  in {0,...,3}
		{
			\pgfmathparse{\x+2*\y-8 < 0 ? 1 : 0}
			\ifthenelse {\pgfmathresult>0}{
				\node (\x\y) at (1.2*\x,0.6*\y) {$q_3$};
				\ifthenelse{\x>1}
				{
					\pgfmathparse{int(\x-1)}
					\draw[->, color=Red2] (\pgfmathresult\y) edge node[above, yshift=-3pt] {\tiny$\bad$} (\x\y);	
				}{};
			}{};
			
			\pgfmathparse{\x+2*\y-8 == 0 ? 1 : 0}
			\ifthenelse {\pgfmathresult>0}{
				\node (\x\y) at (1.2*\x,0.6*\y) {$q_4$};
				\pgfmathparse{int(\x-1)}
				\draw[->, color=Green2] (\pgfmathresult\y)  edge node[above, yshift=-3pt] {\tiny$\good$} (\x\y);
			}{};	
			
			\pgfmathparse{\x+2*\y-8 > 0 ? 1 : 0}
			\ifthenelse {\pgfmathresult>0}{
				\node (\x\y) at (1.2*\x,0.6*\y) {$q_4$};
				\pgfmathparse{int(\x-1)}
				\draw[->, color=Blue2] (\pgfmathresult\y) edge node[above, yshift=-3pt] {\tiny$\neutral$} (\x\y);
			}{};
		}
	}

	\foreach \y  in {0,...,3}
	{
		\node (0\y) at (0,0.6*\y) {$q_0$};
		\draw[->, color=Blue2] (0\y) edge node[above, yshift=-3pt] {\tiny$\neutral$} (1\y);	
	}
	
	\node[below= of 32, yshift=-3mm, xshift=-3mm] (vd) {\LARGE $\vdots$};
	
	\foreach \y  in {0,...,4}
	{
		\node (dots\y) at (8, 0.6*\y) {\Large $\cdots$};
	}

	\draw[bend left,densely dotted] (04.north west) edge node[above] {$f$} (24.north east);
           \draw[densely dotted] (00.south west) -- (04.north west);
    \draw[densely dotted] (20.south east) -- (24.north east);
    \draw[bend left,dashed] (24.north west) edge node[above] {$g$} (44.north east);
    \draw[dashed] (20.south west) -- (24.north west);
    \draw[dashed] (40.south east) -- (44.north east);
    \draw[bend left,dashed] (44.north west) edge node[above] {$g$} (64.north east);
    \draw[dashed] (40.south west) -- (44.north west);

\end{tikzpicture}
\caption{A possible winning population strategy.}
      \label{fig:possible-visited-states-regular}
    \end{subfigure}
    \caption{A positive instance of \reachtogether, and frontiers on its winning strategy.}
  \end{figure}
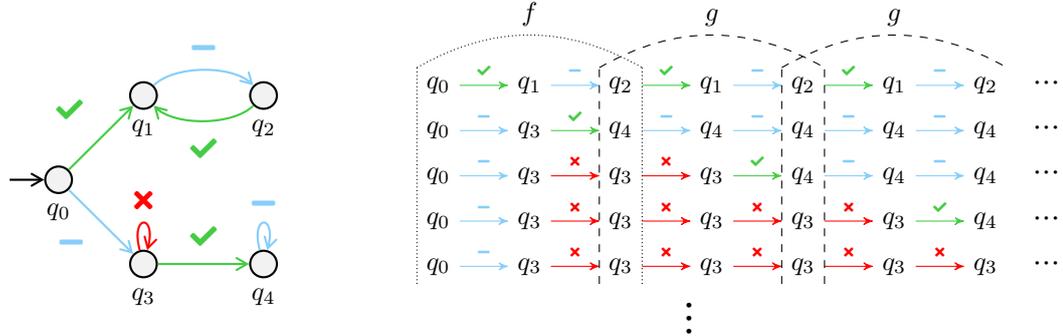
\end{example}



We start with the observation that if two frontiers can be composed,
then the order of appearance of states on their ``interface'' must be the
same (proof in Appendix~\ref{app:order}).

\begin{restatable}{lemma}{LemSameOrder}
	\label{lem:f-g-same-order}
	Let $f, g \in \calF$ with $f = \seq{x_1, y_1 \\ \vdots \\ x_r,
          y_r}$ and
        $g = \seq{x'_1, y'_1 \\ \vdots \\ x'_s, y'_s}$.  Assume
        \(f * g \to h\) for some $h \in \calF$. Then we have
        $\reduction{\seq{y_1 \\ \vdots \\ y_r}} = \reduction{\seq{x'_1
            \\ \vdots \\ x'_s}}$.
\end{restatable}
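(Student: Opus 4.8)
I would start by unfolding the hypothesis. By definition of $\star$, since $f \star g \to h$ there exist triples $(u_1, v_1, w_1), \ldots, (u_p, v_p, w_p) \in Q^3$ with $f = \reduction{\seq{u_1, v_1 \\ \vdots \\ u_p, v_p}}$, $g = \reduction{\seq{v_1, w_1 \\ \vdots \\ v_p, w_p}}$ and $h = \reduction{\seq{u_1, w_1 \\ \vdots \\ u_p, w_p}}$. Only the first two equalities matter; the frontier $h$ plays no role. The plan is then to show that both $\reduction{\seq{y_1 \\ \vdots \\ y_r}}$ and $\reduction{\seq{x'_1 \\ \vdots \\ x'_s}}$ equal $\reduction{\seq{v_1 \\ \vdots \\ v_p}}$, which gives the conclusion immediately.

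Both of these equalities will follow from one elementary fact about the reduction operator, which I would isolate as a sublemma: for every finite sequence of pairs $\pi = (a_1, b_1), \ldots, (a_p, b_p)$ over $Q$, writing $\reduction{\pi} = (x_1, y_1), \ldots, (x_r, y_r)$, one has $\reduction{\seq{y_1 \\ \vdots \\ y_r}} = \reduction{\seq{b_1 \\ \vdots \\ b_p}}$ and, symmetrically, $\reduction{\seq{x_1 \\ \vdots \\ x_r}} = \reduction{\seq{a_1 \\ \vdots \\ a_p}}$. Granting it, I would apply the second-coordinate identity to $\pi = (u_1, v_1), \ldots, (u_p, v_p)$ (whose reduction is $f$) to get $\reduction{\seq{y_1 \\ \vdots \\ y_r}} = \reduction{\seq{v_1 \\ \vdots \\ v_p}}$, and the first-coordinate identity to $\pi' = (v_1, w_1), \ldots, (v_p, w_p)$ (whose reduction is $g$) to get $\reduction{\seq{x'_1 \\ \vdots \\ x'_s}} = \reduction{\seq{v_1 \\ \vdots \\ v_p}}$, and conclude.

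To prove the sublemma I would argue as follows. First, $\{y_i \mid i\}$ and $\{b_k \mid k\}$ are the same set $B$, since the pairs listed in $\reduction{\pi}$ are exactly the distinct pairs occurring in $\pi$; so it only remains to see that $B$ is enumerated in the same order by both sides. For $q \in B$ set $\phi(q) = \min\{k \mid b_k = q\}$: this $\phi$ is injective, and $\reduction{\seq{b_1 \\ \vdots \\ b_p}}$ is exactly $B$ listed by increasing $\phi$. The point is that, for each $q \in B$, the earliest pair of $\pi$ with second coordinate $q$ is $(a_{\phi(q)}, q)$, first occurring at index $\phi(q)$ --- because $b_k \neq q$ for all $k < \phi(q)$ while the pair at index $\phi(q)$ is $(a_{\phi(q)}, b_{\phi(q)}) = (a_{\phi(q)}, q)$. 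Since $\reduction{\pi}$ lists the distinct pairs of $\pi$ in order of first occurrence, the first $i$ with $y_i = q$ is the rank of $(a_{\phi(q)}, q)$ in $\reduction{\pi}$, and that rank increases strictly with $\phi(q)$; hence $\reduction{\seq{y_1 \\ \vdots \\ y_r}}$ also lists $B$ by increasing $\phi$, giving the first identity. The second is symmetric, with $\phi'(q) = \min\{k \mid a_k = q\}$ in place of $\phi$. I expect the only delicate point to be this last bookkeeping of ``order of first appearance'' through $\reduction{\cdot}$: reduction on pairs does not commute with projecting onto a coordinate, but it does preserve the order it induces on the image of that coordinate, and making that precise is the whole content of the sublemma.
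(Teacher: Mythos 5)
Your proof is correct and follows essentially the same route as the paper: unfold the definition of $\star$ to obtain the witnessing triples and show that both sides equal the reduction of the middle coordinates. The only difference is that the paper asserts the order-preservation fact ("the same states appear, in the same order") without proof, whereas your sublemma on how $\reduction{\cdot}$ interacts with coordinate projection makes that bookkeeping explicit and is argued correctly.
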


We can now get to the core of the proof: We will first show how to
construct a winning strategy from an initial and an $\omega$-iterable
frontier (Theorem~\ref{th:omega-iterable-can-iterate}).  Then we will
show how to extract such frontiers from an arbitrary winning strategy
(Theorem~\ref{th:strategy-frontiers}).  The two results will give us
the desired characterisation of automata for which there exist winning
population strategies
(Theorem~\ref{th:characterization-positive-instances-with-frontiers}).


A \emph{slice} of length $n \in \NN_{>0} \cup \{\omega\}$ is a finite
sequence of pairs
$(q_i, v_i)_{1 \leq i \leq p} \in \left(Q \times
  \Sigma^n\right)^p$. Note that while a slice is always a finite
sequence, its length can be infinite, i.e, it can contain infinite
words.  Given a sequence of distinct states $x_1, \ldots, x_k$, we say
that the slice starts from $x_1, \ldots, x_k$ if
$\reduction{q_1, \ldots, q_p} = x_1, \ldots, x_k$ and that it ends in
$x_1, \ldots, x_k$ if it has finite length and
$\reduction{\delta(q_1, v_1), \ldots, \delta(q_p,v_p)} = x_1, \ldots,
x_k$.
The \emph{result} of the slice is the wins word $w \in \{0,1\}^n$
such that
$0^n \overset{q_1 \cdot v_1}{\rightsquigarrow} \cdots \overset{q_p
  \cdot v_p}{\rightsquigarrow} w$, if defined.

\newcommand{\slice}{\mathbf{s}}
\newcommand{\bs}{\slice}

We will construct a winning strategy by constructing slices from frontiers and then patching them together.
The patching operation is described in the following lemma, whose
proof is in Appendix~\ref{app:compose}.

\begin{restatable}{lemma}{LemComposeSlices}
  \label{lem:compose-slices}
  Suppose we have a slice $\slice$ starting from
  $x_1, \ldots, x_{p(x)}$ and ending in $y_1, \ldots, y_{p(y)}$ with
  result $w$ and another one $\slice'$ starting from
  $y_1, \ldots, y_{p(y)}$ with result $w'$.  Then we have a slice
  $\slice''$ starting from $x_1, \ldots, x_{p(x)}$ with result $w w'$.
	
  Furthermore if $\slice'$ is finite and ends in
  $z_1, \ldots, z_{p(z)}$, then so does $\slice''$.
  \end{restatable}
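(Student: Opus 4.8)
The plan is to prove the statement by the obvious ``concatenation followed by re-ordering'' construction, using Lemma~\ref{lem:f-g-same-order}-style reasoning only implicitly (the hypothesis already guarantees matching interfaces). Write $\slice = (q_i, v_i)_{1 \le i \le p}$ and $\slice' = (q'_j, v'_j)_{1 \le j \le p'}$, so that $\reduction{q_1,\dots,q_p} = x_1,\dots,x_{p(x)}$, $\reduction{\delta(q_1,v_1),\dots,\delta(q_p,v_p)} = y_1,\dots,y_{p(y)}$, and $\reduction{q'_1,\dots,q'_{p'}} = y_1,\dots,y_{p(y)}$. The naive candidate for $\slice''$ is the concatenation of the rows of $\slice$ with the rows $(q'_j, v_i v'_j)$ obtained by prepending, to each entry of $\slice'$, the $v_i$ whose endpoint $\delta(q_i,v_i)$ equals the source $q'_j$. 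More precisely, for each $j$ pick the (unique up to the needed endpoint) index $i(j)$ with $\delta(q_{i(j)}, v_{i(j)}) = q'_j$ — such $i(j)$ exists because $q'_j$ appears among the $y$'s, hence among the reduced list of endpoints of $\slice$ — and set the new row to $(q_{i(j)}, v_{i(j)} v'_j)$. The slice $\slice''$ is then the sequence of rows of $\slice$ followed by these $p'$ new rows; it has length $n + n'$ where $n = |v_i|$, $n' = |v'_j|$, so it is a well-formed slice of the correct length.

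The key verification steps, in order, are: (1) \textbf{$\slice''$ starts from $x_1,\dots,x_{p(x)}$}: the first $p$ rows already contribute exactly $x_1,\dots,x_{p(x)}$ to $\reduction{\cdot}$, and every source $q_{i(j)}$ appearing in the appended rows is some $q_i$, hence already in that list, so the reduced list of sources of $\slice''$ is unchanged. (2) \textbf{The result of $\slice''$ is $w w'$}: here one feeds $0^{n+n'}$ through the $\rightsquigarrow$ relation row by row. On the first $n$ coordinates the appended rows are exactly the moves $v_{i(j)}$ from the original slice, applied in a possibly permuted order and with repetitions, so by Lemma~\ref{lem:wins-words} (Monotonicity, Invariance under repetitions) processing all $p$ original rows \emph{plus} these replays leaves the first-$n$ block at $w$; on the last $n'$ coordinates, the moves $v'_j$ are applied from states $\delta(q_{i(j)}, v_{i(j)}) = q'_j$ in exactly the order they occur in $\slice'$, so by Lemma~\ref{lem:wins-words} (Composition) that block reaches $w'$; crucially, the original rows of $\slice$ touch only the first $n$ coordinates (they have length $n$), so they cannot corrupt the $w'$ block. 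Assembling, $0^{n+n'} \rightsquigarrow^* w w'$. (3) \textbf{If $\slice'$ is finite and ends in $z_1,\dots,z_{p(z)}$, so does $\slice''$}: the endpoints of the appended rows are $\delta(q'_j, v'_j)$ after reading the $v'_j$ suffix, i.e.\ exactly the endpoints of the rows of $\slice'$, and the endpoints of the first $p$ rows of $\slice''$ are endpoints $\delta(q_i,v_i) = y_\ell$, each of which — since $\slice'$ starts from $y_1,\dots,y_{p(y)}$ and $\slice'$ is a valid slice — already appears as some source $q'_j$ hence, one must be a little careful, is \emph{not} automatically among the $z$'s; so instead one argues that $\reduction{}$ of the full endpoint list equals $\reduction{}$ of the endpoint list of $\slice'$ \emph{after} prepending the $y$-endpoints, and uses that the appended rows realize $\slice'$ faithfully to conclude the reduced endpoint list is $z_1,\dots,z_{p(z)}$ — this needs the observation that every $y_\ell$ endpoint of $\slice$ is hit again later among the appended rows' reading of $v'_j$ only if $z$ covers it, which is the one genuinely delicate point.

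The main obstacle is precisely step (3): making the endpoint bookkeeping match the definition of ``ends in $z_1,\dots,z_{p(z)}$'' without accidentally introducing spurious endpoints from the first $p$ rows. The clean fix is to not append the new rows \emph{after} the old ones but to interleave: for each original row $i$ of $\slice$, immediately follow it by all rows $(q_i, v_i v'_j)$ for the $j$'s with $q'_j = \delta(q_i,v_i)$, in their $\slice'$-order, and only keep one occurrence of each endpoint via $\reduction{}$ at the slice level — but since slices are not reduced, the honest route is to accept a slightly larger slice and rely on the fact that $\reduction{\delta(q_1,v_1),\dots}$ in $\slice''$ equals $\reduction{\delta(q'_1,v'_1),\dots}$ because the $y_\ell$'s are exactly $\reduction{}$ of the $q'_j$-sources, and each such source's own outgoing $v'_j$ lands on a $z$. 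So the endpoint list of $\slice''$, reduced, is the endpoint list of $\slice'$, reduced, which is $z_1,\dots,z_{p(z)}$ by hypothesis. I would present the interleaved construction to make this transparent, then discharge (1)–(3) as above, with Lemma~\ref{lem:wins-words} doing all the work for the result computation.
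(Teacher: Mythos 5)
There is a genuine gap, and it lies exactly where your construction departs from a synchronized merge of the two slices. Your first phase keeps the rows $(q_i,v_i)$ of $\slice$ as they are inside $\slice''$, but a slice of length $|w|+|w'|$ must consist of moves of length $|w|+|w'|$, and the relation $\rightsquigarrow$ (hence the notion of result) is only defined when the move and the wins word have the same length; the claim that ``the original rows of $\slice$ touch only the first $n$ coordinates'' has no meaning in this formalism. If you try to repair this by extending each $v_i$ with some suffix of length $|w'|$, the labels read along that suffix act on the second block, and nothing prevents a $\bad$ label from being read on a position that is still $0$, in which case the result of $\slice''$ is simply undefined. Your ``interleaved'' fix has the same flaw in a different place: attaching to each row $i$ of $\slice$ all rows of $\slice'$ whose source is $\delta(q_i,v_i)$ groups the rows of $\slice'$ by source and hence permutes them; a row of $\slice'$ moved earlier than its original position may read a $\bad$ label on a position of the second block that only a later (skipped) row of $\slice'$ would have turned to $1$. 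The multiplicities of sources in $\slice'$ need not match those of endpoints in $\slice$, so this reordering cannot be avoided by your pairing. The endpoint bookkeeping for the ``furthermore'' clause, which you flag as delicate, is a symptom of the same problem.

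The paper's proof avoids all of this by a two-pointer merge: it keeps indices $\alpha$ into $\slice$ and $\beta$ into $\slice'$, and every appended row of $\slice''$ is of the form $(q_i, v_i v'_j)$ where one of the two components is the \emph{current} row of its slice and the other is a \emph{replay} of an already-processed row of the other slice with matching endpoint/source. Thus on one block the wins word advances exactly as in the corresponding slice (in its original order, so definedness is inherited), while on the other block the move is a repetition of a row already applied, which leaves that block unchanged by the invariance-under-repetitions item of Lemma~\ref{lem:wins-words}. The hypothesis that $\slice$ ends in $y_1,\ldots,y_{p(y)}$, the sequence from which $\slice'$ starts, guarantees (via the equality of the reduced orders, in the spirit of Lemma~\ref{lem:f-g-same-order}) that at each step at least one of the two cases applies, so the merge makes progress; the invariant maintained by the merge gives both the result $ww'$ and, when $\slice'$ is finite, the required reduced endpoint sequence $z_1,\ldots,z_{p(z)}$. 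To salvage your write-up you would need to replace both of your constructions by such a merge in which the rows of $\slice'$ are never reordered and only already-applied rows are ever repeated.
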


This allows to state the first implication of the main result. We only
sketch the proof here, and defer the full proof to Appendix~\ref{app:iterable}.
  
\begin{restatable}[Frontiers to strategy]{theorem}{ThmFrontStrat}
  \label{th:omega-iterable-can-iterate}
  Assume there exist $f,g \in \psi(0^+)$ such that:
  \begin{itemize}
    \begin{minipage}[t]{.3\linewidth}\item \(f\) is initial,
    \item \(g\) is \(\omega\)-iterable,
    \end{minipage}
    \begin{minipage}[t]{.7\linewidth}
    \item \(f * g \to f\),
    \item \(g * g \to g\).
    \end{minipage}
  \end{itemize}\smallskip
  Then there exists a winning population strategy in \(\langle \calA,\lambda \rangle\).
\end{restatable}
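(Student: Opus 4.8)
The plan is to turn the algebraic data $(f,g)$ into an explicit winning strategy by an inductive construction of longer and longer slices, gluing them with Lemma~\ref{lem:compose-slices}. First I would unpack what the four hypotheses give us concretely. Since $f \in \psi(0^+)$, say $f \in \psi(0^{k})$, Lemma~\ref{lem:morphism-describes-slices-of-winning-paths} provides a finite slice $\bs_f$ of length $k$, starting from $q_{\text{init}}$ (because $f$ is initial), whose result is $1^{k}$, and likewise $g \in \psi(0^{m})$ yields a slice $\bs_g$ of length $m$ with result $1^{m}$; because $g$ is $\omega$-iterable, $g$ has the shape $\seq{x_1,x_1 \\ \vdots \\ x_p,x_p \\ y_1,z_1 \\ \vdots \\ y_r,z_r}$ with each $z_i$ occurring earlier, which — together with $g*g\to g$ — is exactly what lets a single family of infinite moves realize the ``loop part'' $(x_i,x_i)$ while the ``transient part'' $(y_i,z_i)$ gets absorbed back into already-seen states. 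Using Lemma~\ref{lem:f-g-same-order}, the condition $f*g\to f$ forces the interface of $f$ (its sequence of $y$-components, reduced) to coincide with the start $\reduction{x'_1,\dots}$ of $g$, and $g*g\to g$ forces $g$'s interface to be its own start; these compatibility facts are what make the slices glueable.

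The heart of the argument is to build, for each $n$, a finite slice $\bs_n$ of length $k + n\cdot m$ starting from $q_{\text{init}}$, ending in the state tuple that matches the start of $g$, with result $1^{k+nm}$, and to do so in a way that $\bs_{n+1}$ extends $\bs_n$. The base case is $\bs_0 = \bs_f$: here I would use that $f*f\to f$ is not assumed, so instead I rely on $f*g\to f$ to see that $f$ already ``ends where $g$ starts'', i.e. $\bs_f$ ends in the tuple $x'_1,\dots,x'_s$ (reduced) that is the start of $\bs_g$. For the inductive step, I take $\bs_n$ of length $k+nm$ and append an instance of the $g$-slice extended to the right by infinite (length-$\omega$) moves on the newly revealed columns; concretely, using Lemma~\ref{lem:morphism-describes-slices-of-winning-paths} again, $g*g\to g$ lets me produce a slice of length $m$ starting from and ending in the start tuple of $g$, with result $1^m$, which Lemma~\ref{lem:compose-slices} patches onto $\bs_n$ to give $\bs_{n+1}$. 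Then I take the limit: the slices $\bs_n$ converge (column by column they stabilize) to an infinite slice $\bs_\infty$ starting from $q_{\text{init}}$ whose result is $1^\omega$; reading off its rows $(q_i,v_i)$ and concretizing them back to moves $\move_i \in \Sigma^\omega$ (padding shorter finite words by the infinite continuations guaranteed by $\omega$-iterability of $g$) gives the population strategy. Finally I invoke Lemma~\ref{lem:wins-words}(\textsf{Winning strategies as wins words}): the sequence of wins words induced by this strategy starts at $0^\omega$, is well-defined (no $\bad$ is ever seen on a $0$, since every slice used came from $\psi(\cdot)$ which forbids that), and converges to $1^\omega$, hence the strategy is winning.

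The main obstacle I expect is the bookkeeping around \emph{infinite-length slices} and the precise handling of the $\omega$-iterable structure. A length-$n$ slice only tells us about the first $n$ columns, but a genuine move is an infinite word, so when I append the $(n{+}1)$-st block of $g$ I must simultaneously commit to infinite continuations on columns $> k+nm$ that (a) are consistent with what later blocks of $g$ will demand and (b) never produce a $\bad$ on a position still marked $0$. This is exactly where $\omega$-iterability is essential: the $(x_i,x_i)$ part says there is one infinite move looping at each $x_i$ (so it can be played forever on the "frontier" columns with only $\neutral$), and the requirement $z_i \in \{x_1,\dots,x_p,y_1,\dots,y_{i-1}\}$ says the transient rows always feed back into a state we already know how to continue. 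I would isolate this as a sub-lemma: ``from an $\omega$-iterable $g$ with $g*g\to g$ one can build, for every $n$, a length-$nm$ slice starting and ending in the start tuple of $g$ with result $1^{nm}$, all coherent as $n$ grows,'' proved by induction on $n$ using Lemma~\ref{lem:compose-slices} and Lemma~\ref{lem:morphism-describes-slices-of-winning-paths}; the rest of the theorem is then a clean gluing of $\bs_f$ with this family and a passage to the limit.
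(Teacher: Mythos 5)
Your first half follows the paper closely (slices from Lemma~\ref{lem:morphism-describes-slices-of-winning-paths}, interface matching via Lemma~\ref{lem:f-g-same-order} applied to $f*g\to f$ and $g*g\to g$, and the family of finite slices $\bs_n$ from $q_{\mathrm{init}}$ back to the start tuple of $g$ with result $1^{k+nm}$, glued with Lemma~\ref{lem:compose-slices}). The gap is in the passage from this family to an actual strategy. You assert that ``the slices $\bs_n$ converge (column by column they stabilize) to an infinite slice $\bs_\infty$'', but nothing in your construction supports this: Lemma~\ref{lem:compose-slices} builds $\bs_{n+1}$ by interleaving rows of $\bs_n$ with rows of the $g$-slice, a given row of $\bs_n$ may be reused several times with different suffixes, new rows are inserted at varying places, and the number of rows grows unboundedly with $n$. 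There is no stated (or true, in general) sense in which $\bs_{n+1}$ ``extends'' $\bs_n$ row by row, so the limit object is not defined; and even if some limit sequence of rows existed, you would still owe an argument that each position $N$ is turned to $1$ by a row of \emph{finite} index and that no $\bad$ is ever read on a $0$ beyond the span covered by the finite slices --- your remark that ``every slice used came from $\psi(\cdot)$'' only controls the columns inside each slice, not the infinite continuations you pad with. Your proposed sub-lemma (coherent length-$nm$ slices for $g$) does not address either point.

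The paper avoids any limit. For each $N$ it builds a \emph{complete} finite block of infinite moves: it first appends to $\bs_N$ the transient part of the $g$-slice to land in the loop states $x_1,\dots,x_p$ (result $1^{k+\ell N}w$), and then pads on the right with $\bs_\omega=(q_i^g,(v_i^g)^\omega)_{1\le i\le m_x}$, where $m_x$ is the last row realizing $(x_p,x_p)$; only this loop part can be $\omega$-iterated, and it gives a well-defined result $w_x^\omega$, hence an infinite slice from $q_{\mathrm{init}}$ with result $1^{k+\ell N}ww_x^\omega$. The winning strategy is then the concatenation of these blocks for $N=1,2,\dots$, and the conclusion uses the monotonicity and invariance-under-repetitions items of Lemma~\ref{lem:wins-words} to show the induced wins words remain well-defined and dominate $1^{k+\ell N}0^\omega$ after block $N$, hence converge to $1^\omega$. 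To repair your argument you would either have to prove a genuine stabilization statement for your $\bs_n$ (which the composition lemma does not give) or switch, as the paper does, to the per-$N$ block construction with the explicit $\omega$-iterated padding and the monotonicity argument.
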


\begin{proof}[Sketch of proof]
  We translate the frontiers $f$ and $g$ into slices using
  Lemma~\ref{lem:morphism-describes-slices-of-winning-paths}. We show
  that we can win on every position using the following process: we
  consider chunks of positions from left to right one by one. On each
  one, we apply the slice given by $g$ to win on those positions. The
  initial frontier $f$ lets us complete this slice into a finite
  sequence of moves: roughly speaking, we use the slice given by $f$
  to extend it to the left and infinitely many iterations of prefixes
  of the slice of $g$ to extend it to the right (which is possible
  thanks to $\omega$-iterability).  We patch those slices together
  using Lemma~\ref{lem:compose-slices}.
\end{proof}

We now need to show the other implication, i.e., that a winning strategy yields suitable frontiers. The proof will use Ramsey's theorem on
infinite graphs, which we recall below.

\begin{theorem}[Ramsey's theorem on infinite graphs~\cite{Ramsey}]
  \label{th:ramsey}
  In an infinite complete graph with edges coloured by finitely many
  colours, there is an infinite clique whose edges all have the same
  colour.
\end{theorem}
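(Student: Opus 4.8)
The plan is to prove the infinite Ramsey theorem by the classical ``nested refinement'' (diagonal) argument, followed by the pigeonhole principle on the finite colour set. First I would reduce to a countably infinite vertex set: any infinite complete graph contains a countably infinite induced complete subgraph, and a monochromatic infinite clique found there is also a monochromatic infinite clique of the original graph. So from now on I fix a countable vertex set $V$ and a finite set $C$ of colours.

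Next I would build, by induction on $n$, a sequence of distinct vertices $v_0, v_1, v_2, \ldots$, a decreasing chain of infinite sets $V = V_0 \supseteq V_1 \supseteq V_2 \supseteq \cdots$, and colours $c_0, c_1, c_2, \ldots \in C$, as follows. Given an infinite set $V_n$, pick any $v_n \in V_n$ and partition $V_n \setminus \{v_n\}$ according to the colour of the edge between $v_n$ and each of its members. Since $C$ is finite and $V_n \setminus \{v_n\}$ is infinite, at least one colour class is infinite; let $c_n$ be (a choice of) such a colour and let $V_{n+1}$ be the corresponding class. This step can always be carried out, so the construction never stalls. The key invariant is that whenever $i < j$ one has $v_j \in V_{i+1}$, and hence the edge $\{v_i, v_j\}$ has colour $c_i$; in other words, on the induced subgraph over $\{v_0, v_1, v_2, \ldots\}$ the colour of an edge depends only on its smaller endpoint.

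Finally I would apply the pigeonhole principle to the map $i \mapsto c_i$, which takes values in the finite set $C$: some colour $c$ is equal to $c_i$ for infinitely many indices, say $i_0 < i_1 < i_2 < \cdots$. Then $\{v_{i_0}, v_{i_1}, v_{i_2}, \ldots\}$ is infinite, and every edge between two of its vertices $v_{i_a}$ and $v_{i_b}$ with $i_a < i_b$ has colour $c_{i_a} = c$. This is exactly the desired monochromatic infinite clique.

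I do not expect a genuine obstacle here, as the statement is a classical fact usually just cited. The only points that need a word of care are the reduction to a countable vertex set and the fact that the inductive construction never gets stuck, both of which follow from the elementary observation that an infinite set split into finitely many parts must have an infinite part, together with the (harmless, over a countable graph) use of dependent choice to select $v_n$ and $V_{n+1}$ at each stage. An alternative route is to first settle the two-colour case directly and then induct on $|C|$ by amalgamating all but one colour into a single colour; but the direct argument above already yields the general statement, and it is the version I would present.
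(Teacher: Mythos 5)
Your argument is correct: the reduction to a countable vertex set, the nested refinement producing $v_0,v_1,\ldots$ with shrinking infinite sets $V_n$ and colours $c_n$ depending only on the smaller endpoint, and the final pigeonhole step on $i \mapsto c_i$ together constitute the standard and complete proof of the infinite Ramsey theorem for edge colourings. Note, however, that the paper does not prove this statement at all --- it is invoked as a classical result with a citation to Ramsey's original article --- so there is no proof in the paper to compare against; your write-up is simply a correct self-contained justification of the cited fact, following the usual textbook route.
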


This use of Ramsey's theorem is common in the study of the interplay
between finite semigroups and infinite words: if we have an infinite
sequence of elements of a finite semigroup, we can cut it into a
finite prefix and infinitely many factors that all evaluate to the
same (idempotent) element in the semigroup.  This idea appears
already, for instance, in the complementation construction for Büchi
automata by the eponymous author~\cite{Buchi60}.

\begin{restatable}[Strategy to frontiers]{theorem}{ThmStratFront}
  \label{th:strategy-frontiers}
  If there is a winning population strategy in \(\calG(\A)\), then
  there exist 
  \(f,g \in \psi(0^+)\)
  such that
  \begin{itemize}
    \begin{minipage}[t]{.3\linewidth}\item \(f\) is initial,
    \item \(g\) is \(\omega\)-iterable,
    \end{minipage}
    \begin{minipage}[t]{.7\linewidth}
    \item \(f * g \to f\),
    \item \(g * g \to g\).
    \end{minipage}
  \end{itemize}\smallskip
\end{restatable}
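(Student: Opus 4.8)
For this step, here is how I would argue.

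The plan is to stop reading the strategy row by row (one move at a time) and read it instead column by column (one population size at a time). Fix an arbitrary winning strategy $\sigma=(\move_m)_{m\in\NN_{>0}}$. By the first item of Lemma~\ref{lem:wins-words}, winning means the sequence of wins words $0^\omega\overset{\move_1}{\rightsquigarrow} w_1\overset{\move_2}{\rightsquigarrow}\cdots$ is well defined and converges to $1^\omega$; in particular each column $c$ is won at some finite move $t(c)$, with $\lambda(\move_{t(c)}[:c])=\good$ and $\lambda(\move_m[:c])=\neutral$ for $m<t(c)$. For $0\le i<j$ I set
\[
  \Phi(i,j)\ =\ \reduction{\big(\delta(q_{\mathrm{init}},\move_m[:i]),\ \delta(q_{\mathrm{init}},\move_m[:j])\big)_{m\ge 1}}\ \in\ \F ,
\]
a genuine (finite) frontier since $\F$ is finite. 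Three facts to record. (a) $\Phi(i,j)\in\psi(0^{j-i})$: apply Lemma~\ref{lem:morphism-describes-slices-of-winning-paths} to the finite slice $\big(\delta(q_{\mathrm{init}},\move_m[:i]),\ \move_m[i+1:j]\big)_{1\le m\le M}$ for $M$ large enough that every column in $\llbracket i+1;j\rrbracket$ is won within the first $M$ moves and every pair of $\Phi(i,j)$ already occurs; the result of that slice is then $1^{j-i}$ (no $\bad$ is ever read on a $0$, since $\sigma$ is non-losing) and its frontier is $\Phi(i,j)$. (b) $\Phi(i,k)\in\Phi(i,j)\star\Phi(j,k)$ for $i<j<k$, witnessed by the triples $\big(\delta(q_{\mathrm{init}},\move_m[:i]),\delta(q_{\mathrm{init}},\move_m[:j]),\delta(q_{\mathrm{init}},\move_m[:k])\big)_{m\le M}$ for $M$ large. (c) $\Phi(0,j)$ is always initial, since $\delta(q_{\mathrm{init}},\move_m[:0])=q_{\mathrm{init}}$.

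Next I extract $f$ and $g$. Since $\delta(q_{\mathrm{init}},\move_m[:\cdot])$ takes only finitely many values, I can \emph{diagonalise}: starting from $\NN_{>0}$ I repeatedly pass to an infinite subset of columns on which $\delta(q_{\mathrm{init}},\move_1[:\cdot])$, then $\delta(q_{\mathrm{init}},\move_2[:\cdot])$, \ldots\ is constant, and then take a diagonal infinite set $D=\{d_1<d_2<\cdots\}$ such that for every move $m$ there is a state $\theta_m$ with $\delta(q_{\mathrm{init}},\move_m[:d_\ell])=\theta_m$ for all $\ell\ge m$. I colour each pair of elements of $D$ by its $\Phi$-value and apply Ramsey's theorem (Theorem~\ref{th:ramsey}) to get an infinite $R\subseteq D$ on which the colour is a single frontier $g$; a further pigeonhole pass makes $\Phi(0,r)$ a single frontier $f$ for $r\in R$. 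Taking $r<r'<r''$ in $R$, fact~(b) together with monochromaticity gives $g\star g\to g$ (from $\Phi(r,r'')\in\Phi(r,r')\star\Phi(r',r'')$) and $f\star g\to f$ (from $\Phi(0,r')\in\Phi(0,r)\star\Phi(r,r')$), while $f,g\in\psi(0^+)$ by~(a) and $f$ is initial by~(c). All that remains is to see that $g$ is $\omega$-iterable, and this is the delicate part.

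For $\omega$-iterability the pair of endpoints in $R$ must be chosen carefully. I first pick the left endpoint $d_\alpha\in R$ with $\alpha$ large enough that $\{\theta_m : m\le\alpha\}$ is the whole set $\Theta$ of stabilised states; then, the states occurring at column $d_\alpha$ across all moves being finitely many, I pick the right endpoint $d_\beta\in R$ with $\beta$ large enough that for every such state $z$ the least move $m_z$ with $\delta(q_{\mathrm{init}},\move_{m_z}[:d_\alpha])=z$ satisfies $\delta(q_{\mathrm{init}},\move_{m_z}[:d_\beta])=\theta_{m_z}$. With $g=\Phi(d_\alpha,d_\beta)$, the moves $m\le\alpha$ contribute exactly the diagonal pairs $(\theta_m,\theta_m)$ and contribute them first, so the maximal diagonal prefix $(x_1,x_1),\dots,(x_p,x_p)$ of $g$ satisfies $\{x_1,\dots,x_p\}\supseteq\Theta$; and for every first component $z$ of $g$, the first pair of $g$ with first component $z$ is $(z,\theta_{m_z})$, hence has its second component in $\Theta$. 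Writing the rest of $g$ as $(y_1,z_1),\dots,(y_r,z_r)$, each $z_i$ is a second component of $g$, hence — because $g\star g\to g$ forces, via Lemma~\ref{lem:f-g-same-order}, the reduced sequences of first and of second components of $g$ to coincide — also a first component of $g$; if $z_i\in\Theta\subseteq\{x_1,\dots,x_p\}$ we are done, and otherwise the first pair of $g$ with first component $z_i$ is $(z_i,\theta_{m_{z_i}})$, which is off-diagonal (since $\theta_{m_{z_i}}\in\Theta$ while $z_i\notin\Theta$) and strictly precedes $(y_i,z_i)$, so $z_i=y_j$ with $j<i$ — exactly the $\omega$-iterability condition. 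This establishes Theorem~\ref{th:strategy-frontiers}, and combined with Theorem~\ref{th:omega-iterable-can-iterate} it yields the characterisation of Theorem~\ref{th:characterization-positive-instances-with-frontiers}. I expect this last verification to be the main obstacle: every other step is bookkeeping, whereas making the chain ``$z_i$ is an earlier first component'' close up is precisely what forces the careful two-level choice of $\alpha$ and $\beta$ played off against the move indices.
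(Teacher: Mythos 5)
Your proposal is correct, and its skeleton coincides with the paper's: view the strategy as a grid, abstract pairs of columns into frontiers, apply Ramsey to get a monochromatic infinite set of columns, obtain $f$ initial, $f \star g \to f$, $g \star g \to g$ from triples of selected columns, and get $f,g \in \psi(0^+)$ from Lemma~\ref{lem:morphism-describes-slices-of-winning-paths} applied to a long enough finite prefix of rows. Where you genuinely diverge is on the crux, $\omega$-iterability. The paper proves it a posteriori on the Ramsey set: first that every second component of $g$ occurs no later as a first component, then a contradiction argument that tracks, across infinitely many consecutive pairs of selected columns, the minimal row realizing the first off-diagonal pair and shows this row index increases without bound, clashing with a fixed row realizing a ``new'' diagonal pair. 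You instead front-load the combinatorics: before Ramsey you diagonalise so that along the chosen columns each row's state stabilises to some $\theta_m$, and afterwards you pick the two endpoints in two stages (the left one late enough that the rows up to it realise every stabilised state, the right one beyond the first witnessing row of every state occurring on the left column). I checked the step you flag as delicate: any pair of $g$ whose second component lies outside the stabilised set $\Theta$ can only be contributed by rows beyond the right endpoint, whereas for each left-column state $z$ the pair $(z,\theta_{m_z})$ is contributed by row $m_z$, which your choice of $\beta$ places at or before the right endpoint; hence that pair indeed strictly precedes, and the chain closes (the case $z_i\in\Theta$ being absorbed by the diagonal prefix). Your route only ever inspects one pair of columns of the monochromatic set for iterability (plus triples for idempotency) and reads the $\omega$-iterable shape off directly, at the price of the extra diagonalisation and the quantitative choice of endpoints; the paper's route needs no preprocessing of columns but pays with the more intricate proof by contradiction. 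Both arguments are sound, so this is a legitimate alternative proof of Theorem~\ref{th:strategy-frontiers}.
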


\begin{proof}[Sketch of proof]
  This proof is made of different steps :
  \begin{enumerate}
  \item We build an infinite graph with coloured edges, which depends
    on the winning strategy;
  \item We use Ramsey's Theorem on this graph to extract $f$ and \(g\).
  \end{enumerate}
  
  Let \(\sigma\) be a winning strategy, and consider the
  representation where moves are depicted in successive lines as in
  Figure~\ref{fig:possible-visited-states-regular}. We build a graph
  whose set of vertices is \(\NN\), and whose edge between \(k\) and
  \(\ell\) is coloured by a pair $(f,g)$, where:
  \begin{itemize}
  \item $f$ is the frontier summarizing the strategy between the
    initial column and column $k$,
  \item $g$ is the frontier summarizing the strategy between column
    $k$ and column $\ell$.
  \end{itemize}
  As the labelled automaton has a finite number of states, the
  graph has a finite number of colours, and we can apply Ramsey's
  theorem on it.

  Let \((f,g)\) be the colour of the infinite subgraph given by
  Ramsey. We note that $f$ is an initial frontier, since all moves
  given by $\sigma$ start at $q_{\text{init}}$. By definition of the
  colour of an edge, the right part of $f$ coincides with the left
  part of $g$ as well as the right part of $g$. This allows to infer
  that \(f * g \to f\) and \(g * g \to g\). The $\omega$-iterability
  of $g$ follows from the fact that \(g\) is indeed iterated
  infinitely many times by $\sigma$.
\end{proof}

The full proof is given in Appendix~\ref{app:frontiers}.

\section{The complexity of \reachtogether}
\label{sec:complexity}
In this section, we establish the precise complexity of
\reachtogether.  While
Theorem~\ref{th:characterization-positive-instances-with-frontiers}
provides witnesses for the existence of winning strategies, we now
show how to look for those witnesses in polynomial space. After, we
exhibit a matching complexity lower bound.

\begin{theorem}
\reachtogether is \PSPACE-complete.
\end{theorem}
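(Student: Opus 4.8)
The plan is to prove the two directions separately: membership in \PSPACE\ will follow almost mechanically from Theorem~\ref{th:characterization-positive-instances-with-frontiers}, whereas \PSPACE-hardness requires a genuinely new reduction that strengthens the \coNP-hardness argument sketched above.

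\emph{Upper bound.} By Theorem~\ref{th:characterization-positive-instances-with-frontiers}, the instance is positive iff there are frontiers $f,g \in \psi(0^+)$ with $f$ initial, $g$ $\omega$-iterable, $f \star g \to f$ and $g \star g \to g$. A frontier is a repetition-free sequence of pairs of states, hence has length at most $|Q|^2$, occupies polynomial space, and there are only $2^{O(|Q|^2\log|Q|)}$ of them. The algorithm I would use enumerates all pairs $(f,g)$ of frontiers (each polynomially sized, so this uses only polynomial space) and for each pair checks the four conditions; it accepts iff some pair passes. ``$f$ initial'' and ``$g$ $\omega$-iterable'' are immediate syntactic checks. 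For ``$f \star g \to h$'' I would observe that a witnessing sequence of triples $(x_i,y_i,z_i)$ is exactly an accepting run of the finite automaton whose states are progress counters $(i,j,k)$ into the prefixes of $f$, $g$, $h$ — recording that the reduced projections built so far are precisely $f[:i]$, $g[:j]$, $h[:k]$ — with one transition for each triple $(x,y,z)\in Q^3$ whose three pair-projections each lie in the ``current or next'' position of $f$, $g$, $h$ (advancing the corresponding counter when a new position is hit); then $f \star g \to h$ holds iff the terminal counter is reachable from the initial one in this polynomial-size graph, so this test runs in polynomial time.

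The only delicate point is deciding $f \in \psi(0^+)$. Since $\psi$ is a morphism into the finite semigroup $(2^\calF,\star)$ (Lemma~\ref{lem:semigroup}), $\psi(0^+)=\bigcup_{n\ge 1}\psi(0)^n$; by associativity, a frontier $f$ lies in $\psi(0)^n$ iff there are frontiers $d_1,\dots,d_n$ with $d_1\in\psi(0)$, $d_n=f$, and $d_i \star g_i \to d_{i+1}$ for some $g_i \in \psi(0)$ for each $i<n$. Hence $\psi(0^+)$ is exactly the set of frontiers reachable, in the directed graph on $\calF$ whose edges are $d \to d'$ whenever $d \star g \to d'$ for some $g \in \psi(0)$, from the frontiers that themselves belong to $\psi(0)$. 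Membership of a single frontier in $\psi(0)$ is polynomial-time decidable (guess the distinguished position of the $\good$-transition; the transition/label constraints then decouple row by row), and by the previous paragraph the edge relation has polynomial-size, polynomial-time-checkable certificates (one frontier $g$). Since this graph has $2^{O(|Q|^2\log|Q|)}$ vertices of polynomial size, reachability in it is decidable in polynomial space — by Savitch's theorem, or by a nondeterministic walk that stores only the current vertex plus a binary step counter bounded by the number of frontiers. Assembling these pieces yields $\reachtogether \in \PSPACE$.

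\emph{Lower bound.} For \PSPACE-hardness I would reduce from a canonical \PSPACE-complete problem such as TQBF (in the more automata-flavoured spirit of the \coNP argument, universality of an NFA over a binary alphabet or intersection non-emptiness of a family of DFAs would also be natural targets). As in the \coNP reduction the population size $N$ indexes ``which prefix length the coalition is currently responsible for'', but the labelled automaton must now be designed so that winning at successive population sizes forces the coalition to march through an exponentially long, polynomial-space-bounded computation: $\neutral$-transitions encode configurations still to be certified, $\good$-transitions certify one configuration, and $\bad$-transitions — absent from the \coNP construction, which is precisely why that restricted case stays in \coNP — forbid moves that violate consistency between consecutive configurations. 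I expect the design of this gadget and the proof that a winning population strategy exists exactly when the source instance is positive to be the main obstacle; the upper bound, by contrast, is little more than an assembly of Theorem~\ref{th:characterization-positive-instances-with-frontiers} with elementary reachability arguments. Combining the two bounds gives \PSPACE-completeness.
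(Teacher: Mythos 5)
Your upper bound is essentially the paper's own argument: guess (or enumerate) the pair $(f,g)$, check the four conditions of Theorem~\ref{th:characterization-positive-instances-with-frontiers} directly, and decide $f\in\psi(0^+)$ by reachability in the exponential-size graph on $\calF$ whose edges are given by composition with an element of $\psi(0)$, concluding with Savitch's theorem; this is exactly Lemma~\ref{lem:PSPACEimagePsi} and Proposition~\ref{prop:inPSPACE}, and your extra observation that $f\star g\to h$ is even checkable in polynomial time via a progress-counter graph is a harmless refinement.

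The lower bound, however, is a genuine gap, not a proof. You correctly identify that $\bad$-transitions are what must be exploited beyond the \coNP\ construction, but you only state an intention (``reduce from TQBF or a similar problem'', ``$\bad$-transitions forbid inconsistent moves'') and explicitly defer the gadget design and the correctness argument, which is precisely the substantive half of the theorem. The paper's reduction is from termination of a unary-space-bounded deterministic Turing machine and rests on several specific ideas that your sketch does not supply: configurations are encoded letter by letter as blocks $0^i10^{m-i-1}$ of wins-word bits; ``reading'' a letter is a sequence of transitions with a single $\bad$ at the bit encoding that letter (so it is only survivable if that bit is already $1$), while ``writing'' places a single $\good$ there; determinism of the machine is what guarantees the coalition can never write two different letters at the same position, so every reachable wins word stays below the encoding of the true run; and a separate branch, enabled only by reading a letter of $F\times\Sigma$, is the unique mechanism that can turn the remaining $0$-bits (all the non-distinguished bits of every block) to $1$. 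Without an explicit automaton and an argument of this kind — in particular, an argument that the coalition cannot win by deviating from the simulated run, and that all positions (not just the ones carrying a $1$ in the encoding) can be won exactly when a final state is reached — \PSPACE-hardness is not established, and the \coNP-completeness of the $\bad$-free fragment shows the missing ingredient is essential rather than routine.
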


\subsection{Complexity upper bound}
To begin with, we observe that deciding if a frontier is in $\psi(0^+)$ comes down to exploring a graph of exponential size.

\begin{lemma}
	\label{lem:PSPACEimagePsi}
	Given a frontier $f \in \calF$, one can check in polynomial space whether $f \in \psi(0^+)$.
\end{lemma}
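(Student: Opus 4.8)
The plan is to reduce the membership test $f \in \psi(0^+)$ to a reachability question in an exponential-size graph whose vertices can be represented in polynomial space, and then invoke Savitch's theorem (or directly the standard $\mathsf{NPSPACE} = \mathsf{PSPACE}$ argument). By definition, $f \in \psi(0^+)$ means there is some $n \ge 1$ with $f \in \psi(0^n) = \psi(0)^n$ (the $n$-fold product in the frontier semigroup $(2^\calF, \star)$). First I would unfold this product using Lemma~\ref{lem:decompose-frontiers}: $f \in \psi(0)\star\cdots\star\psi(0)$ ($n$ factors) iff there is a sequence of tuples $\bfx = (\tau^\bx_1, \ldots, \tau^\bx_p) \in (Q^{n+1})^p$ whose ``first-and-last'' reduction is $f$ and whose consecutive-column reductions each lie in $\psi(0)$. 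This suggests processing the tuple columns one at a time from left to right, remembering only a bounded amount of state.

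The concrete search space I would use is the following. A ``configuration'' keeps track of (i) the current frontier accumulated so far, i.e. the reduction of the pairs (first column, current column) seen up to now, together with (ii) enough information to know, for each row, which letter-class ($\good$, $\neutral$, or $\bad$) has been witnessed so far in the $\psi(0)$-certificate for that row — actually, since $\psi(0)$ requires exactly: no $\bad$ before the first $\good$, and at least one $\good$, it suffices to remember a single bit per row indicating whether a $\good$ has already occurred. Since a frontier has at most $|Q|^2$ pairs (no repetitions), a configuration is a sequence of at most $|Q|^2$ triples in $Q \times Q \times \{0,1\}$, which has polynomial size. The initial configuration corresponds to the frontier $f$'s first column being fixed: we guess, for each of the $\le |Q|^2$ rows of $f$, an initial state; this must agree with the ``first'' component of $f$. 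A single step guesses the next column of states and the letters connecting the previous column to it, checks that each transition is legal in $\delta$, updates the $\good$-bit per row according to $\lambda$, checks no row sees a $\bad$ while its $\good$-bit is still $0$, and updates the accumulated frontier by reducing in the new (first, current) pairs. The target condition is: after some number $n \ge 1$ of steps, every row has its $\good$-bit set to $1$, and the accumulated frontier equals $f$ exactly (including ordering, by Lemma~\ref{lem:f-g-same-order}-style bookkeeping of appearance order).

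The key point is that this is reachability in a directed graph $G$ whose vertices are these polynomial-size configurations; hence $G$ has at most exponentially many vertices, each describable in $\mathrm{poly}(|\calA|)$ bits, and the edge relation is checkable in polynomial time. Reachability in such a graph is solvable nondeterministically in polynomial space (guess the path configuration by configuration, never storing more than two configurations at once, with a step counter bounded by $|V(G)|$ which needs only polynomially many bits), and by Savitch's theorem $\mathsf{NPSPACE} = \mathsf{PSPACE}$, so $f \in \psi(0^+)$ is decidable in polynomial space.

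The main obstacle I expect is getting the bookkeeping of the reduction operator $\reduction{\cdot}$ exactly right: one must ensure that the configuration genuinely records the order of first appearance of pairs (so that at the end it can be tested for equality with $f$ as an ordered sequence, not just as a set), and one must argue that bounding the number of ``rows'' tracked by $|Q|^2$ loses no generality — i.e. that if a valid certificate tuple $\bfx$ exists at all, one exists in which the column-wise reductions never blow up, so that a polynomial-width configuration suffices throughout. This follows from the fact that all the objects involved ($f$, the intermediate $\psi(0)$-factors, the running accumulated frontier) are frontiers and hence have $\le |Q|^2$ entries, and from Lemma~\ref{lem:decompose-frontiers} applied with the unification lemma (Lemma~\ref{lem:unify-product}), which guarantees a common refining tuple exists without needing more than $|Q|^{n+1}$-style width — but the crucial observation is that we never need to materialise the full width $p$ of $\bfx$ at once: we only ever need the current column and the running reduction, both of bounded size. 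Making that reduction-of-width argument precise, while keeping the correspondence with Definition~\ref{def:morphism-arena} of $\psi$ faithful, is where the care is needed; the rest is a routine Savitch argument.
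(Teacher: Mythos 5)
Your overall architecture---nondeterministic on-the-fly reachability in an exponentially large graph whose vertices have polynomial-size descriptions, followed by Savitch's theorem---is the same as the paper's, and the first component of your configuration (the accumulated frontier, i.e.\ the reduction of the (first column, current column) pairs) is exactly the paper's vertex set. The genuine gap is in your second component: the per-row $\good$-bits encode the wrong condition. By Definition~\ref{def:morphism-arena}, the constraint attached to each letter $0$ is a constraint \emph{within a single column step}: among the reduced pairs of that one factor, together with the chosen letters, some index $j$ must carry a $\good$ and every earlier index must carry $\neutral$. It is \emph{not} a condition accumulated along each row across columns. Your step only checks that the guessed transitions exist in $\delta$, and your acceptance condition instead demands that every row eventually sees a $\good$ with no earlier $\bad$ in that same row; this is neither necessary nor sufficient for $f \in \psi(0^+)$. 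Concretely, take $Q=\{q_0,q_1\}$ with a $\neutral$-labelled transition $q_0 \to q_1$ and a $\good$-labelled loop on $q_1$, and $f=(q_0,q_1)$: any certificate for $f$ forces all rows to start in $q_0$, so the first column consists only of $\neutral$-pairs and is never in $\psi(0)$, hence $f \notin \psi(0^+)$; yet your machine accepts with the single row $q_0 \to q_1 \to q_1$ (the row's $\good$-bit is set at the second column and the accumulated frontier is $f$). Conversely, a two-row element of $\psi(0)$ whose first row is $\good$ and whose second row is a $\neutral$-transition leading to a state from which no $\good$ is reachable would be rejected by your target condition, although it belongs to $\psi(0)$.

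The repair is to drop the bits entirely and enforce the $\psi(0)$ condition locally at each step: a step from accumulated frontier $g$ to $h$ is legal if and only if there exists $g' \in \psi(0)$ with $g \star g' \to h$, a test that is itself doable in polynomial space; then $f \in \psi(0^+)$ if and only if $f$ is reachable in this graph from some element of $\psi(0)$, by associativity (Lemma~\ref{lem:semigroup}) and Lemma~\ref{lem:decompose-frontiers}. This is precisely the paper's proof. It also dissolves the width issue you were worried about: no per-row data is carried between steps, so only frontiers (at most $|Q|^2$ pairs each) are ever stored, and the Savitch argument goes through unchanged.
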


\begin{proof}
	We build the following graph, of exponential size.
	Vertices are frontiers, and edges go from each frontier to the ones that can be obtained from it by composition with an element of $\psi(0)$.
	Formally, $G = (\calF, E)$, with $(g,h) \in E$ if and only if there exists $g' \in \psi(0)$ such that $g \star g' \to h$.
	This condition is easily checked in polynomial space. 
	
	Clearly $f \in \psi(0^+)$ if and only if there is a path in $G$ from an element of $\psi(0)$ to $f$. This can be checked by guessing a path in $G$ on the fly. Since one can store a vertex and check the existence of an edge in polynomial space, we get a non-deterministic polynomial space algorithm. We conclude using Savitch's theorem.
\end{proof}

\begin{proposition}
  \label{prop:inPSPACE}
  \reachtogether is in \PSPACE.
\end{proposition}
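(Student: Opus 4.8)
The plan is to turn Theorem~\ref{th:characterization-positive-instances-with-frontiers} into a guess-and-check procedure. The first observation is that a frontier, being a sequence of pairwise distinct elements of $Q \times Q$, has length at most $|Q|^2$, hence admits a representation of size polynomial in the input; in particular a Turing machine can write down a nondeterministically guessed frontier using polynomial space. The algorithm therefore guesses two frontiers $f$ and $g$ and then verifies the four conditions of Theorem~\ref{th:characterization-positive-instances-with-frontiers}: that $f$ and $g$ lie in $\psi(0^+)$, that $f$ is initial, that $g$ is $\omega$-iterable, and that $f \star g \to f$ and $g \star g \to g$. It accepts iff some guess passes all tests; correctness is then exactly Theorem~\ref{th:characterization-positive-instances-with-frontiers}.

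It remains to see that each test is feasible in polynomial space. Membership $f \in \psi(0^+)$ and $g \in \psi(0^+)$ is exactly what Lemma~\ref{lem:PSPACEimagePsi} provides. Initiality of $f$ is a trivial scan: every pair of $f$ must have first component $q_{\text{init}}$. For $\omega$-iterability of $g = (a_1,b_1),\dots,(a_m,b_m)$, there are only $m+1 \le |Q|^2+1$ candidate split points $p$; for each one tests in polynomial time that $a_i = b_i$ for all $i \le p$ and that $b_{p+i} \in \{a_1,\dots,a_{p+i-1}\}$ for all $i \ge 1$, which is the definition instantiated with $x_i = a_i$ and $(y_i,z_i) = (a_{p+i},b_{p+i})$. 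Finally, to decide $f \star g \to h$ for explicitly given frontiers $f,g,h$, I would reduce to reachability in the graph whose vertices are triples $(i,j,k)$ with $0 \le i \le |f|$, $0 \le j \le |g|$, $0 \le k \le |h|$, recording how far the reductions of the three projections $(x,y)$, $(y,z)$, $(x,z)$ have progressed. There is an edge labelled by a triple of states $(a,b,c)$ out of $(i,j,k)$ precisely when $(a,b)$ either already occurs among the first $i$ pairs of $f$ (then the first coordinate is unchanged) or equals the $(i{+}1)$-th pair of $f$ (then it increases by one), and likewise $(b,c)$ controls the second coordinate with respect to $g$ and $(a,c)$ the third with respect to $h$. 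Because the reduction operator keeps exactly the first occurrences in order of appearance, the set of distinct pairs produced by any prefix of a triple sequence is always a prefix of the corresponding target frontier; hence there is a walk from $(0,0,0)$ to $(|f|,|g|,|h|)$ if and only if there exist triples whose three projections reduce to $f$, $g$ and $h$, that is, if and only if $f \star g \to h$. This graph has $O(|Q|^6)$ vertices and edges testable in polynomial time, so the test runs in deterministic polynomial time, hence in polynomial space.

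Putting the pieces together yields a nondeterministic polynomial-space algorithm recognizing the positive instances of \reachtogether, and Savitch's theorem gives that \reachtogether is in \PSPACE. I expect the only mildly delicate point to be the soundness and completeness of the $\star \to$ test: it relies on the fact that the reduction operator deletes all repetitions, not merely consecutive ones, so that progress in each of $f$, $g$ and $h$ is monotone and a triple of positions suffices as state.
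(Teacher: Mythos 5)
Your proposal is correct and follows essentially the same route as the paper: guess $f,g$, check initiality, $\omega$-iterability, $f\star g\to f$, $g\star g\to g$, use Lemma~\ref{lem:PSPACEimagePsi} for membership in $\psi(0^+)$, and conclude via Savitch's theorem. The only difference is that you spell out an explicit polynomial-time reachability test for $f\star g\to h$ (which is sound, by the prefix-monotonicity of the reduction operator you note), where the paper simply asserts these checks are easy in polynomial space.
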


\begin{proof}
  We non-deterministically guess $f, g \in \calF$ so that $f$ is initial, $g$ is $\omega$-iterable, $f \star g \to f$ and $g \star g \to g$.
	Those conditions can easily be checked in \PSPACE.
	
	Then, we check that there exist $k,l \in \NN_{>0}$ such that $f \in \psi(0^k)$ and $g \in \psi(0^l)$, i.e., if $f,g \in \psi(0^+)$. This condition can also be checked in polynomial space by Lemma~\ref{lem:PSPACEimagePsi}.
\end{proof}

\subsection{Complexity lower bound}

We match the \PSPACE upper bound with a lower bound, reducing from the
termination problem for deterministic Turing machines (DTMs) with unary
bounded space.  We define DTMs as usual, and simply fix the notations.

We fix a finite alphabet $\Sigma$ and define a deterministic Turing machine  as a tuple $\mathcal{M} = (S_\calM, \Sigma, \delta_\calM,s_{\text{init}}, F)$, with $\delta_\calM : S_\calM \times \Sigma \to S_\calM \times \Sigma \times \set{\leftarrow, \rightarrow}$ the transition, reading a state and a letter, updating both and then moving left or right on the tape. 

A configuration of an $n$-bounded Turing machine is a word of the form $u (s,a) v$ with $u,v \in \Sigma^*$, $s \in S_\calM$ and $a \in \Sigma$, so that $|u| +|v| +1 =n$.
The initial configuration is $(s_{\text{init}},b)b^{n-1}$.

The input is a deterministic Turing machine along with a number $n \in \NN$ in unary.
The question is whether the run from the initial configuration eventually reaches a configuration with a state in $F$.
This problem is well-known to be \PSPACE-complete.

The intuition of the reduction is the following. We see configurations
of the Turing machine as words of fixed length over an alphabet
$\Gamma = \Sigma \times (\Sigma \times S_\calM) \cup \set{\#}$, with
$\#$ a fresh letter. We then in turn encode those letters as words of
the form $0^i 1 0^{|\Gamma|-i-1}$ over $\set{0,1}$.  Note that all
those words have the same length $|\Gamma|$.  The sequence of
configurations of the run of the DTM can then be seen as a
(potentially infinite) word over $\set{0,1}$, two consecutive
configurations being separated by a $\#$.  Since the machine is
deterministic, for all $i \geq n +1$, the $i$-th letter (as a word
over $\Gamma$) is determined by the $(i-n-1)$-th, $(i-n)$-th and
$(i-n+1)$-th letters.

We build an automaton that computes this sequence by \emph{reading} and \emph{writing}.
Reading a letter $0^i 1 0^{|\Gamma|-i-1}$ means going through a sequence of states with labels $\neutral^i \bad \neutral^{|\Gamma|-i-1}$. This will lead to a loss if the sequence of bits on these positions is of the form $0^j 1 0^{|\Gamma|-j-1}$ with $j \neq i$.
Writing $0^i 1 0^{|\Gamma|-i-1}$ means going through a sequence of states with labels $\neutral^i \good \neutral^{|\Gamma|-i-1}$. If the sequence of bits was $0^{|\Gamma|}$ before, we will obtain the desired sequence. If it was already $0^i 1 0^{|\Gamma|-i-1}$, nothing changes.
This is where the determinism of the machine is important: the automaton will never be able to write two different letters at the same position.
Our automaton will have a path writing the initial configuration. It will also be able to read three consecutive letters at any positions $i-1,i, i+1$, infer the $(i+n+1)$-st letter, skip $n+1$ positions and write that letter.
Finally, it can read a letter from $\Sigma \times F$, and turns all following positions to $1$, as well as the position just before.
Hence if we encounter a final state during the computation, this branch will allow us to turn all positions to $1$ by applying it sufficiently many times.

The detailed reduction and the proof of its correctness can be found
in Appendix~\ref{App:LowerBound}.

\section{Conclusion and discussion}
\label{sec:conclusion}

We have shown \PSPACE-completeness of a new class of repeated games,
tailored to population models.  The problem reduces to a restricted,
but challenging, form of grid tiling problem, which we solve using new
algebraic techniques. This result lets us hope to push the
decidability frontier further: In particular, we may be able to expand
on the techniques developed here to solve the open problem of
parameterized concurrent reachability games on arbitrary finite
arenas, mentioned as open in~\cite{BBM-fsttcs20}. Since \reachtogether
can be formulated as a grid tiling problem, our result might also be
useful to better understand some open tiling problems such as those
mentioned~\cite{BlumensathCC14}.

Beyond deciding the existence of a winning population strategy
\emph{for all population sizes}, determining assumptions on the
population sizes that guarantee the existence of a winning strategy
seems hard. Let us elaborate on that natural extension of
\reachtogether.  Given a labelled automaton
$\langle \calA,\lambda \rangle$ for the utility functions, one can
define its \emph{winning region} as the set of wins words
$w \in \set{0,1}^*$ from which there is a winning population strategy,
i.e., such that there is a sequence
$w_0 \rightsquigarrow w_1 \rightsquigarrow \cdots$ with $w_0 = w$ and
such that for all $i$ there exists $j$ such that $w_j[i] = 0$.
\reachtogether corresponds to deciding whether $0^\omega$ belongs to
the winning region. In light of
Lemma~\ref{lem:morphism-describes-slices-of-winning-paths}, one can
wonder if it is possible to extend our construction and \PSPACE
algorithm to compute the winning region. Indeed, morphism $\psi$
applies to all words in $\set{0,1}^*$.  We provide here a partial,
somewhat surprising, answer: in general, the winning region is not an
$\omega$-regular language (the reader is referred to~\cite{Thomas90}
for definitions and properties of $\omega$-regular languages), as
illustrated by the following example:
\begin{example}
  Let $W \subseteq \set{0,1}^\omega$ be the winning region for the
  population game associated with the automaton from
  Figure~\ref{fig:not-w-reg}.
  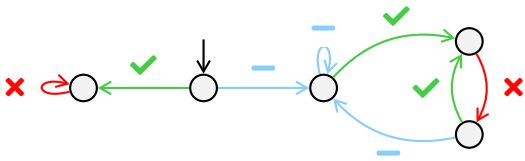
\begin{figure}[h]
		\begin{tikzpicture}[AUT style]
	
	\node[state, initial, initial where=above] (A) {};
	\node[state ,right= of A] (B1)  {};
	\node[state,above right= of B1, yshift=-5mm, xshift=8mm] (B2) {};
	\node[state,below right = of B1,  yshift=5mm, xshift=8mm] (B3)  {};
	
	\draw[->, bend left, color=Green2] (B1) edge[above] node[xshift=3pt,yshift=-1.25pt] {$\good$} (B2);
	\draw[->, bend left, color=Red2] (B2) edge[right] node {$\bad$} (B3);
	\draw[->, bend left, color=Blue2] (B3) edge[below] node[yshift=1pt] {$\neutral$} (B1);
	\draw[->, bend left, color=Green2] (B3) edge[left] node[xshift=2pt] {$\good$} (B2);

	\draw[->, color=Blue2] (A) edge[above] node[yshift=-3pt] {$\neutral$} (B1);
	\draw[->, loop above, color=Blue2] (B1) edge[above] node[yshift=-3.5pt] {$\neutral$} ();

	\node[state,left = of A] (C1)  {};
	
	\draw[->, color=Green2] (A) edge[above] node[yshift=-2pt] {$\good$} (C1);
	\draw[->, loop left, color=Red2] (C1) edge[left] node {$\bad$} ();
\end{tikzpicture}
		\caption{An automaton whose associated winning region is not $\omega$-regular.}
		\label{fig:not-w-reg}
	\end{figure}
	There are two ways to get new 1s in a word. We can either use
        the left branch, which turns the first position to $1$ but
        needs all following positions to be $1$ already.  Or we can
        use the right branch. It lets us put a $1$ on all positions
        which are followed by a $1$. In other words, we can turn the
        last $0$s of all blocks of $0$ to $1$s.  Suppose we start with
        a configuration in $0(10^+)^\omega$. Then there is a winning
        strategy if and only if we can eliminate all $0$s (except the
        first one) by applying the right branch finitely many times.
        This is the case if and only if the blocks of $0$ are
        uniformly bounded.
	
	We obtain that $W \cap (10(01)^+)^\omega$ is the set of words
        with uniformly bounded blocks of $0$, which is not an
        $\omega$-regular language.
\end{example}

\bibliography{biblio}

 \appendix
 \newpage
 \pagebreak

 \section{Proof of Lemma~\ref{lem:unify-product}}
\label{app:unify-product}

\LemUnifyProduct*

\begin{proof}
	Let $f,g,h$ and $\mathbf{x}, \mathbf{y}$ as in the lemma statement.
	Since $f \star g \to h$, there exist
        $\mathbf{w} = \seq{\tau^\bw_1 \\ \vdots \\ \tau^\bw_d} \in
        \left(Q^3\right)^d$ such that
        \[
          f = \reduction{\seq{\tau^\bw_1[:2] \\ \vdots \\
              \tau^\bw_d[:2]}} \qquad\qquad g =
          \reduction{\seq{\tau^\bw_1[2:] \\ \vdots \\ \tau^\bw_d[2:]}}
          \qquad \text{and} \qquad h =\reduction{\seq{\tau^\bw_1[1,3] \\ \vdots \\
              \tau^\bw_d[1,3]}}.
        \]
        The table $\bw$ explains somehow how to glue together $f$ and
        $g$ to get $h$.
        
        Given indices $i,j,k$, we say that $\tau_i^{\bx}$,
        $\tau_j^{\by}$ and $\tau_k^\bw$ are \emph{compatible} if
        $\tau_i^\bx[1] = \tau^\bw_k[1]$,
        $\tau_i^\bx[m] = \tau_j^\by[1] = \tau^\bw_k[2]$ and
        $\tau_j^\by[n] = \tau_k^\bw[3]$. If they are compatible, we
        write $\tau_i^\bx \cdot \tau_j^\by$ for the sequence
        $\tau^\bx_i[1],\ldots, \tau^\bx_i[m], \tau_j^\by[2], \ldots,
        \tau_j^\by[n]$. 
        Note that $\tau_1^{\bx}$, $\tau_1^{\by}$ and $\tau_1^\bw$ are
        \emph{compatible}.

        We construct $\mathbf{z}$ using the following algorithm.

        \begin{algorithm}[H]
          \caption{}
		\begin{algorithmic}
			\State $\alpha, \beta, \gamma \gets 0$
			\State $\mathbf{z} \gets$ empty list		
			\While{$\alpha < r \lor \beta < s \lor \gamma < p$}
			\State $\alpha', \beta', \gamma' \gets \alpha, \beta, \gamma$
			\ForAll{$i,j,k$ with $i \leq \alpha +1$, $j \leq \beta +1$, $k \leq \gamma +1$}
			\If{$\tau_i^{\bx}$, $\tau_j^{\by}$ and $\tau_k^\bw$ are compatible}
			\State Append $\tau_i^{\bx} \cdot \tau_j^{\by}$ to $\mathbf{z}$
			\State $\alpha' \gets \max(\alpha', i+1)$, $\beta' \gets \max(\beta', j+1)$, $\gamma' \gets \max(\gamma', i+1)$
			\EndIf
			\EndFor
			\State $\alpha \gets \alpha'$, $\beta \gets \beta'$, $\gamma \gets  \gamma'$
			\EndWhile
		\end{algorithmic}
		\label{algo1}
              \end{algorithm}
	
              Let
              $\seq{\tau_1^{\bz} \\ \vdots \\ \tau_t^{\bz}} \in
              \left(Q^{m+n-1}\right)^t$
              be the value of $\mathbf{z}$ after each update according
              to the algorithm.
              We say that a tuple $\tau_i^\bx$ (resp. $\tau_j^\by$,
              $\tau_k^\bw$) \emph{appears in } $\mathbf{z}$ (at step
              $t$) at index $\ell$ if $\tau_\ell^\bz[:m] = \tau_i^\bx$
              (resp.  $\tau_\ell^\bz[m:] = \tau_j^\by$,
              $\tau_\ell^\bz[1,m,m+n-1] = \tau_j^\bw$).  The order of
              appearance of tuples in $\bz$ is the order on the
              indices of their first appearances.

	\begin{claim}
		\label{claim:soundness-tech-lemma}
		The algorithm satisfies the following invariant:

                At step $t$, we have:
		\begin{itemize}
			\item $\tau_1^\bx, \ldots, \tau_\alpha^\bx$
                          all appear in $\mathbf{z}$ up to $t$, in that order.  
			
			\item $\tau_1^\by, \ldots, \tau_\beta^\by$ all
                          appear in $\mathbf{z}$ up to $t$, in that order.
			
			\item $\tau_1^\bw, \ldots, \tau_\gamma^\bw$
                          all appear in $\mathbf{z}$ up to $t$, in that order.
		\end{itemize}
	\end{claim}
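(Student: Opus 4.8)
\emph{Approach.} The plan is to read the claim as a loop invariant of the \textbf{while} loop in Algorithm~\ref{algo1} and to prove it by induction on the number $t$ of completed iterations. For $t=0$, before the loop body is ever executed, $\alpha=\beta=\gamma=0$ and $\mathbf{z}$ is empty, so the three statements concern empty prefixes of $\bx$, $\by$, $\bw$ and hold vacuously.

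\emph{Inductive step.} Assume the invariant after $t$ iterations, with counters $\alpha,\beta,\gamma$ and with $\mathbf{z}$ equal to some list $\mathbf{z}_t$, and then execute iteration $t+1$. Its inner \textbf{for all} loop ranges over triples $(i,j,k)$ with $i\le\alpha+1$, $j\le\beta+1$, $k\le\gamma+1$; for each \emph{compatible} triple it appends $\tau_i^\bx\cdot\tau_j^\by$ to $\mathbf{z}$ and bumps up $(\alpha',\beta',\gamma')$, and at the end $(\alpha,\beta,\gamma)$ is replaced by $(\alpha',\beta',\gamma')$. I would argue from two observations. First, $\mathbf{z}$ is only extended, so $\mathbf{z}_t$ is a prefix of the new list $\mathbf{z}_{t+1}$; hence any tuple that appeared in $\mathbf{z}_t$ still appears in $\mathbf{z}_{t+1}$ with the same index of first appearance, and by the induction hypothesis $\tau_1^\bx,\dots,\tau_\alpha^\bx$, $\tau_1^\by,\dots,\tau_\beta^\by$ and $\tau_1^\bw,\dots,\tau_\gamma^\bw$ still appear, in order. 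Second, any increase of a counter during the iteration is witnessed by an appended compatible triple: if $\alpha'>\alpha$ then some $\tau_i^\bx\cdot\tau_j^\by$ with $i+1=\alpha'$ was appended, and by the definition of $\cdot$ this element has $\tau_i^\bx$ as its length-$m$ prefix, $\tau_j^\by$ as its length-$n$ suffix and $\tau_k^\bw$ as its projection onto coordinates $(1,m,m{+}n{-}1)$; unwinding this, together with the symmetric statements for $\beta$ and $\gamma$, shows that every $\bx$-, $\by$- and $\bw$-tuple that newly falls inside the range of the counters appears in $\mathbf{z}_{t+1}$. For the order, the elements appended in iteration $t+1$ all lie after $\mathbf{z}_t$, so their first appearances come after those of the previously recorded tuples, and the compatibility condition — which forces the shared endpoint of $\tau_i^\bx$ and $\tau_{i+1}^\bx$ (via the middle coordinate of $\tau_k^\bw$) to agree — pins the relative order of the new tuples to the one prescribed by $\bx$, $\by$, $\bw$.

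\emph{Main obstacle.} The delicate point is the bookkeeping that keeps the three counters synchronised with the already-emitted prefixes of $\bx$, $\by$, $\bw$: one must verify that the rule $\alpha'\gets\max(\alpha',i+1)$ and its analogues never advance a counter past a tuple without that tuple, and every tuple strictly before it, having been appended to $\mathbf{z}$. This is exactly where compatibility does the work: it guarantees that consecutive tuples of each sequence can be glued at their shared endpoints, so that when the triple-enumeration reaches index $\alpha+1$ (resp. $\beta+1$, $\gamma+1$) its predecessors have already been output. Once $\tau_i^\bx\cdot\tau_j^\by$ and the notion of compatible triple are unwound, the remaining steps — the vacuous base case, the prefix-monotonicity of $\mathbf{z}$, and the coordinate projections of the appended elements — are routine.
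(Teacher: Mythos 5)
Your overall strategy is the same as the paper's: an induction on the iterations of the while loop, with a vacuous base case, the observation that $\mathbf{z}$ only grows (so earlier first appearances and their relative order persist), and the fact that any counter increase is witnessed by a newly appended compatible element whose length-$m$ prefix, length-$n$ suffix and $(1,m,m+n-1)$-projection are exactly the tuples that must newly appear. The paper writes this out only for the $\bx$-sequence and invokes symmetry, splitting the step into two cases (no new index appended, so $\alpha$ is unchanged; or $\tau_{\alpha+1}^{\bx}$ is appended, so $\alpha$ advances by one).

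There is, however, a genuine flaw in how you justify the ordering part of the invariant. You argue that compatibility ``forces the shared endpoint of $\tau_i^{\bx}$ and $\tau_{i+1}^{\bx}$ (via the middle coordinate of $\tau_k^{\bw}$) to agree'' and that consecutive tuples of each sequence ``can be glued at their shared endpoints''. This is not what compatibility says: it relates one row of $\bx$, one row of $\by$ and one row of $\bw$ to each other, and imposes no relation whatsoever between consecutive rows of the same table; consecutive $\tau_i^{\bx}$, $\tau_{i+1}^{\bx}$ need not share any state. The order claim does not need such an argument: since the for loop only ranges over $i \leq \alpha+1$ with $\alpha$ fixed at the start of the pass, the only $\bx$-tuple that can make its \emph{first} appearance during that pass is $\tau_{\alpha+1}^{\bx}$; by the induction hypothesis $\tau_1^{\bx},\ldots,\tau_\alpha^{\bx}$ already appear, in order, in the prefix of $\mathbf{z}$ built before the pass, and anything appended during the pass comes after that prefix, so the order is automatic. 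Relatedly, you take over the pseudocode's update literally (``some $\tau_i^{\bx}\cdot\tau_j^{\by}$ with $i+1=\alpha'$ was appended''): with $i=\alpha+1$ this would push $\alpha'$ to $\alpha+2$ although only $\tau_{\alpha+1}^{\bx}$ has appeared, and your claim that ``every tuple that newly falls inside the range of the counters appears'' is then unwitnessed; the counter must be read (as in the paper's proof, where $\alpha'$ equals $\alpha+1$ in this case) as tracking exactly the largest index whose tuple has been appended.
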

	
	\begin{claimproof}
		The invariant clearly holds at the start since  $\alpha, \beta, \gamma$ are all $0$.
		
		Suppose the invariant is satisfied at the start of a
                passage through the while loop.  We prove that it
                holds at the end of that passage.  We only show the
                first item, the two others are proven symmetrically.
		
		We already know that $\tau_1^\bx, \ldots,
                \tau_\alpha^\bx$ all appear in $\bz$, in that order,
                at the start of the passage through the while loop.
		
		In the for loop, if for all $\tau_i^\bx \cdot
                \tau_j^\by$ that we append to the list,
                $i \leq \alpha$ then all those $\tau_i^\bx$ already
                appeared in $\bz$ and the order of appearance is
                unchanged. Since $\alpha$ remains the same, the
                invariant is maintained.  Otherwise, we append some
                $\tau_i^\bx \cdot \tau_j^\by$ with $i = \alpha+1$, and
                we obtain that
                $\tau_1^\bx, \ldots, \tau_{\alpha+1}^\bx$ all appear
                in $\bz$, in that order. As $\alpha'$ is then equal to
                $\alpha+1$, and $\alpha$ is updated to $\alpha'$ at
                the end of the loop, the invariant is maintained.
	\end{claimproof}

        Note that, by design of the algorithm, for every
        $k \le \gamma$, there is $i \le \alpha$ and $j \le \beta$ such
        that $\tau^\bx_i$, $\tau^\by_j$ and $\tau^\bw_k$ are
        compatible.
        
	\begin{claim}
		\label{claim:termination-tech-lemma}
		Algorithm~\ref{algo1} terminates.
	\end{claim}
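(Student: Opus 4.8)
The plan is to exhibit a bounded, monotone progress measure for the \textbf{while} loop and to show it strictly improves at every iteration whose guard is true.

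First I would record two easy facts. The variables $\alpha,\beta,\gamma$ never decrease, since each is only ever reassigned to a $\max$ with its current value; and they are bounded, since the \textbf{for} loop ranges only over indices $i\le r$, $j\le s$, $k\le p$ (the tuples $\tau^\bx_i,\tau^\by_j,\tau^\bw_k$ are undefined otherwise) and each update has the form $\alpha'\gets\max(\alpha',i+1)$, so $\alpha\le r+1$, $\beta\le s+1$, $\gamma\le p+1$. Consequently the nonnegative integer $\Phi=(r{+}1-\alpha)+(s{+}1-\beta)+(p{+}1-\gamma)$ never increases, and termination reduces to the claim that \emph{every pass through the \textbf{while} body in which $\alpha<r\lor\beta<s\lor\gamma<p$ strictly decreases $\Phi$}, i.e.\ strictly increases $\alpha+\beta+\gamma$.

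Second I would prove this progress step. After the body, $\alpha$ becomes $\max(\alpha,\max\{i+1\})$ where the inner $\max$ is over all \emph{in-range compatible} triples $(i,j,k)$, in-range meaning $i\le\alpha+1$, $j\le\beta+1$, $k\le\gamma+1$ for the values at entry, and symmetrically for $\beta$ and $\gamma$. So it suffices to produce, whenever the guard holds, one in-range compatible triple with $i>\alpha$ or $j>\beta$ or $k>\gamma$. Assume $\gamma<p$; the cases $\alpha<r$ and $\beta<s$ are entirely analogous, again routing through $\bw$. Look at $\tau^\bw_{\gamma+1}$: its left pair $\tau^\bw_{\gamma+1}[:2]$ occurs in $f$ (because $f=\reduction{\tau^\bx_1[1,m],\ldots,\tau^\bx_r[1,m]}=\reduction{\tau^\bw_1[:2],\ldots,\tau^\bw_d[:2]}$) and its right pair $\tau^\bw_{\gamma+1}[2:]$ occurs in $g$ (because $g=\reduction{\tau^\by_1[1,n],\ldots,\tau^\by_s[1,n]}=\reduction{\tau^\bw_1[2:],\ldots,\tau^\bw_d[2:]}$). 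I would then exhibit $i\le\alpha+1$ with $\tau^\bx_i[1,m]=\tau^\bw_{\gamma+1}[:2]$ and $j\le\beta+1$ with $\tau^\by_j[1,n]=\tau^\bw_{\gamma+1}[2:]$; given these, compatibility of $(i,j,\gamma+1)$ is immediate from the definitions (the three endpoint equalities hold because the two pairs share the required states by construction), and since its third coordinate is $\gamma+1$ it forces $\gamma'\ge\gamma+1$: progress.

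The one genuinely delicate point — and the main obstacle — is the existence of those bounded indices $i\le\alpha+1$ and $j\le\beta+1$. This amounts to saying that the prefix $\tau^\bx_1,\ldots,\tau^\bx_\alpha$ already ``covers'' in $f$ the slot occupied by $\tau^\bw_{\gamma+1}[:2]$, up to at most one new slot, and likewise for $\by$ and $g$. To get it I would strengthen the loop invariant of Claim~\ref{claim:soundness-tech-lemma}: because $\reduction{\cdot}$ keeps first occurrences in order, the set of positions of $f$ realized by $\tau^\bx_1,\ldots,\tau^\bx_\alpha$ is always an initial segment $\{1,\ldots,c\}$ of $f$, coinciding with the initial segment realized by $\tau^\bw_1,\ldots,\tau^\bw_\gamma$, and dually for $g$ with $\by$ and $\bw$; maintained through the simultaneous updates of $\alpha,\beta,\gamma$, this ``matching covered prefixes'' invariant says exactly that $\tau^\bw_{\gamma+1}[:2]$ sits at a position $\le c+1$, which is realized by one of $\tau^\bx_1,\ldots,\tau^\bx_{\alpha+1}$. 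Verifying that this strengthened invariant is preserved by one execution of the \textbf{for} loop is where the real work lies; everything else is routine bookkeeping.
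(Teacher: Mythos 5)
Your skeleton (the counters never decrease, are bounded, and strictly progress while the guard holds) is the same as the paper's, but the decisive step is not actually carried out. Everything hinges on producing, whenever the guard holds, an in-range compatible triple, and for that you invoke a strengthened ``matching covered prefixes'' invariant (the pairs of $f$ realized by $\tau^{\bx}_1,\ldots,\tau^{\bx}_\alpha$ and by $\tau^{\bw}_1[:2],\ldots,\tau^{\bw}_\gamma[:2]$ form the same initial segment of $f$, and dually on the $g$-side for $\by$ and the right pairs of $\bw$) which you explicitly defer as ``where the real work lies''. That deferred verification \emph{is} the content of the claim, so the proof has a hole exactly at the crucial point. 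The paper does not need any such strengthened invariant: it picks $\ell$ minimal such that $\tau^{\bw}_\ell[:2]=\tau^{\bx}_{\alpha+1}[1,m]$, or $\tau^{\bw}_\ell[2:]=\tau^{\by}_{\beta+1}[1,n]$, or $\ell=\gamma+1$, and splits on $\ell\le\gamma$ versus $\ell=\gamma+1$. For $\ell\le\gamma$ it only uses the observation (immediate from how tuples are appended) that every $k\le\gamma$ already participates in a compatible triple with some $i\le\alpha$, $j\le\beta$; for $\ell=\gamma+1$, minimality says $\tau^{\bx}_{\alpha+1}[1,m]$ does not occur among the first $\gamma$ left pairs of $\bw$, and since both coverage sets are initial segments of $f$ (order preservation of the reduction), a short counting argument gives the bounded indices $i\le\alpha+1$, $j\le\beta+1$. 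So the lemma you leave unproven is replaced in the paper by this minimal-$\ell$ device, which needs no new loop invariant.

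A second, concrete problem is the claim that the cases $\alpha<r$ and $\beta<s$ are ``entirely analogous, again routing through $\bw$''. If $\tau^{\bx}_{\alpha+1}[1,m]$ is a fresh pair of $f$ while the left column of $\bw$ keeps repeating earlier pairs, the first occurrence of that pair in $\bw$ can lie far beyond index $\gamma+1$, so there is \emph{no} in-range compatible triple containing $\tau^{\bx}_{\alpha+1}$; progress in that pass happens on $\gamma$, not on $\alpha$. Likewise, when $\gamma$ is already maximal and only $\alpha<r$ holds, the argument needs the full-coverage consequence of your invariant \emph{together with} a $g$-side counterpart (to supply $j\le\beta+1$ with $\tau^{\by}_j[1,n]$ matching the right pair), none of which is spelled out. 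The paper's minimal-$\ell$ case split handles this interplay between the three counters uniformly; your sketch, as written, would need both the unproven two-sided invariant and a more careful case analysis to close these gaps.
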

	
	\begin{claimproof}
          We show that at least one of $\alpha, \beta, \gamma$
          increases at every iteration of the while loop.  Let $\ell$
          be the minimal index such that either
          $\tau^\bw_\ell[:2] = \tau^\bx_{\alpha+1}[1,m]$, or
          $\tau^\bw_\ell[2:] = \tau^\by_{\beta+1}[1,n]$ or
          $\ell=\gamma+1$.

          
          First assume that $\ell \le \gamma$. That means that one of
          the two first cases happen. By symmetry, we assume
          w.l.o.g. that
          $\tau^\bw_\ell[:2] = \tau^\bx_{\alpha+1}[1,m]$.  Applying
          the remark after Claim~\ref{claim:soundness-tech-lemma},
          since $\ell \leq \gamma$, there exists $j \le \beta$ such
          that $\tau_j^\by[1,n] = \tau_\ell^\bw[2:]$.  In particular,
          $\tau_{\alpha+1}^\bx$, $\tau_j^{\by}$ and $\tau^\bw_\ell$
          are compatible, and therefore the algorithm appends
          $\tau_{\alpha+1}^\bx \cdot \tau_j^{\by}$ to the list and
          increases $\alpha'$, and thus $\alpha$.

          Assume that $\ell=\gamma+1$. We first argue that
          $\tau^\bw_\ell[:2] \notin \{\tau^\bx_i[1,m] \mid i \le
          \alpha\}$ implies
          $\tau^\bw_\ell[:2] = \tau^\bx_{\alpha+1}[1,m]$.  This is
          because $\bw$ respects the order given by $\bx$: formally,
          $\reduction{\seq{\tau^\bw_1[:2] \\ \vdots
              \\ \tau^\bw_d[:2]}} = f = \reduction{\seq{\tau_1^{\bx}[1,m] \\
              \vdots \\ \tau_r^{\bx}[1,m]}}$.  Suppose
          $\tau^\bw_\ell[:2] \notin \{\tau^\bx_i[1,m] \mid i \le
          \alpha\}$ and
          $\tau^\bw_\ell[:2] \neq \tau^\bx_{\alpha+1}[1,m]$. Then we
          cannot have $\reduction{\seq{\tau^\bw_1[:2] \\ \vdots
              \\ \tau^\bw_d[:2]}} = \reduction{\seq{\tau_1^{\bx}[1,m] \\
              \vdots \\ \tau_r^{\bx}[1,m]}}$, a contradiction.
          
          Therefore there is $i \le \alpha+1$ such that
          $\tau^\bw_\ell[:2] = \tau^\bx_i[1,m]$. Similarly, there is
          $j \le \beta+1$ such that
          $\tau_\ell^\bw[2:] = \tau^\by_j[1,n]$. In particular,
          $\tau_i^\bx$, $\tau_j^\by$ and $\tau_\ell^\bw$ are
          compatible, hence the algorithm appends
          $\tau_i^\bx \cdot \tau_j^{\by}$ to the list and increases
          $\gamma'$, and thus $\gamma$. Note that it may also increase
          $\alpha'$ or $\beta'$ (hence $\alpha$ or $\beta$), whenever
          $i=\alpha+1$ or $j=\beta+1$.
          %
	\end{claimproof}
          
	As the algorithm terminates, in the end $ \alpha = r$,
        $\beta=s$ and $\gamma=p$. By
        Claim~\ref{claim:soundness-tech-lemma}, the list must
        therefore contain all $\tau_i^{\bx}, \tau_j^{\by}$ and
        $\tau_k^\bw$, and their order of appearance is the same as in
        $\mathbf{x}, \mathbf{y}$ and $\mathbf{w}$.
\end{proof}

\section{Proof of Lemma~\ref{lem:semigroup}}
\label{app:semigroup}

\SemiGroup*

\begin{proof}
	Let $F_1, F_2, F_3 \in 2^\calF$.
	We need to show associativity, i.e., $F_1 \star (F_2 \star F_3) = (F_1 \star F_2) \star F_3$.
	We show the left-to-right inclusion, the other one is proven symmetrically.
	
	Let $h \in F_1 \star (F_2 \star F_3)$, there exist $f_1 \in
	F_1$, $f_2 \in F_2$, $f_3 \in F_3$ and
	$g_{23} \in F_2 \star F_3$ such that
	$f_2 \star f_3 \to g_{23}$ and $f_1 \star g_{23} \to h$.
	
	By definition, there exist $\mathbf{x} = \seq{\tau_1^{\bx} \\
          \vdots \\ \tau_r^{\bx}} \in \left(Q^3\right)^r$ such that
        $f_2 = \reduction{\seq{\tau_1^{\bx}[:2] \\ \vdots \\
            \tau_r^{\bx}[:2]}}$,
        $f_3 = \reduction{\seq{\tau_1^{\bx}[2:] \\ \vdots \\
            \tau_r^{\bx}[2:]}}$ and
        $g_{23} = \reduction{\seq{\tau_1^{\bx}[1,3] \\ \vdots \\
            \tau_r^{\bx}[1,3]}}$.  By Lemma~\ref{lem:unify-product},
        since $f_1 \star g_{23} \to h$, there exists
        $\by = \seq{\tau_1^{\by} \\ \vdots \\ \tau_p^{\by}} \in
        \left(Q^4\right)^p$ such that
        $f_1 = \reduction{\seq{\tau_1^{\by}[:2] \\ \vdots \\
            \tau_p^{\by}[:2]}}$ and
        $\reduction{\mathbf{x}} = \reduction{\seq{\tau_1^{\by}[2:] \\
            \vdots \\ \tau_p^{\by}[2:]}}$.
	
	Observe that the same pairs of states appear in
        $\seq{\tau_1^{\by}[2:3] \\ \vdots \\ \tau_p^{\by}[2:3]}$ and
        $\seq{\tau_1^{\bx}[:2] \\ \vdots \\ \tau_r^{\bx}[:2]}$, in the
        same order, hence
        $f_2= \reduction{\seq{\tau_1^{\by}[2:3] \\ \vdots \\
            \tau_p^{\by}[2:3]}}$.  Similarly,
        $f_3= \reduction{\seq{\tau_1^{\by}[3:] \\ \vdots \\
            \tau_p^{\by}[3:]}}$. Let
        $g_{12} = \reduction{\seq{\tau_1^{\by}[1,3] \\ \vdots \\
            \tau_p^{\by}[1,3]}}$.

        The sequence
        $\seq{\tau_1^{\by}[:3] \\ \vdots \\ \tau_p^{\by}[:3]}$ is a
        witness that $f_1 \star f_2 \to g_{12}$, thus
        $g_{12} \in F_1 \star F_2$.  Similarly, the sequence
        $\seq{\tau_1^{\by} \\ \vdots \\ \tau_p^{\by}}$ witnesses that
        $g_{12} \star f_3 \to h$.
	
	As a result, we have $h \in (F_1 \star F_2) \star F_3$.
\end{proof}

\section{Proof of Lemma~\ref{lem:decompose-frontiers}}
\label{app:decompose}

\DecomposeFrontier*

\begin{proof}
	We proceed by induction on $k$.
	For $k=1$ this is clear, simply take $f_1 = f$.
	
	Let $k>1$, suppose the property holds for $k-1$.
	Let $h \in F_1\star \cdots \star F_k$. There exists $g \in F_1\star \cdots \star F_{k-1}$ and $f_k \in F_k$ such that $g \star f_k \to h$.
	
	By induction hypothesis there exist
        $f_1 \in F_1, \ldots, f_{k-1} \in F_{k-1}$ and a sequence of
        tuples
        $\by = \seq{\tau_1^{\by} \\ \vdots \\ \tau_m^{\by}} \in
        \left(Q^{k}\right)^m$ such that
        \(g = \reduction{\seq{\tau_1^{\by}[1,k] \\ \vdots \\
            \tau_m^{\by}[1,k]}}\) and
	\(f_i = \reduction{\seq{\tau_1^{\by}[i:i+1] \\ \vdots \\
            \tau_m^{\by}[i:i+1]}}\) for all $i<k$.
	
	By Lemma~\ref{lem:unify-product}, since $g \star f_{k} \to h$,
        there exist
        $\seq{\tau_1^{\bx} \\ \vdots \\ \tau_p^{\bx}} \in
        \left(Q^{k+1}\right)^p$ such that:
	\[
          \reduction{\mathbf{y}} = \reduction{\seq{\tau_1^{\bx}[:k] \\
              \vdots \\ \tau_p^{\bx}[:k]}},
          \qquad f_k = \reduction{\seq{\tau_1^{\bx}[k:] \\ \vdots \\
              \tau_p^{\bx}[k:]}},\qquad \text{and}\ h =
          \reduction{\seq{\tau_1^{\bx}[1,k+1] \\ \vdots \\
              \tau_p^{\bx}[1,k+1]}}.
	\]
	
		
		
	
	For all $i<k$, since
	$f_i = \reduction{\seq{\tau_1^{\by}[i:i+1] \\ \vdots \\
            \tau_m^{\by}[i:i+1]}}$ and
        $ \reduction{\seq{\tau_1^{\bx}[:k] \\ \vdots \\
            \tau_p^{\bx}[:k]}} = \reduction{\mathbf{y}} =
	\reduction{\seq{\tau_1^{\by}[:k] \\ \vdots \\
            \tau_m^{\by}[:k]}}$, we can infer that
	$f_i = \reduction{\seq{\tau_1^{\bx}[i,i+1] \\ \vdots \\
              \tau_p^{\bx}[i,i+1]}}$.  This concludes our proof.
        \end{proof}

\section{Proof of Lemma~\ref{lem:morphism-describes-slices-of-winning-paths}}
\label{app:morphism-slices}\label{app:morphism}
\LemMorphismSlices*

\begin{proof}

  We start with the left-to-right implication.  Let
  \(w = b_1 \cdots b_n \in \{0, 1\}^n\) and \(h \in \psi(w)\).  Since
  $\psi$ is a morphism, we have
  $\psi(w) = \psi(b_1) \star \cdots \star \psi(b_n)$.  By
  Lemma~\ref{lem:decompose-frontiers}, there exist
  $f_1 \in \psi(b_1), \ldots, f_n \in \psi(b_n)$ and
  $\bx = \seq{\tau_1^{\bx} \\ \vdots \\ \tau_p^{\bx}} \in
  \left(Q^{n+1}\right)^p$ such that
  $h = \reduction{\seq{\tau_1^{\bx}[1,n+1] \\ \vdots \\
      \tau_p^{\bx}[1,n+1]}}$ and
  $f_i = \reduction{\seq{\tau_1^{\bx}[i:i+1] \\ \vdots \\
      \tau_p^{\bx}[i:i+1]}}$ for all $i$.

  By definition of $\psi$, for all $j$ we have letters
  $a_{j,1}, \ldots, a_{j,n}$ such that
  $\delta(\tau_j^{\bx}[i], a_{j,i}) = \tau_j^{\bx}[i+1]$ for all $i$.
  Furthermore, either $b_i = 1$ or there exists $j$ such that
  $\etiquette{\tau_j^{\bx}[i], a_{j,i}}= \good$ and
  $\etiquette{\tau_{j'}^{\bx}[i], a_{j',i}} = \neutral$ for all
  $j' < j$.

  Define $v_j = a_{j,1} \cdots a_{j,n}$ for all $j$. We obtain that
  \(w \overset{\tau_1^{\bx}[1] \cdot v_1}{\rightsquigarrow} \ldots
  \overset{\tau_p^{\bx}[1] \cdot v_p}{\rightsquigarrow} 1^n
  \). Furthermore,
  $\delta(\tau_j^{\bx}[1], v_{j}) = \tau_j^{\bx}[n+1]$ for all $j$.
  Hence
  $h = \reduction{\seq{\tau_1^{\bx}[1], \delta(\tau_1^{\bx}[1], v_{1}) \\ \vdots \\
      \tau_p^{\bx}[1], \delta(\tau_p^{\bx}[1], v_p)}}$, which is what
  we wanted to show (defining $q_i = \tau_i^{\bx}[1]$).
 
  \smallskip It remains to show the right-to-left implication.  Assume
  there exist words \(v_1, \ldots, v_p \in \Sigma^n\) and
  \(q_1, \ldots, q_p \in Q\) such that
\[
  w \overset{q_1 \cdot v_1}{\rightsquigarrow} \ldots \overset{q_p
    \cdot v_p}{\rightsquigarrow} 1^n \qquad \text{and} \qquad h =
  \reduction{\seq{q_1, \delta(q_1, v_1) \\ \vdots \\ q_p, \delta(q_p,
      v_p)}}.
\]
We use an induction on \(n\).  If \(n=1\), we immediately obtain
\(h \in \psi(w)\) from the definition of $\psi$: if $w = 1$ this holds
trivially. If $w=0$ then
$w \overset{q_1 \cdot v_1}{\rightsquigarrow} \ldots \overset{q_p \cdot
  v_p}{\rightsquigarrow} 1$ implies that there exists $j$ such that
$\etiquette{q_j \cdot v_j} = \good$ and
$\etiquette{q_{j'} \cdot v_{j'}} = \neutral$ for all $j'<j$.

Suppose \(n > 1\) and suppose the property holds for $n-1$.  Assume
\(w=w'b \in \{0,1\}^n\), with \(w' \in \{0,1\}^{n-1}\) and
\(b \in \{0,1\}\).  Each word \(v_i\) can be written \(v'_i a_i\) with
\(v'_i \in \Sigma^{n-1}\) and \(a_i \in \Sigma\). By construction, we
know that
\[
  w' \overset{q_1 \cdot v'_1}{\rightsquigarrow} \ldots \overset{q_p
    \cdot v'_p}{\rightsquigarrow} 1^{n-1}
  \qquad \text{and}\qquad
  b \overset{\delta(q_1, v'_1) \cdot a_1}{\rightsquigarrow} \ldots
  \overset{\delta(q_p, v'_p) \cdot a_p}{\rightsquigarrow} 1.
\]
	
Let us call \(h_{w'}\) and \(h_b\) the following:
\[
  h_{w'} = \reduction{\seq{q_1, \delta(q_1, v'_1) \\ \vdots \\ q_k,
      \delta(q_p, v'_p)}} \qquad \text{and} \qquad h_b =
  \reduction{\seq{\delta(q_1, v'_1), \delta(q_1, v'_1a_1) \\
      \vdots \\ \delta(q_p, v'_p), \delta(q_p, v'_p a_p)}}.
\]
By induction hypothesis, \(h_{w'} \in \psi(w')\) and
\(h_b \in \psi(b)\).  It is easily verified that
$h_{w'} \star h_b \to h$ by applying the definition of $\star$ with
the sequence of triples
$\seq{q_1, \delta(q_1, v'_1), \delta(q_1, v'_1 a_1) \\ \vdots \\ q_p,
  \delta(q_p, v'_p), \delta(q_p, v'_p a_p) }$.  As a consequence,
\(h \in \psi(w'b)\).
\end{proof}

\section{Proof of Lemma~\ref{lem:f-g-same-order}}
\label{app:order}

\LemSameOrder*
\begin{proof}
  Since \(f * g \to h\), there exist
  $(\alpha_1, \beta_1, \gamma_1), \dots, (\alpha_p, \beta_p, \gamma_p)
  \in Q^3$ such that
  $f = \reduction{(\alpha_i, \beta_i)_{1 \leq i \leq p}}$ and
  $g = \reduction{(\beta_i, \gamma_i)_{1 \leq i \leq p}}$.  Hence the
  same states must appear on the first coordinate of $g$ and on the
  second coordinate of $f$, and furthermore they must appear in the
  same order:
  $\reduction{(y_i)_{1 \leq i \leq r}} = \reduction{(\beta_i)_{1 \leq
      i \leq p}} = \reduction{(x'_i)_{1 \leq i \leq s}}$.
\end{proof}

\section{Proof of Lemma~\ref{lem:compose-slices}}
\label{app:compose}

\LemComposeSlices*
\begin{proof}
  Let $\slice = (q_i, v_i)_{1 \leq i \leq p}$ and
  $\slice' = (q'_i, v'_i)_{1 \leq i \leq r}$ as in the statement.
	
  We construct the desired slice $\slice''$ iteratively as follows:
  Start with an empty sequence.  We use two indices $\alpha$ and
  $\beta$, both initialised at $1$.  We proceed as follows: while
  $\alpha \leq p$ or $\beta \leq r$, we have two cases: if there
  exists $i \leq \alpha$ such that $\delta(q_i, v_i) = q'_\beta$, we
  append $(q_i, v_iv'_\beta)$ to $\slice''$ and increment $\beta$.  If
  there exists $i \leq \beta$ such that
  $\delta(q_\alpha, v_\alpha) = q'_i$, we append
  $(q_\alpha, v_\alpha v'_i)$ to $\slice''$.  Since $\slice$ ends in
  $y_1, \ldots, y_{p(y)}$, from which $\slice'$ starts, the order of
  first appearance of the states is the same in
  $\delta(q_1,v_1), \ldots, \delta(q_p,v_p)$ and in
  $q'_1, \ldots, q'_r$. As a consequence, at least one of the two
  cases always holds.
	
  As $\alpha$ or $\beta$ increases at each step, the algorithm
  terminates. Furthermore, it maintains the invariants that $\slice''$
  is a slice starting from $\reduction{q_1, \ldots, q_{\alpha-1}}$,
  either infinite or ending in
  $\reduction{\delta(q'_1, v'_1), \ldots, \delta(q'_{\beta-1},
    v'_{\beta-1})}$.  Further, in light of Lemma~\ref{rem:composition}
  (Composition and repetitions),
  it has result $w_\alpha w'_\beta$, where $w_\alpha$ is the result of
  $(q_i, v_i)_{1 \leq i < \alpha}$ and $w_\beta$ the result of
  $(q'_i, v'_i)_{1 \leq i \leq \beta}$.  In the end, since
  $\alpha = m+1$ and $\beta = n+1$, we obtain the result.
\end{proof}

\section{Proof of Theorem~\ref{th:omega-iterable-can-iterate}}
\label{app:iterable}

\ThmFrontStrat*

\begin{proof}
  Since \(f \in \psi(0^k)\) and \(g \in \psi(0^\ell)\) for some
  $k,\ell \in \NN_{>0}$, by
  Lemma~\ref{lem:morphism-describes-slices-of-winning-paths} we have
  slices
  \((q_i^f,v^f_i)_{1 \leq i \leq m} \in \left(Q\times
    \Sigma^k\right)^m\) and
  \((q_i^g,v^g_i)_{1 \leq i \leq n} \in (Q\times \Sigma^\ell)^n\) with
  result respectively $1^k$ and $1^\ell$.
	
  By Lemma~\ref{lem:f-g-same-order}, since $f \star g \to f$ and
  $g \star g \to g$, the sequences $\reduction{\seq{\delta(q^f_1, v^f_1) \\
      \vdots \\ \delta(q^f_m, v^f_m)}}$, $\reduction{\seq{q^g_1 \\
      \vdots \\ q^g_n}}$ and $\reduction{\seq{\delta(q^g_1, v^g_1) \\
      \vdots \\ \delta(q^g_n, v^g_n)}}$ are equal. Let $\mathbf{q}$ be
  that sequence.
		
  Since $f$ is initial, all $q_i^f$ are $q_{\text{init}}$. As a
  consequence, we have a slice $\slice_{f}$ from
  $\seq{q_{\text{init}}}$ to $\mathbf{q}$.  Since
  \((q_i^g,v^g_i)_{1 \leq i \leq n} \in \left(Q\times
    \Sigma^\ell\right)^n\) is a slice from $\mathbf{q}$ to itself with
  result $1^\ell$, by applying Lemma~\ref{lem:compose-slices} we can
  obtain, for every $N \in \NN_{>0}$, a slice $\slice_N$ from
  $\seq{q_{\text{init}}}$ to $\mathbf{q}$ with result
  $1^{k+\ell\cdot N}$.

  Since $g$ is $\omega$-iterable, it is of the form
  $\seq{x_1, x_1 \\ \vdots \\ x_p, x_p \\ y_1, z_1 \\ \vdots \\ y_r,
    z_r}$ with $z_j \in \{x_1, \ldots, x_p, y_i, \ldots, y_{j-1}\}$
  for all $j$.  For each $j$, let $m(j)$ the minimal index such that
  $q_{m(j)}^g = y_j$. The slice $(q_i^g,v^g_i)_{1 \leq i < m(j)}$
  starts from
  $\reduction{\seq{x_1 \\ \vdots \\ x_p \\ y_1 \\ \vdots \\ y_j}}$ and
  ends in
  $\reduction{\seq{x_1 \\ \vdots \\ x_p \\ y_1 \\ \vdots \\ y_{j'}}}$
  with $j' <j$.  All these slices have well-defined results, as
  prefixes of \((q_i^g,v^g_i)_{1 \leq i \leq n}\), which has a result.

  By Lemma~\ref{lem:compose-slices}, we can thus compose such slices
  to obtain a slice from $\mathbf{q}$ to
  $\mathbf{x} = \seq{x_1 \\ \vdots \\ x_p}$ with some result $w$.  For
  all $N \in \NN$, we can apply the lemma again with $\slice_N$ to get
  a slice $\slice'_N$ from $\seq{q_{\text{init}}}$ to
  $\seq{x_1 \\ \vdots \\ x_p}$ with result $1^{k+\ell \cdot N}w$.
	
  Now consider the minimal index $m_x$ such that
  \( (q_{m_x}, \delta(q_{m_x}, v_{m_x})) = (x_p, x_p)\).  The slice
  \((q_i^g,v^g_i)_{1 \leq i \leq m_x}\) has a result $w_x$, and it can
  be iterated to form an infinite slice: Let
  $\slice_{\omega} = (q_i^g,(v^g_i)^\omega)_{1 \leq i \leq m_x}$. It
  is an infinite slice from $\mathbf{x}$ with result $w_x^\omega$.
  For all $N \in \NN$, we can apply Lemma~\ref{lem:compose-slices}
  with $\slice'_N$ and $\slice_{\omega}$ to obtain an infinite slice
  $(q_{\text{init}}, u_{N,i})_{1 \leq i \leq m_N}$ from
  $\seq{q_{\text{init}}}$ with result
  $1^{k+\ell \cdot N} w w_x^\omega$.
	
  As a consequence, for all $N \in \NN$ we have a sequence of moves
  $0^\omega \overset{u_{N,1}}{\rightsquigarrow} \cdots
  \overset{u_{N,m_N}}{\rightsquigarrow} 1^{k+\ell \cdot N} w
  w_x^\omega$.  We can apply this for every $N \in \NN$ consecutively
  to obtain a winning strategy: by monotonicity (see related item in
  Lemma~\ref{lem:wins-words}), after applying the $N$-th sequence, the
  wins word $w_N$ obtained is well-defined and
  $1^{k+\ell \cdot N} 0^\omega \preceq w_N$.  By
  Lemma~\ref{lem:wins-words} (Winning strategies as wins words), we
  have a winning strategy.
\end{proof}

\section{Proof of Theorem~\ref{th:strategy-frontiers}}
\label{app:frontiers}

\ThmStratFront*
\begin{proof}
	
  Assume there is a population winning strategy
  \(\sigma = (\move_i)_{i \in \NN_{>0}}\) in $\calG(\langle \calA,\lambda \rangle)$. Define
  \(q_{i,j}\) the state of the automaton $\calA$ after reading
  \(\move_i[:j]\):
  \[
    q_{i,j} := \delta(q_{\text{\text{init}}}, \move_i[:j])
  \]
  where $q_{\text{init}}$ is the initial state of $\calA$ (note that
  \(q_{i,0} = q_{\text{init}}\)).  We obtain a grid of states as in
  Figure~\ref{fig:possible-visited-states-regular}.
	
  Consider now the infinite complete undirected graph \(G\) with coloured edges, defined as follows:
  \begin{itemize}
  \item Its vertices are \(\NN\);
  \item Let \(k < \ell \in \NN\), the edge $\set{k,\ell}$ is coloured
    by the pair of frontiers \((f_k,g_{k, \ell})\) where
    \[
      f_k = \reduction{\seq{q_{0,0}, q_{0,k} \\ q_{1,0},q_{1,k} \\
          \vdots}} \qquad \text{and}\qquad g_{k,\ell} =
      \reduction{\seq{q_{0,k},q_{0,\ell} \\ q_{1,k},q_{1,\ell} \\
          \vdots}}.
    \]
    Note that $f_k$ is always initial.
    Informally, when moves are written in lines as in
    Figure~\ref{fig:possible-visited-states-regular},
    \(f_k\) is the frontier obtained from columns between \(0\) and
    \(k\), and \(g_{k, \ell}\) is
    the frontier obtained from columns between \(k\) and \(\ell\).
  \end{itemize}

  We apply Ramsey's theorem (recalled as Theorem~\ref{th:ramsey}) to
  $G$. Let \(S \subseteq \NN\) such that the complete
  sub-graph induced by \(S\) has only one colour \((f, g)\).  By
  construction, \(f\) is initial.  To show that \(g * g \to g\),
  consider \(k < \ell < m \in S\), and the sequence of triples
  \(\seq{q_{0,k},q_{0,\ell},q_{0,m} \\ q_{1,k},q_{1,\ell},q_{1,m} \\
    \vdots }\). By construction:
  \[
    \reduction{\seq{q_{0,k},q_{0,\ell}\\ q_{1,k},q_{1,\ell} \\
        \vdots }} = \reduction{\seq{q_{0,\ell},q_{0,m} \\ q_{1,\ell},q_{1,m} \\
        \vdots }} =\reduction{\seq{q_{0,k},q_{0,m} \\ q_{1,k},q_{1,m} \\
        \vdots }} = g
  \]
  It therefore witnesses the fact that \(g * g \to g\).
  A similar proof shows that \(f * g \to f\), by considering columns
  \(0\), \(i\) and \(j\).
	
  Now we show that \(g \stackrel{\text{def}}{=} \seq{x_1, y_1 \\
    \vdots \\ x_n, y_n}\) is \(\omega\)-iterable.  Let $s : \NN \to S$
  be an enumeration of $S$, i.e., an increasing function such that
  $S = s(\NN)$.  We proceed in two steps:
	
  \begin{itemize}
  \item First we show that for all $i \in \llbracket1,n\rrbracket$,
    $y_i \in \{x_1, \dots, x_{i}\}$.  Let
    $i \in \llbracket1,n\rrbracket$, and let $m$ be the minimal index
    such that there exist $k< \ell \in S$ such that
    $(q_{m,k},q_{m,\ell}) = (x_i,y_i)$.  Such an $m$ exists by
    definition of $g$.  Let $p \in S$ such that $p>\ell$.  By
    minimality of $m$, we must have
    $(q_{j,\ell}, q_{j,p}) \in \{(x_1, y_1),\dots, (x_{i-1},
    y_{i-1})\}$ for all $j<m$.  By definition of $S$, we have
    $\reduction{\seq{q_{0,\ell}, q_{0,p} \\ q_{1,\ell}, q_{1,p} \\
        \vdots}} = g$, thus
    $(q_{m,\ell}, q_{m,p}) \in \{(x_1, y_1),\dots, (x_{i},
    y_{i})\}$. In particular we have $y_i \in \{x_1, \dots, x_{i}\}$.
		
  \item Now we assume by contradiction that $g$ is not
    $\omega$-iterable. Since $y_i \in \{x_1, \dots, x_{i}\}$ for all
    $i$, this means that there exist $j<k$ such that:
    \begin{itemize}
    \item $x_i = y_i$ for all $i<j$,
    \item $x_j \neq y_j$ (hence $y_j \in \{x_1,\dots,x_{j-1}\}$),
    \item $x_k = y_k$ with $y_k \notin \{x_1,\dots,x_{k-1}\}$.
    \end{itemize}

    
    For all $\ell \in \NN$ let $\alpha(\ell)$ be the minimal index
    such that
    $(q_{\alpha(\ell),s(\ell)}, q_{\alpha(\ell),s(\ell+1)}) = (x_j,
    y_j)$. We show that $(\alpha(\ell))_{\ell}$ is an increasing
    sequence. Fix $\ell$:
    $(q_{\alpha(\ell),s(\ell)},q_{\alpha(\ell),s(\ell+1)}) =
    (x_j,y_j)$ with $y_j \in \{x_1,\dots,x_{j-1}\}$, and for every
    $i < \alpha(\ell)$,
    $q_{i,s(\ell)} = q_{i,s(\ell+1)} \in \{x_1,\dots,x_{j-1}\}$. This
    implies that $\alpha(\ell+1)>\alpha(\ell)$.

    By assumption, there is $m$ such that
    $(q_{m,s(0)},q_{m,s(1)}) = (x_k,x_k)$. Since
    $x_k \notin \{x_1,\dots,x_{k-1}\}$ and
    $g = \reduction{\seq{q_{0,s(\ell)},q_{0,s(\ell+1)} \\
        q_{1,s(\ell)},q_{1,s(\ell+1)} \\ \vdots }}$ for every $\ell$,
    we deduce $q_{m,s(\ell)} = x_k$ for every $\ell$.

    Let $\ell$ be such that $\alpha(\ell)>m$ (it exists since the
    sequence $(\alpha(\ell))_\ell$ is increasing). Then
    $(q_{m,s(\ell)},q_{m,s(\ell+1)}) = (x_k,x_k)$ with
    $x_k \notin \{x_1,\dots,x_{k-1}\}$.  On the other hand, by
    definition of $\alpha(\ell)$,
    $(q_{\alpha(\ell),s(\ell)},q_{\alpha(\ell),s(\ell+1)}) =
    (x_j,y_j)$ and for every $i<\alpha(\ell)$,
    $(q_{i,s(\ell)},q_{i,s(\ell+1)}) \in
    \{(x_1,y_1),\dots,(x_{j-1},y_{j-1})\}$. This yields a
    contradiction since $j<k$.
  \end{itemize}

    We conclude that $g$ is $\omega$-iterable.
    
    Lastly, we prove that $f$ and $g$ are in $\psi(0^+)$. Let $k \in S$, by definition we have 
    \(
    f = f_k = \reduction{\seq{q_{0,0}, \delta(q_{0,0}, \mu_0[:k]) \\ q_{1,0}, \delta(q_{1,0}, \mu_1[:k])\\ \vdots}}.
    \)
    Since $\sigma$ is a winning strategy, there is an index $i$ such that 
    $0^{n} \overset{q_{0,0} \cdot \mu_0[:k]}{\rightsquigarrow} \ldots
    \overset{q_{i,0} \cdot \mu_{i}[:k]}{\rightsquigarrow} 1^n$.
    By Lemma~\ref{lem:morphism-describes-slices-of-winning-paths}, we obtain that $f \in \psi(0^+)$. The proof for $g$ is similar.
\end{proof}

\section{Proof of complexity lower bound}
\label{App:LowerBound}
\begin{proposition}
	\label{prop:hardness}
	\reachtogether is \PSPACE-hard.
\end{proposition}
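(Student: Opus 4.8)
The plan is to reduce from the halting problem for deterministic Turing machines with a space bound $n$ given in unary, a standard $\PSPACE$-complete problem. Fix such a machine $\mathcal{M} = (S_\calM, \Sigma, \delta_\calM, s_{\mathrm{init}}, F)$ together with $n$, and first normalise $\mathcal{M}$ so that it never leaves its $n$-cell tape and so that final states are absorbing (a configuration with a final state loops on itself forever). A configuration is encoded as a word of length $n+1$ over $\Gamma = \Sigma \cup (S_\calM \times \Sigma) \cup \{\#\}$: the first $n$ letters describe the tape cells, exactly one carrying the head and current state, and the last letter is the separator $\#$. The run of $\mathcal{M}$ then becomes the infinite word $C_0 C_1 C_2 \cdots$. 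Fixing an enumeration $\Gamma = \{\gamma_0, \ldots, \gamma_{|\Gamma|-1}\}$, we further encode $\gamma_i$ as the length-$|\Gamma|$ block $\mathrm{enc}(\gamma_i) = 0^i 1 0^{|\Gamma|-i-1}$, turning the run into an infinite word over $\{0,1\}$, cut into blocks of length $|\Gamma|$, each with a single $1$. The point is determinism: the $\Gamma$-letter at block-position $p$ of the run, for $p > n+1$, is a fixed function $\mathrm{next}(\cdot,\cdot,\cdot)$ of the three letters at block-positions $p-(n+1)-1,\ p-(n+1),\ p-(n+1)+1$ (the cells above it, with the separator accounted for), and separators occur exactly at periodic positions.

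Next I would build a deterministic labelled automaton $\langle \calA, \lambda\rangle$ over $\{0,1\}$; by the remark after the problem definition this amounts to describing a set of (finite and infinite) paths together with their $\{\good,\neutral,\bad\}$-labels, read against a wins word as in Definition~\ref{def:wins-words}: a $\neutral$-transition leaves a bit untouched, a $\good$-transition may turn a $0$ into a $1$, and a $\bad$-transition on a bit still equal to $0$ loses. I set \emph{reading} $\gamma_i$ to be a length-$|\Gamma|$ segment labelled $\neutral^i\,\bad\,\neutral^{|\Gamma|-i-1}$ (safe only if the wins word already carries a $1$ on the $1$-bit of $\gamma_i$) and \emph{writing} $\gamma_i$ the segment $\neutral^i\,\good\,\neutral^{|\Gamma|-i-1}$ (which sets that bit to $1$ and nothing else). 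The automaton then offers exactly three primitive moves: (i) a path that writes $\mathrm{enc}$ of the initial configuration block by block; (ii) for every triple $(\gamma,\gamma',\gamma'')$ of $\Gamma$-letters, a path that reads $\gamma\gamma'\gamma''$ on three consecutive blocks, reads the next $n+1$ blocks (all labels $\neutral$), then writes $\mathrm{next}(\gamma,\gamma',\gamma'')$; (iii) an \emph{accepting} path which, upon reading a block encoding a letter of $\Gamma$ whose state component lies in $F$, turns to $1$ every later bit together with the bit immediately preceding that block. All segments are glued into one automaton of size polynomial in $|\mathcal{M}|$ and $n$ (here unary-encoded $n$ is essential), so the reduction runs in polynomial time.

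For correctness, assume $\mathcal{M}$ reaches $F$. Then a winning strategy is obtained by: using move (i) to place the $1$-bits of $C_0$; repeatedly using move (ii) to place the $1$-bit of each subsequent block of the run, left to right — each such move is safe because the only $\bad$-transitions it performs sit on bits already marked $1$ (by determinism, the written prefix always agrees with the genuine run, so there is never a conflict); and, once enough of the run has been written that a block encoding a final state has appeared — which happens since $\mathcal{M}$ halts and final configurations are absorbing, making such blocks recur periodically — repeatedly applying move (iii) at these recurring occurrences: the "all later bits" of the first occurrence already covers almost everything, and the successive "bit immediately before" freebies, with a few more applications, cover the finitely many remaining initial bits; Lemma~\ref{lem:wins-words} then gives a winning strategy. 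Conversely, if a winning strategy exists, note that a $\good$-transition is the only way to turn a $0$ into a $1$, and the $\good$-transitions of moves (i) and (ii) only place $1$s at positions dictated by the uniquely determined run of $\mathcal{M}$; any deviation would, at some later read, force a $\bad$-transition on a bit still $0$ and lose. Hence to make \emph{all} positions $1$ one must eventually use move (iii), so one must eventually produce a block encoding a state in $F$; again by determinism this block can only occur in the genuine run, so $\mathcal{M}$ reaches $F$.

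The main obstacle is the soundness direction together with the bookkeeping that makes it rigorous: one must show that a winning strategy can never profit from "cheating" — writing a block inconsistent with the run, or reading a block as the wrong letter — by arguing that any such step either immediately loses ($\bad$ on a $0$) or strands the strategy forever (some bit provably can never be turned to $1$ afterwards). Making this precise requires tracking, for each bit position, the exact set of moves that may ever legally touch it, which is precisely where the unary separators and the "bit immediately before" of the accepting gadget are tuned so that the construction is simultaneously complete and sound. The remaining ingredients — the encoding, the polynomial size bound, and the completeness direction — are then routine.
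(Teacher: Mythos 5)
Your overall strategy (reduction from space-bounded DTM termination, letter-by-block encoding $0^i 1 0^{|\Gamma|-i-1}$, ``read'' gadgets $\neutral^i\,\bad\,\neutral^{|\Gamma|-i-1}$ and ``write'' gadgets $\neutral^i\,\good\,\neutral^{|\Gamma|-i-1}$, branches for the initial configuration and for the deterministic successor function) is exactly the paper's, and your soundness sketch (the reachable wins words stay well-shaped and $\preceq$ the encoding of the genuine run, so the final-state block can only ever be read if the machine really halts) matches the paper's argument. However, there is a genuine gap in the completeness direction, located in your accepting gadget (iii). Your gadget, triggered by reading a block whose letter has state in $F$, sets to $1$ ``every later bit together with the bit immediately preceding that block''. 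The earliest block encoding a final-state letter sits at some block position $p$; all later occurrences of such blocks (the absorbing final configuration repeating) lie strictly after $p$, so re-applying (iii) at those occurrences yields nothing new: their ``all later bits'' and their ``bit immediately before'' are already inside the region covered by the first application. Consequently, in every block at positions $0,\dots,p-1$, the $|\Gamma|-1$ zero-bits of the encoding (everything except the single $1$-bit that moves (i)/(ii) write) can never be turned to $1$: no $\good$-transition of any of your three moves ever touches them, except the single extra bit immediately before block $p$. So even when $\mathcal{M}$ halts, your automaton has no winning population strategy, and the claimed ``a few more applications cover the finitely many remaining initial bits'' does not hold.

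The paper closes exactly this hole with an extra pair of branches that you are missing: a path that \emph{writes} ``win'' (a run of $|\Gamma|$ $\good$-transitions) on some position and then immediately \emph{reads} ``win'' (a run of $|\Gamma|$ $\bad$-transitions) on the next position, plus a variant starting at the very first position. Reading ``win'' on the next position is only safe if that position is already all $1$s, which keeps the branch sound; and it allows the all-$1$ region created by the accepting branch to be propagated backwards one whole position at a time, until every bit, including those strictly before block $p$, is set. Your construction needs this (or an equivalent backward-propagation mechanism that rewrites entire earlier positions, not single bits) for the left-to-right implication of the reduction to go through; with it added, the rest of your argument aligns with the paper's proof.
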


\begin{proof}
	Let $\calM = (S_\calM, \Sigma, \delta_\calM,s_{\text{init}}, F)$ be a DTM and let $n \in \NN$.
	
	Let $\Gamma = \Sigma \cup S_\calM \times \Sigma  \cup \set{\#}$, and let $m = |\Gamma|$. Let $\beta$ be a bijection between $\Gamma$ and $\llbracket 0,m-1 \rrbracket$. 
	Define, for all $\gamma \in \Gamma$, $\phi(\gamma) = 0^{\beta(\gamma)} 1 0^{m-1-\beta(\gamma)}$.
	Since $\calM$ is deterministic, it has a single run from $(s_{\text{init}},b)b^{n-1}$.
	Let $c_0, c_1, \ldots$ be the sequence of configurations of that run. It may be finite or infinite. We define an infinite word $\rho \in \Gamma^\omega$ describing this sequence.
	If the run is infinite, $\rho = c_0 \# c_1 \# c_2 \# \ldots$. Otherwise, if $c_k$ is the last configuration, $\rho = c_0 \# c_1 \#  \ldots c_k (\# c_k)^\omega$.
	
	Since all configurations have length $n$ and are determined by the transitions of the machine, there is a function $R : \Gamma^3 \to \Gamma$ such that for all $i \in \NN_{>0}$, $\rho[i + n + 1] = R(\rho[i-1], \rho[i], \rho[i+1])$. 
	Note that this is the case even when the run is finite.
%
	
	We apply the morphism $\phi$ to $\rho$ to obtain a word $w$ on $\set{0,1}$.
	From now on we use the term \emph{bit} to mean a single letter of $w$, and \emph{position} to mean the sequence of $m$ bits between positions $i m$ and $(i+1)m -1$, for some $i \in \NN$.
	We say that a wins word is well-shaped if every position is either $0^m$ or $\phi(\gamma)$ for some $\gamma$. 
	 
	We construct an automaton that computes $w$ as follows.
	We say that the automaton reads a letter $\gamma$ if it goes through a sequence of transitions with labels $\neutral^{\beta(\gamma)} \bad \neutral^{m-\beta(\gamma)-1}$. 
	Assuming the current wins word is well-shaped, this is only possible if it has $\phi(\gamma)$ at this position. 
	The automaton writes $0^{\beta(\gamma)} 1 0^{m-\beta(\gamma)-1}$ means going through a sequence of transitions with labels $\neutral^{\beta(\gamma)} \good \neutral^{m-\beta(\gamma)-1}$.
	If the position was $0^m$ or $\phi(\gamma)$ before, it is $\phi(\gamma)$ afterwards.
	The automaton skips a position if it goes through a sequence of $m$ transitions labelled $\neutral$.
	We say that the automaton writes $win$ (resp. reads $win$) on a position if it goes through $m$ transitions labelled $\good$ (resp. $\bad$). 
	
		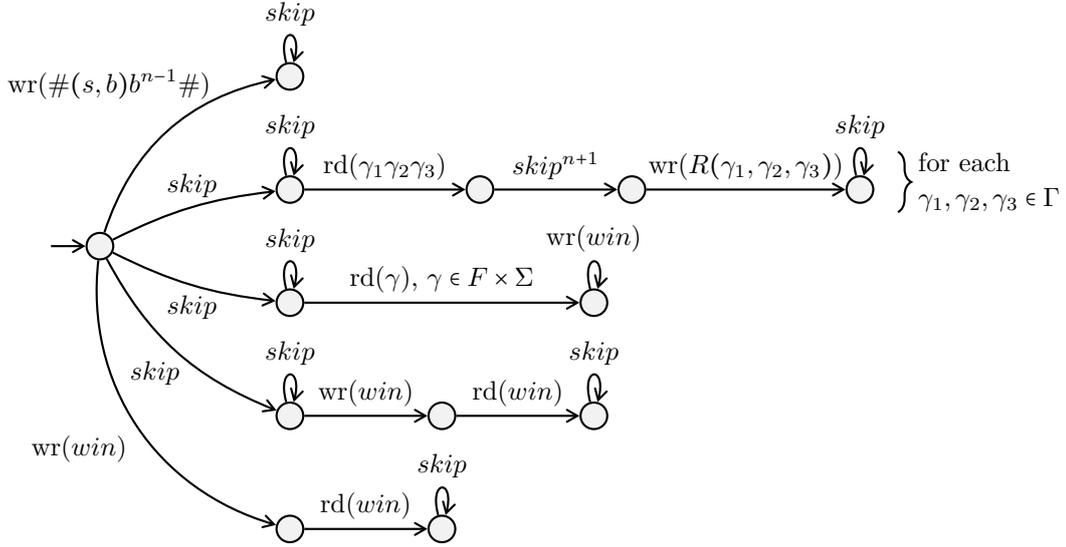
\begin{figure}[ht]
		\begin{tikzpicture}[AUT style]
	

	\node[state, initial] (A) at (0.5,1.75) {};

	\node[state] (init1) at (3, 4)  {};
	
	\draw[->, loop above] (init1) edge node {$skip$} ();
	\draw[->, bend left] (A) edge node[above left, xshift =7mm, yshift=3mm] {wr($\# (s,b) b^{n-1} \#$)} (init1);
	
	\node[state] (trans1) at (3, 2.5)  {};
	\node[state] (trans2) at (5.5, 2.5)  {};
	\node[state] (trans3) at (7.5, 2.5)  {};
	\node[state] (trans4) at (10.5, 2.5)  {};
	
	\draw[->, bend left=10] (A) edge node[above] {$skip$} (trans1);
	\draw[->, loop above] (trans1) edge node {$skip$} ();
	\draw[->] (trans1) edge node[above] {rd($\gamma_1\gamma_2\gamma_3$)} (trans2);
	\draw[->] (trans2) edge node[above] {$skip^{n+1}$} (trans3);
	\draw[->] (trans3) edge node[above] {wr($R(\gamma_1,\gamma_2,\gamma_3)$)} (trans4);
	\draw[->, loop above] (trans4) edge node {$skip$} ();

	\draw [decorate,decoration={brace,amplitude=5pt,mirror}]
	(11,2.2) -- (11,3) node[midway,yshift=-3em]{};
	\node[align=left] (T) at (12.2,2.6) {for each\\ $\gamma_1,\gamma_2,\gamma_3 \in \Gamma$};
	
	\node[state] (win1) at (3, 1)  {};
	\node[state] (win2) at (7, 1)  {};
	
	\draw[->, loop above] (win1) edge node {$skip$} ();
	\draw[->, bend right=10] (A) edge node[below] {$skip$} (win1);
	\draw[->] (win1) edge node[above] {rd($\gamma$), $\gamma \in F \times \Sigma$} (win2);
	\draw[->, loop above] (win2) edge node {wr($win$)} ();
	
	\node[state] (prop1) at (3, -0.5)  {};
	\node[state] (prop2) at (5, -0.5)  {};
	\node[state] (prop3) at (7, -0.5)  {};
	
	\draw[->, loop above] (prop1) edge node {$skip$} ();
	\draw[->, bend right=20] (A) edge node[below left, xshift=5pt] {$skip$} (prop1);
	\draw[->] (prop1) edge node[above] {wr($win$)} (prop2);
	\draw[->] (prop2) edge node[above] {rd($win$)} (prop3);
	\draw[->, loop above] (prop3) edge node {$skip$} ();
	
	\node[state] (tech1) at (3, -2)  {};
	\node[state] (tech2) at (5, -2)  {};
	
	\draw[->, bend right = 40] (A) edge node[below left] {wr($win$)} (tech1);
	\draw[->] (tech1) edge node[above] {rd($win$)} (tech2);
	\draw[->, loop above] (tech2) edge node {$skip$} ();


\end{tikzpicture}
		\caption{The machine for the \PSPACE-hardness reduction. Here $wr$ and $rd$ describe writing and reading actions, while $skip$ (resp. $skip^{n+1}$) stands for a sequence of $m$ $\neutral$ transitions (resp. $(n+1)m$).}
		\label{fig:PSPACE}
	\end{figure}
	
	The automaton can do four things, illustrated in Figure~\ref{fig:PSPACE}:
	\begin{itemize}
		\item It can write $\# (s,b) b^{n-1} \#$ (first branch in the figure)
		
		\item For each $\gamma_1, \gamma_2, \gamma_3 \in \Gamma$, it can read $\gamma_1 \gamma_2 \gamma_3$ at some point in the word, skip $n+1$ positions and write $R(\gamma_1, \gamma_2, \gamma_3)$ (second branch in the figure). 
		
		\item It can read a letter of $F \times \Sigma$ and write $win$ on all following positions (third branch in the figure).
		
		\item It can write $win$ and read $win$ on the next position (two last branches in the figure: the fifth branch lets us apply this on the first position and the fourth branch on the others positions).  
	\end{itemize}

	While the last two items are not applied, all that we can do is apply the first and second items.
	It is clear from the construction that the resulting wins word stays well-shaped.
	Further, the resulting word will always be less or equal to $w$ for the $\preceq$ partial ordering: this results from the determinism of the machine: for every position there is only one letter that we can write on it.
	
	If $w$ contains a letter of $F \times \Sigma$ at some position, we will eventually write it, and apply the third item to set all following bits to $1$, and the last item repeatedly for the rest of the bits.
	
	Otherwise, we will only obtain wins words with $\phi(\#)$ on their first position, and thus some of the $m$ first bits will stay $0$ forever.  
	Hence there is a winning strategy if and only if the run of the Turing machine reaches a final state.
\end{proof}

\end{document}